\numberwithin{equation}{section}
\def\cb{{\mathcal B}}
\def\cd{{\mathcal D}}
\def\cf{{\mathcal F}}
\def\ch{{\mathcal H}}
\def\car{{\mathcal R}}
\def\cw{{\mathcal W}}
\def\bc{{\mathbb C}}
\def\bn{{\mathbb N}}
\def\br{{\mathbb R}}
\def\bt{{\mathbb T}}
\def\bz{{\mathbb Z}}
\def\ga{{\mathfrak A}} 
\def\gb{{\mathfrak B}}
 \def\gph{{\mathfrak h}}
\def\gw{{\mathfrak W}}
\def\a{\alpha}
\def\b{\beta}
\def\g{\gamma}        \def\G{\Gamma}
\def\d{\delta}        \def\D{\Delta}
\def\eps{\varepsilon}
\def\th{\vartheta}
\def\l{\lambda}       \def\La{\Lambda}
\def\m{\mu}
\def\n{\nu}
\def\r{\rho} \def\R{\mathop{\rm P}}
\def\s{\sigma}        \def\S{\Sigma}
\def\t{\tau}
\def\f{\varphi} \def\F{\Phi}
\def\om{\omega}        \def\Om{\Omega}
\newtheorem{Thm}{Theorem}[section]
\newtheorem{Prop}[Thm]{Proposition}
\newtheorem{Lemma}[Thm]{Lemma}
\theoremstyle{definition}
\newtheorem{Dfn}[Thm]{Definition}
\newtheorem{Rem}[Thm]{Remark} 
\theoremstyle{remark}
\def\di{\mathop{\textrm d}\!}
\def\im{\mathop{\rm Im}}
\def\spr{\mathop{\rm spr}}
\newcommand{\ssv}{\text{span}\,}
\newcommand{\tr}{\textrm{Tr}}
\newcommand{\comb}{\dashv}
\def\idd{{\bf 1}\!\!{\rm I}}
\def\spr{\mathop{\rm spr}}
\def\supp{\mathop{\rm supp}}
\newcommand{\nn}{\nonumber}
\begin{document}

\title[harmonic analysis and Bose--Einstein condensation]
{harmonic analysis on inhomogeneous amenable networks and the Bose--Einstein condensation}
\author{Francesco Fidaleo}
\address{Dipartimento di Matematica,
Universit\`{a} di Roma Tor Vergata, 
Via della Ricerca Scientifica 1, Roma 00133, Italy} 

\email{fidaleo@mat.uniroma2.it}

\subjclass[2000]{05C10, 42C99, 46L05, 46L30, 47N50, 82B20.}
\keywords{Bose--Einstein condensation, amenable graphs, Eigenvalue problems.}

\date{\today}

\begin{abstract}
We study in detail relevant spectral properties of the adjacency matrix of inhomogeneous amenable networks, and in particular those arising by negligible additive perturbations of periodic lattices. The obtained results are deeply connected to the systematic investigation of the 
Bose--Einstein condensation for the so called Pure Hopping model describing the thermodynamics of Bardeen--Cooper pairs of Bosons in arrays of Josephson junctions. After a careful investigation of the infinite volume limits of the finite volume adjacency matrix corresponding to the (opposite of the) Hamiltonian of the system, the main results can be summarised as follows. First, the appearance of the Hidden Spectrum for the Integrated Density of the States in the region close to the bottom of the Hamiltonian, implies that the critical density is always finite. Second, we show that the Bose--Einstein condensation can appear if and only if the adjacency matrix is transient, and not just when the critical density is finite. We can then exhibit examples of networks for which condensation effects can appear in a natural way even if the critical density is infinite and vice--versa, that is when the critical density is finite but the system does not admit any locally normal state exhibiting condensation. Contrarily to the known homogeneous examples, we also exhibit networks whose geometrical dimension is less than 3, for which the condensation takes place. Due to non homogeneity, particles may condensate even in configuration space. The shape of the ground state wave--function (i.e. the Perron--Frobenius weight) provides the spatial distribution of the condensate. Such a spatial distribution of the condensate is described by the Perron--Frobenius dimension defined in a natural way. For systems for which the critical density is finite and the adjacency matrix is transient, we show that, if the Perron--Frobenius dimension is greater that the geometrical one, we can have condensation only if the mean density of the state is infinite. Conversely, in the opposite situation when the geometrical dimension exceeds the Perron--Frobenius one, the condensation appears only for states with mean density coinciding with the critical one, that is the amount of the condensate is negligible with respect to the amount of the whole particles. All those states are KMS ones with respect to the natural dynamics generated by the formal Pure Hopping Hamiltonian. The existence of such a dynamics, which is a delicate issue, is provided in detail
\end{abstract}

\maketitle

\tableofcontents

\section{Introduction}

The present paper is devoted to the detailed analysis of some relevant spectral properties of the adjacency matrix (simply denoted by the Adjacency) of a wide class of amenable inhomogeneous graphs. The obtained results are applied to the systematic investigation of the  
Bose--Einstein Condensation (BEC for short) for the so--called Pure Hopping model. These spectral results of mathematical nature have also a self--containing interest.

The investigation of the BEC has a very long history after discovering of a new statistics by Satyendra Nath Bose and Albert Einstein at the beginning of 20th century. It concerns elementary particles obeying the Bose--Einstein statistics, that is having integer spin. Roughly speaking, it means that a macroscopic amount of particles can occupy the ground state after the thermodynamic limit. We mention the ideal model consisting of a gas of free massive Bosons (see the volume II of \cite{BR2} for a detailed analysis and for the huge literature cited therein). Also many phenomena involving quasi--particles like phonons or magnons can be described by the BEC. Recently, in \cite{KSVW} the condensation of massless particles like photons has been pointed out. The reader is referred to the very huge existing literature for details on the BEC. Also the phenomenon of the superfluidity of the helium isotope $He_4$, whose explanation is due to Lev Landau (cf. \cite{L}), seems to be tightly but not directly connected with the BEC. It is well known that Fermions (i.e. quantum particles of half--integer spin) does not lead to any condensation by the Pauli exclusion principle. Nevertheless, the other isotope helium $He_3$ still exhibits superfluidity at a temperature very close to $0^o$ Kelvin, even if these are Fermi particles. The superfluidity of $He_3$ can be justified with the fact that at very low temperature, $He_3$--particles form the so--called sea of pair--particles which can be considered as Bosons, that is also pairs of $He_3$ particles can enjoy the properties of a Bose fluid. According to the BCS theory (cf. \cite{BC}), it is precisely the same phenomenon occurring in superconductors where pairs of electrons forms the so called Bardeen--Cooper Bosons. Thus, the phenomenon of superconductivity is also connected with the BEC.

Concerning the superconductivity, recently in the paper \cite{BCRSV}, it was shown the surprising fact that the critical density describing the condensation of Bardeen--Cooper Bosons for the Pure Hopping model can be finite also for low dimensional networks like the Comb and the Star graph.

The model describes arrays of Josephson junctions in a sea of Bardeen--Cooper pairs. It might be described after some reasonable approximations by the so--called {\it Pure Hopping model} whose multi--particle Hamiltonian is given by
\begin{equation}
\label{boha1}
H_{PH}=-J\sum_{i,j}A_{i,j}a^{\dagger}_ia_j\,.
\end{equation}
Here, $J$ is a coupling constant to be determined experimentally, and $A$ is the Adjacency of the network under consideration. On the other hand, some promising experiments (cf. \cite{Lo} and references cited therein) on the 
the Comb and Star Graphs have been done, pointing out an enhanced current at low
temperature which could be explained by condensation phenomena.

As the Hamiltonian \eqref{boha1} is quadratic, it can be diagonalised. This simply means (cf. \cite{BR2}) that one can directly reduce the matter to the one--particle Hamiltonian which, after putting $J=1$ and normalising such that it is positive with $0$ as the ground state energy level, is simply written as
 \begin{equation}
\label{boha2}
 H=\|A\|\idd-A\,.
 \end{equation}
The Hamiltonian \eqref{boha2} leads to the investigation of a pure topological model on the graph whose main ingredient is the Adjacency. 

The study of the BEC is strongly connected with the investigation of relevant spectral properties of Hamiltonian. The particular form of the Pure Hopping one--particle Hamiltonian reduces the matter to the investigation of spectral properties of well known mathematical objects which have a self--containing interest. The bridge between mathematics and physics can be easily explained as follows. Consider the Bose--Gibbs occupation number (cf. \cite{LL})
at the energy $\eps$, inverse temperature $\b>0$, and chemical potential $\m<0$, given by
 \begin{equation}
 \label{bgokn}
n(\eps)=\frac1{e^{\b(\eps-\m)}-1}\,.
 \end{equation}
It naturally involves the operator $\big(e^{\b(H-\m \idd)}-\idd\big)^{-1}$ acting on the one--particle Hilbert space, that 
for low energies of the Hamiltonian \eqref{boha2}, and after using Taylor expansion, one heuristically gets
\begin{equation}
\label{euro}
\frac1{e^{\b(H-\m \idd)}-\idd}\approx\frac1{\b(H-\m\idd)}=\frac1{\b((\|A\|-\m)\idd-A)}
\equiv \frac1{\b}R_A(\|A\|-\m)\,.
\end{equation}
It is well known (see e.g. Section 5.1 of \cite{BR2}) that the BEC is connected with the spectral properties of the Hamiltonian for values of the energies close to ground one. Then for the Pure Hopping model, the study of the BEC is reduced to the investigation of the spectral properties of the Resolvent
$R_A(\l)$, for $\l\approx\|A\|$, the latter being a very familiar object for mathematicians. 

After reducing the matter to the Adjacency by using \eqref{euro}, it emerges for various amenable and not amenable models, that the condensation phenomena are deeply related to the results of mathematical nature listed below.
\begin{itemize}
\item[(i)] The part of the spectrum close to the bottom of the one--particle Hamiltonian, corresponding to the part of the spectrum near the norm of the Adjacency, might not contribute to the integrated density of the states. This property is called the (appearance of the) Hidden Spectrum. 
\end{itemize} 
Hidden Spectrum automatically leads to the finiteness of the critical density of the model.
We show that for most of the model under consideration, it is possible to determine whether the Hidden Spectrum appears by solving a unique equation (the Secular Equation) whose unknown is the norm of the Adjacency of the perturbed graph.
\begin{itemize}
\item[(ii)] The existence of locally normal states exhibiting BEC is determined by the transiency/recurrence character of the Adjacency, and not just by the finiteness of the critical density.
\end{itemize}
The locally normal states (i.e. those whose the local density of the particles is finite) exhibiting BEC are explicitly constructed when the Adjacency is transient, by fixing the portion of the condensate.
We then can exhibit examples of networks for which the critical density is infinite but exhibiting BEC, and vice--versa.
\begin{itemize}
\item[(iii)]
The sequence of the Perron--Frobenius (PF for short) eigenvalues suitably normalised, of the Adjacency of the finite volume theories, 
converges to a unique PF weight for the  Adjacency, which we can describe explicitly. 
\end{itemize}
The "shape", quantitatively described by the so called PF dimension, takes into account of the distribution 
of the condensate on the space of configurations. It is not uniformly distributed on the graph due to inhomogeneity. 

As a consequence of all these results, we also prove that if the Adjacency of the graph is recurrent (transient and the geometrical dimension exceeds the PF one), no locally normal states exhibiting condensation can be constructed at all (can be constructed for a mean density of the particle greater than the critical one, respectively). We also establish another unexpected fact: it is possible to exhibit locally normal states describing condensation, even if the geometrical dimension is less then 3. The unexpected emerging results
are summarised  in the following table.

\vskip1cm

\begin{tabular}{||ll|c|c|c|c|c|c|c||}
\hline
&& \!\!\!  $\r_c$\!\!\!  & \ \ \!\!\!\!\!\!\!  R/T\!\!\!\!\!   \ \  & \ \ \!\!\! $d_G$\!\!\!  \ \ & \ \ \!\!\!\!\!\!  $d_{PF}$\!\!\!\!\!\!   \ \ & \ \ \!\!\!  0--BEC\!\!\!   \ \ &  \ \ \!\!\!\!\!\!  $\r$--BEC \!\!\! \!\!\!   \ \ & \ \ \!\!\! \!\!\! $\infty$--BEC\!\!\! \!\!\!   \ \ \\
\hline

${\bz}^d, d<3$ &&$\infty$  &R &$d$&$d$&no&no&no\\ \hline 			
${\bz}^d, d\geq3$ &&$<\infty$  &T &$d$&$d$&no&yes&no\\ \hline 
star graph&&$<\infty$  &R &$1$&$0$&no&no&no\\ \hline 
$\bz^d\dashv\bz, d<3$\!\!\!\!\!\!  &&$<\infty$  &R &$d+1$&$d$&no&no&no\\ \hline 
$\bz^d\dashv\bz, d\geq3$\!\!\!\!\!\!  &&$<\infty$  &T &$d+1$&$d$&yes&no&no\\ \hline 
$\bn$&&$\infty$  &T &$1$&$3$&no&no&yes\\ \hline 
$\bn\dashv\bz$&&$<\infty$  &T &$2$&$3$&no&no&yes\\ \hline
$\bn\dashv\bz^2$&&$<\infty$  &T &$3$&$3$&no&yes&no\\ \hline 
\end{tabular}

\vskip1cm
Here, $G\dashv H$ is the comb shaped graph whose base graph is $G$, $\r_c$ is the critical density, R/T denotes the recurrence/transience character of the Adjacency, 0--BEC, $\r$--BEC and $\infty$--BEC 
denote the existence of locally normal states 
exhibiting BEC at mean density $\r=\r_c$,   $\r\in(\r_c,+\infty)$, and finally 
$\r=+\infty$, respectively.

The new and very surprising phenomena can better explained as follows. First, Hidden Spectrum always implies
the finiteness of the critical density.
This leads to the fact that BEC can appear also in low dimensional cases $d<3$. In addition, not for all the models with finite critical density, it is possible to exhibit locally normal states describing BEC. It depends on the transience/recurrence character of the Adjacency: the model can exhibit states with BEC if and only it is transient, even if the critical density is infinite. On the other hand, there are recurrent models for which the critical density is finite without exhibiting any locally normal states describing condensation.

Another new aspect is the difference between the geometrical dimension of the network and the growth of the wave function of the ground state. The last fact can be explained for amenable cases by introducing the Perron--Frobenius dimension. 
Fix a graph $G$ equipped with an exhaustion $\{\La_n\}_{n\in\bn}$ together with the PF weight $v$ for the adjacency 
$A$ obtained as infinite volume limit of the sequence of the finite volume PF eigenvectors normalised at 1 on a fixed root.\footnote{By passing to a subsequence, it is easily shown that the sequence of finite volume PF eigenvectors as above, converges by compactness to a PF weight, see Proposition 4.1 of \cite{FGI1}. In all the situations considered in the present paper, the sequence of finite volume PF eigenvectors for the chosen exhaustion converges point--wise to a PF weight without passing to subsequences, as we will see below.}
The geometrical dimension $d_{G}(G)$ of the network $G$ is defined to be $a$ if
$|\La_{n}|\sim n^{a}$. The Perron--Frobenius dimension
$d_{PF}(G)$ of $G$ is defined to be $b$ if 
$\left\|v\lceil_{\ell^2(\La_n)}\right\|\sim n^{b/2}$. For homogeneous cases (i.e. when the valence is constant), we have $d_G=d_{PF}$ for most the relevant models.
The possible difference between the geometry of the graph $G$
and the 
$\ell^2$--behavior of the norm $\left\|v\lceil_\La\right\|^2$ of the ground state wave function $v$, encoded in the geometrical and PF dimensions, respectively, gives rise to the following facts in the infinite volume limit $\La\uparrow G$. Fix a network $G$ with $A_G$ transient and $\r_c<+\infty$, and consider the Pure Hopping model on it.
If we start by fixing a priori the amount of the condensate 
by a careful choice of the sequence of the finite volume chemical potentials $\m(\La_n)\to0$, we obtain for the spatial density of such an amount of the condensate:
$$
C(\La)\approx D\frac{\left\|v\lceil_\La\right\|^2}{|\La|}\,,
$$
where $D>0$ is fixed. If 
$d_{PF}>d_G$ then $C(\La)\to+\infty$ which implies $\r(\om)=+\infty$, and if $d_{PF}<d_G$ then $C(\La)\to0$ which leads to $\r(\om)=\r_c$. Conversely, if we fix the density
$\r>\r_c$, we show that the amount of the condensate is now given by
$$
C(\La)\approx(\r-\r_c)\frac{|\La|}{\left\|v\lceil_\La\right\|^2}\,.
$$
The latter means that, if $d_G>d_{PF}$ then $C(\La)$ would become infinite and this has as a consequence that the diagonal part of the two--point function of the quasi--free state diverges. Namely, no locally normal states exhibiting BEC can be constructed as infinite volume limit of finite volume states with the constrain $\r>\r_c$. If conversely 
$d_G<d_{PF}$, then $C(\La)\to0$, giving for the limiting density again $\r(\om)=\r_c$, even if the finite volume densities are all constant, bigger than the critical one by construction.
Due to non homogeneity, particles condensate also in the configuration space. Thus, the above considerations are nothing but the naive explanation of the fact that, after cooling below the critical temperature, the  
system undergoes a "dimensional transition" (proven in details below) governed by the possible difference between the geometrical and PF dimensions, naturally appearing in inhomogeneous cases. 

Some of the results listed above are proved in \cite{FGI1} for a class of amenable comb graphs, and in
\cite{F} for perturbations of Cayley Trees, that is in non amenable situations where the boundary effects cannot be neglected in the infinite volume limit. The present paper is mainly devoted to prove systematically, the  relevant spectral results relative to the Adjacency listed above, and then to investigate in the full generality the thermodynamics of the Pure Hopping models for the networks listed in the table above, relatively to the non trivial cases $\bn$, $\bn\dashv\bz^d$, and $\bz^d\dashv\bz$.\footnote{The homogeneous cases $\bz^d$ is extensively treated in literature, whereas the recurrent examples for which the PF weight is normalisable (i.e. a PF eigenvector, necessarily unique up a phase factor) are trivial.}
For those models, we explicitly prove the existence of locally normal states exhibiting BEC, provided that the Adjacency is transient, independently on the finiteness of the critical density. In addition, we prove that such states satisfy the Kubo--Martin--Schwinger (KMS for short) boundary condition with respect to the natural dynamics associated to the
formal Pure Hopping Hamiltonian \eqref{boha1}. The existence of such a dynamics is a delicate issue which is provided in detail. 

The present paper is organised as follows. After recalling the standard definitions and the main properties of a graph and its Adjacency, Section \ref{sec:prel} collects some  relevant results relative to zero--density perturbation graphs. Among those, we list the formulae for the perturbed Adjacency, known as the Krein Formula, and for
the Laplace transform of its Integrated Density of the States. For the reader's convenience, we include a section (cf. Section \ref{sec:stmec}) devoted to some results involving the statistical mechanics on graphs. Those include some of general nature, and others concerning the Pure Hopping models and its particularisation to density--zero perturbations. Section \ref{sec:gen} concerns some general facts relative to the Pure Hopping model. Among them we mention those relative to the PF dimension, and the Secular Equation allowing to compute the norm of the perturbed Adjacency in order to decide whether the Hidden Spectrum appears. The main result of general nature is that the Pure Hopping model can exhibit BEC if and only if the Adjacency is transient. It is also show that the formal Hamiltonian \eqref{boha1} generates a dynamics on a Canonical Commutation Relations $C^*$--algebra containing all the Weyl unitaries $\{W(\d_x)\mid x\in G\}$ and globally stable for the time evolution, such that all such states exhibiting BEC satisfy the KMS boundary condition. Section \ref{sec:comb} deals with the so called comb product networks for recurrent and transient situations. We also compute the PF weights and the corresponding PF dimensions, even for situations not exhibiting Hidden Spectrum, the last being much more complicate to manage. As an intermediate result, we compute the asymptotic of the finite volume $\ell^2$--norms of 
$$
f({\bf k})=\langle R_{A_{\bz^d}}(2d)\d_{\bf k},\d_{\bf0}\rangle\,,
$$ 
for the non trivial transient cases $d=3,4$. This Tauberian result, probably known to the experts, may have an interest in itself. Finally, Sections \ref{sec:en}, \ref{snd},
\ref{sec:ze} deal with the graph $\bn$, and the comb graphs $\bn\comb\bz^d$, $\bz^d\comb\bz$, respectively. We investigate in details the needed spectral properties of the corresponding Adjacencies, and apply the results to the BEC. For sequences $\{\m(\La_n)\mid\La_n\subset G\}$ of finite volume chemical potentials, we cover all the situations corresponding the condensation regime $\lim_n\m(\La_n)=0$, including that corresponding to fixing the amount of the condensate, and the usual one (not suitable for the inhomogeneous networks under consideration) obtained by fixing the mean density $\r\geq\r_c$ of the system. The case $\r<\r_c$, being straightforward (cf. Section 5.2.5 of \cite{BR2}), is left to the reader.

\section{Prelimiaries}
\label{sec:prel}

A {\it graph} (called also a {\it network}) $X=(VX,EX)$ is a collection $VX$ of objects,
 called {\it vertices} and denoted by $x\in VX$, together with a collection $EX$ of unordered lines connecting vertices, 
 called {\it edges} and denoted by $e_{xy}$. Multiple edges, as well as self--loops are allowed. 
 Let $E_{xy}:=\{e_{xy}\mid x\sim y\}$ be the collection of all the edges connecting $x$ with $y$. We have $E_{xy}=E_{yx}$. The 
 {\it degree} of $x\in X$ is defined as
$$
\deg(x):=|\{E_{xy}\mid y\in VX\}|\,.
$$
Let us denote by $A=[A_{xy}]_{x,y\in X}$, $x,y\in VX$, the {\it adjacency
 matrix} of $X$, called simply {\it Adjacency}, and given by
 $$
 A_{xy}:=|E_{xy}|\,.
 $$ 
All the geometric properties of $X$ are encoded in $A$. For example, $X$ is connected if and only if $A$ is irreducible. Setting
 $$
 \deg:=\sup_{x\in VX} \deg(x)\,,\quad D_{xy}:=\deg(x)\d_{x,y}\,,
 $$
we have
 $\sqrt{\deg}\leq\|A\|\leq \deg$, that is $A$ is bounded if and only if $X$
has uniformly bounded degree. Denoting $D=[D_{xy}]_{x,y\in X}$ the {\it degree matrix}, 
the discrete {\it Laplacian} is defined as
$$
\D:=A-D\,,
$$
with the convention $\D\leq0$ standardly used in physical literature. Then the Pure Hopping Hamiltonian can be viewed as a 
Schr\"odinger operator $H=-\D+V$, where $V$ is the multiplication operator on $\ell^2(VX)$ for the function
$$
V(x)=\|A\|-\deg(x)\geq0\,.
$$
The Laplacian considerably differs from the Adjacency for inhomogeneous networks, as those considered in the present paper.

For connected networks (or on each connected component) we can define the standard distance
\begin{equation*}
d(x,y):=\{\min\ell(\pi(x,y))\mid\,\pi(x,y)\,\text{path connecting}\,x,y\}\,,
\end{equation*}
$\ell(\pi)$ being the length of the path $\pi$ (i.e. the number of the edges in $\pi$).
In the present paper, all the graphs are connected, countable and with uniformly bounded
degree. In addition, we deal only with bounded operators acting on $\ell^2(VX)$ if it is not otherwise specified.

Let $B$ be a closed (not necessarily bounded) operator acting on a Banach space, and 
$\l\in{\rm P(B)}\subset\bc$ the resolvent set of $B$. As usual,
$$
R_B(\l):=(\l\idd-B)^{-1}
$$
denotes the {\it Resolvent} of $B$.

Fix a bounded matrix with positive entries $B$ acting on $\ell^2(VX)$. Such an operator is called {\it positive preserving} as it preserves the elements of $\ell^2(VX)$ with positive entries. If $B$ is self--adjoint, $B$ is {\it positive} if $\langle Bu,u\rangle\geq0$ for each $u\in\ell^2(VX)$. Examples of self--adjoint operators which are positive--preserving but not positive and vice--versa are the Adjacency and the (opposite of the) Laplacian, respectively.

Fix a positive preserving operator $B$ acting on $\ell^2(VX)$. The sequence $\{v(x)\}_{x\in VX}$ is called a (generalized) {\it Perron--Frobenius eigenvector} if it has positive entries and
 $$
 \sum_{y\in VX}B_{xy}v(y)=\spr(B)v(x)\,,\quad x\in VX\,.
 $$
where "spr" stands for spectral radius. 
If such a vector is normalizable (i.e. if it belongs to $\ell^2(VX)$) it is a standard $\ell^2(VX)$--vector, otherwise it is a weight, simply denoted as a PF weight.
\begin{Dfn}
For a graph $X$, denote by $\cw_X$ the set of the 
PF weight $w$ for $A_X$ which can be obtained as point--wise limit of some sequence of PF eigenvectors $\{w_n\}_{n\in\bn}$ associated to some exhaustion $\{X_n\}_{n\in\bn}$ of $X$, normalised to $1$ on a common root 
$o\in X_n$, $n\in\bn$.
\end{Dfn}
Suppose for simplicity that $B$ is irreducible and self--adjoint. It is said to be {\it recurrent} if
\begin{equation}
\label{caz}
\lim_{\l\downarrow\|B\|}\langle R_B(\l)\d_x,\d_x\rangle=+\infty\,.
\end{equation}
Otherwise $B$ is said to be {\it transient}.
It is shown in \cite{S}, Section 6, that the recurrence/transience character of $B$ does not depend on the base--point chosen in computing the limit in \eqref{caz}.
The PF eigenvector is unique up to a multiplicative constant, if $X$ is finite or when $B$ is recurrent, see e.g. \cite{S}. In general, it is not unique, see e.g. Section 3 of \cite{F2} for an example.

We say that an operator $B$ acting on $\ell^2(VX)$ has {\it finite
propagation} if there exists a constant $r=r(B)>0$ such that, for any
$x\in X$, the support of $B\d_x$ is contained in the ball
$$
B(x,r):=\{y\in VG\mid d(x,y)\leq r\}\,,
$$ 
centered in $x$ and with radius $r$. It is easy to show that for the adjacency operator $A$ on $X$, then $A^k$ has propagation $k$ for any integer $k\geq0$.

Let $X$ be an infinitely extended graph with an exhaustion $\{\La_n\}_{n\in\bn}$ which is kept fixed during the analysis. Denote
$P_n$ the orthogonal projection in $\cb(\ell^2(VX)$ associated to the finite region $V\La_n$. We report
the definition of the 
{\it the integrated density of the states} 
of a bounded self--adjoint operator $B\in\cb(\ell^2(VX))$
given in \cite{F1}.
Indeed, consider on 
$\cb(\ell^2(VX))$ the state
$$
\t_n:=\frac1{|V\La_n|}\tr_n(P_n\,{\bf\cdot}\,P_n)\,.
$$
Define for a bounded self--adjoint operator $B$,
\begin{equation}
\label{3434}
\t^B(f(B)):=\lim_n\t_n(f(P_nBP_n))\,,\quad f\in C(\s(B))\,,
\end{equation}
provided such a limit exists. The domain $\cd_{\t^B}\subset C^*(B)$ is precisely the linear subspace of the unital $C^*$--algebra 
$C^*(B)\subset\cb(\ell^2(VX))$ generated by $B$, for which
the limit in \eqref{3434} exists. Notice that the definition of $\t^B$ might depend on the chosen exhaustion $\{\La_n\}_{n\in\bn}$. As the exhaustion is always kept fixed, we omit to indicate such a dependence.
Suppose now that $\cd_{\t^B}=C^*(B)$. Then it provides a state on $C^*(B)$ and, by the Riesz--Markov Theorem, a  Borel probability measure 
$\m_B$ on $\s(B)$, the spectrum of $B$. Thus, there exists a unique right continuous, increasing, positive function $F_B$ satisfying
$$
F_B(x)=0\,,\,\,\,x<\min\s(B)\,;\quad F_B(x)=1\,,\,\,\,x\geq\max\s(B)\,,
$$
such that
$$
\m_B((-\infty,x])=F_B(x)\,,\quad x\in\br\,.
$$
The previous described cumulative function $F_B$ is precisely the integrated density of the states (IDS for short) associated to $B$, provided it exists for the chosen exhaustion. When the graph is amenable and the operator $B$ has finite propagation, the definition and some of the main facts relative to the IDS considerably simplify as the boundary effects play no role in the infinite volume limit, see Theorem 2.1 of \cite{F1}.

Consider the graph $Y$ such that $VY=VX$, both equipped with exhaustions
$\{X_n\}_{n\in\bn}$, $\{Y_n\}_{n\in\bn}$ such that $VY_n=VX_n$,
$n\in\bn$.  If $|\l|$ is sufficiently large, then it is possible to express the Resolvent of $Y$ in terms of the Resolvent of $X$ by the 
{\it Krein Resolvent Formula}. Indeed, fix $X$ as the reference graph and put $A_Y:=A_X+D$,
where $D$ is the perturbation, which acts
on $\overline{\car(D)}$. Put, for $\l\in\bc$,
\begin{equation} 
 \label{s2s}
S(\l):= DPR_{A_X}(\l)\lceil_{\overline{\car(D)}}
\end{equation}
acting on $\overline{\car(D)}$, where $P:=P_{\overline{\car(D)}}$ is the orthogonal projection onto 
$\overline{\car(D)}\subset \ell^2(VX)$. The Krein formula assumes the form
 \begin{equation} 
 \label{k2s}
R_{A_Y}(\l)=R_{A_X}(\l)+R_{A_X}(\l)(P-S(\l))^{-1}DPR_{A_X}(\l)\,.
\end{equation}
By using Neumann expansion in \eqref{k2s}, we can see that the resolvent formula \eqref{k2s} 
holds true for $|\l|>\|A_X\|+\|D\|$. From now on, we suppose that $Y$ is an additive perturbation of $X$ involving only edges, which is the case under consideration in the present paper. We obviously have $\|A^Y\|\geq\|A^X\|$. The formula \eqref{k2s} extends to any simply connected subset containing the point at infinity of $\bc\cup\{\infty\}$ made of the $\l$ for which $\idd_{P\ell^2(VX)}-S(\l)$ is invertible, the last being a subset of $\text{P}(A)$ by the definition \eqref{s2s} of $S(\l)$. Conversely, 
the norm $\|A_Y\|$ of the perturbed graph $Y$ 
might be computed as
$$
\|A_Y\|=\max\{|\l|\leq\|A_X\|\mid\idd_{P\ell^2(VX)}-S(\l)\,\,\text{is not invertible}\,\}\,.
$$
The graph $Y$ is said to be a {\it negligible} or {\it density zero
perturbation} of $X$ if it differs from $X$ by a number of edges
for which
$$
 \lim_n\frac{|\{e_{xy}\in EX\triangle EY\mid x\in VX_n\}|}{|VX_n|}=0\,,
$$
where $EX\triangle EY$ denotes the symmetric difference. 
Suppose $\cd(\t^{A_X})=C^*(A_X)$, that is $A_X$ admits the IDS. Define
\begin{equation}
\label{ddee}
\d:=\|A_X\|-\|A_Y\|\,.
\end{equation}
Denote
$F_X:=N_{\|A_X\|\idd-A_X}$, $F_Y:=N_{\|A_Y\|\idd-A_Y}$, where $N_B$ is the IDS of the operator $B$. 
The following theorem collects Proposition 1.3 of \cite{F1} and Corollary 2.6 of \cite{F}, which we report for the convenience of the reader.
\begin{Thm} 
\label{density0} 
Let $Y$ be an additive negligible perturbation of the network $X$. Then $\cd(\t^{A_Y})=C^*(A_Y)$, and
$\t^{A_Y}(f(A_Y))=\t^{A_X}(f(A_X))$.
In addition,
\begin{equation*}
F_Y(x)=F_X(x+\d)\,.
\end{equation*}
\end{Thm}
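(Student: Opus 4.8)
The plan is to compare, on \emph{polynomials}, the two normalised finite-volume traces $\t_n$ of the cut-off Adjacencies by a direct trace-ideal estimate, then to bootstrap the resulting identity to all continuous functions via Stone--Weierstrass, and only at the end to read off the shift of the integrated densities of the states from the affine relation between the two one-particle Hamiltonians. Write $A_Y=A_X+D$ with $D:=A_Y-A_X$: as $X$ and $Y$ have uniformly bounded degree, $D$ is a bounded self-adjoint operator, $\|D\|\le\|A_X\|+\|A_Y\|$, and a matrix entry $D_{xy}$ can be non-zero only if some edge between $x$ and $y$ belongs to $EX\triangle EY$; hence $\car(D)\subseteq\ssv\set{\d_x\mid x\text{ is an endpoint of an edge of }EX\triangle EY}$. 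Since the exhaustions are chosen with $VY_n=VX_n$, the cut-off projection $P_n$ is common to both theories; put $B_X:=P_nA_XP_n$, $B_Y:=P_nA_YP_n$ and $\D_n:=B_Y-B_X=P_nDP_n$ on $P_n\ell^2(VX)$, noting $\|B_X\|\le\|A_X\|\le\|A_Y\|$ and $\|B_Y\|\le\|A_Y\|$.

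The heart of the matter is a rank bound. By the description of $\car(D)$, $\operatorname{rank}\D_n\le r_n$, where $r_n$ is the number of vertices of $V\La_n$ that are endpoints of some edge of $EX\triangle EY$; negligibility of the perturbation forces $r_n=o(|V\La_n|)$, hence $\|\D_n\|_1\le r_n\|D\|=o(|V\La_n|)$, where $\|\cdot\|_1$ denotes the trace norm. Using the telescoping identity $B_Y^{\,k}-B_X^{\,k}=\sum_{j=0}^{k-1}B_Y^{\,j}\D_nB_X^{\,k-1-j}$ together with the ideal inequality $\|STU\|_1\le\|S\|\,\|T\|_1\,\|U\|$ one gets $\|B_Y^{\,k}-B_X^{\,k}\|_1\le k\|A_Y\|^{k-1}\|\D_n\|_1$, so that $\|p(B_Y)-p(B_X)\|_1\le C(p,\|A_Y\|)\|\D_n\|_1$ for every polynomial $p$. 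Since $|\tr_n(T)|\le\|T\|_1$, dividing by $|V\La_n|$ gives
$$
\big|\t_n(p(B_Y))-\t_n(p(B_X))\big|\le\frac{C(p,\|A_Y\|)}{|V\La_n|}\,\|\D_n\|_1\ \xrightarrow[\ n\to\infty\ ]{}\ 0 .
$$
As $\cd(\t^{A_X})=C^*(A_X)$ by hypothesis, $\lim_n\t_n(p(B_X))=\t^{A_X}(p(A_X))$ exists, hence so does $\lim_n\t_n(p(B_Y))$, with the same value.

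To pass from polynomials to an arbitrary $f\in C([-\|A_Y\|,\|A_Y\|])$, I approximate $f$ uniformly by polynomials $p_m$ on that interval. Since $\s(B_X),\s(B_Y)\subseteq[-\|A_Y\|,\|A_Y\|]$ and $\t_n$ is a state, the estimates $|\t_n(f(B_Y))-\t_n(p_m(B_Y))|\le\|f-p_m\|_\infty$ and $\|f(A_X)-p_m(A_X)\|\le\|f-p_m\|_\infty$ hold uniformly in $n$, so the two limits can be exchanged: $\lim_n\t_n(f(B_Y))$ exists and equals $\lim_m\t^{A_X}(p_m(A_X))=\t^{A_X}(f(A_X))$. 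Because every element of $C^*(A_Y)$ has the form $h(A_Y)$ with $h$ continuous on $\s(A_Y)\subseteq[-\|A_Y\|,\|A_Y\|]$, this already yields $\cd(\t^{A_Y})=C^*(A_Y)$ and the trace identity $\t^{A_Y}(f(A_Y))=\t^{A_X}(f(A_X))$ of the statement. In particular, if $\m_{A_X},\m_{A_Y}$ are the IDS measures attached by Riesz--Markov to $\t^{A_X},\t^{A_Y}$, viewed as Borel probability measures on $[-\|A_Y\|,\|A_Y\|]$, then $\int f\,\di\m_{A_Y}=\int f\,\di\m_{A_X}$ for all $f\in C([-\|A_Y\|,\|A_Y\|])$, whence $\m_{A_X}=\m_{A_Y}=:\m$.

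It remains to translate this into the displayed relation for the cumulative functions. Writing $\|A_Y\|\idd-A_Y=g(A_Y)$ for the affine map $g(t)=\|A_Y\|-t$, one has $P_n(\|A_Y\|\idd-A_Y)P_n=g(B_Y)$, so $\|A_Y\|\idd-A_Y$ admits the IDS as well (its $C^*$-algebra is again $C^*(A_Y)$) and the latter is the distribution function of the push-forward $g_*\m_{A_Y}$; hence $F_Y(x)=\m\big([\,\|A_Y\|-x,+\infty)\big)$, and likewise $F_X(y)=\m\big([\,\|A_X\|-y,+\infty)\big)$. Substituting $y=x+\d$ and $\|A_Y\|=\|A_X\|-\d$ (with $\d$ as in \eqref{ddee}) gives $F_X(x+\d)=\m\big([\,\|A_X\|-x-\d,+\infty)\big)=\m\big([\,\|A_Y\|-x,+\infty)\big)=F_Y(x)$, as claimed. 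The one genuinely non-routine step is the bound $\|P_nDP_n\|_1=o(|V\La_n|)$: this is precisely where negligibility of the perturbation is used, and, in contrast with other parts of the theory, finite propagation of $A$ plays no role here. Everything downstream of it is standard continuous functional calculus and Riesz--Markov bookkeeping.
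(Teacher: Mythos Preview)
Your argument is correct. The rank estimate $\operatorname{rank}(P_nDP_n)\le r_n$ with $r_n\le 2|\{e_{xy}\in EX\triangle EY\mid x\in VX_n\}|=o(|V\La_n|)$ is exactly the point where negligibility enters, and from there the telescoping trace-norm bound, the Stone--Weierstrass passage, and the push-forward computation for the affine map $t\mapsto\|A_Y\|-t$ are all sound. One small remark: you invoke $\|A_X\|\le\|A_Y\|$ to get a common spectral interval; this is guaranteed by the standing hypothesis (stated just before the theorem) that $Y$ is an \emph{additive} perturbation of $X$, so there is no gap.

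As for comparison: the paper does not give its own proof of this statement but simply records it as a combination of results proved elsewhere (Proposition~1.3 of \cite{F1} for the trace identity and Corollary~2.6 of \cite{F} for the shift $F_Y(x)=F_X(x+\d)$). Your write-up therefore supplies a self-contained argument where the paper only provides citations. The route you take---trace-class perturbation of polynomial functional calculus followed by uniform approximation---is the natural one and is essentially what underlies the cited results; the original references proceed in the same spirit, though the Cayley-tree papers also treat finite-propagation and amenability issues that, as you correctly note, are not needed for this particular statement once the existence of the IDS for $A_X$ is assumed.
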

Suppose that $\|A^Y\|>\|A^X\|$. Then $\d<0$ in \eqref{ddee}. 
Thus,  the part of the spectrum $[\|A_X\|,\|A_Y\|)$ does not contribute to the IDS of $A_Y$. In this case we say that the perturbed Pure Hopping Hamiltonian $\|A_Y\|\idd-A_Y$ has Hidden Spectrum, see Definition \ref{hyspe} below.
The appearance of the Hidden Spectrum is then the combination of two different effects: the perturbation should be of density zero in order not to affect the IDS (Proposition \ref{density0}), but it should be sufficiently large in order to increase the norm of the perturbed Adjacency. Even in non amenable cases where the growth is exponential and the boundary effects cannot be neglected, it might be sufficient a finite perturbation to have Hidden spectrum, see Section 3 of \cite{F}.

\section{Statistical mechanics on infinitely extended networks}
\label{sec:stmec}

Let $(\ga,\a)$ be a dynamical system consisting of a non Abelian $C^*$--algebra, and a one--parameter group of $*$--automorphism $\{\a_{t}\}_{t\in\br}$.
The  state $\f$ on the $C^{*}$--algebra 
$\ga$ satisfies the Kubo--Martin--Schwinger (KMS for short)
boundary condition at inverse temperature $\b\in\br\backslash\{0\}$ w.r.t the group of
automorphisms $\a$ if
\begin{itemize}
\item[(i)] for every $A,B\in\gb$, $t\mapsto\f(A\a_{t}(B))$, $t\mapsto\f(\a_{t}(A)B)$ are continuous;
\item[(ii)] for each $f\in\widehat{\cd}$,
\begin{equation*}
\int\f(A\a_{t}(B))f(t)\di t=\int\f(\a_{t}(B)A)f(t+i\b)\di t\,,
\end{equation*}
where  '' $\widehat{}$ '' stands for the Fourier transform,
 and $\cd$ is the space  of the infinitely often differentiable, compactly supported functions on $\br$.
\end{itemize}
The following facts are well known. First, a KMS is automatically invariant w.r.t. the automorphism group $\a_t$. Second, the cyclic vector 
$\Om_{\f}$ of the Gelfand--Naimark--Segal (GNS for short) 
quadruple $\big(\pi_{\f},\ch_\f,U_\f,\Om_\f\big)$ is also separating for
$\pi_{\f}(\gb)''$. Denote $\s^{\f}$ its modular group (cf. \cite{BR2}).
According to the definition of KMS boundary condition, we have
\begin{equation*}
\s^{\f}_{t}\circ\pi_{\f}=\pi_{\f}\circ\a_{-\b t}\,.
\end{equation*}
We refer the reader to \cite{BR2} and the literature cited therein, for various equivalent formulations of the KMS condition, proofs, details, and applications. 

The $C^*$--algebras considered here are those arising from the {\it Canonical Commutation Relations} (CCR for short). Namely, let $\gph\subset\bar\gph$ be a subspace of the  Hilbert space $\bar\gph$, equipped with the non degenerate inner product 
$\langle\,{\bf\cdot}\,,\,{\bf\cdot}\,\rangle$, supposed to be linear w.r.t. the first variable. Consider the following (formal) relations,
\begin{equation}
\label{cccrr}
a(f)a^{\dagger}(g)-a^{\dagger}(g)a(f)=\langle g,f\rangle\,\quad f,g\in\gph\,.
\end{equation}
It is well known that the relations \eqref{cccrr} cannot be realised by bounded operators acting on any Hilbert space. A standard way to realise the CCR is to look at the {\it symmetric Fock space} $\cf_+(\bar\gph)$ on which the annihilators $a(f)$ and creators $a^{\dagger}(f)$ naturally act as unbounded closed, adjoint each other (i.e. $a(f)^*=a^{\dagger}(f)$) operators. This concrete representation of the CCR is called the {\it Fock representation}. 
An equivalent description for the CCR is to put on $\cf_+(\bar\gph)$,
\begin{equation*}
\Phi(f):=\overline{\frac{a(f)+a^{\dagger}(f)}{\sqrt{2}}}
\end{equation*}
and define the {\it Weyl operators}
$W(f):=\exp{i\Phi(f)}$. The Weyl operators are unitary and satisfy the rules
\begin{align}
\label{cccrr1}
&W(h)^*=W(-h)\,,\quad W(0)=\idd\,,\\
&W(f)W(g)=e^{i\frac{\im(f,g)}{2}}W(f+g)\,,\quad f,g\in\gph\nn\,.
\end{align}
The abstract $C^*$--algebra  $\gw(\gph)$ generated by $\{W(f)\mid f\in\gph\}$ together with relations \eqref{cccrr1}, is simple and is commonly referred as the $C^*$--algebra of the CCR in the Weyl form. We always refer to $\gw(\gph)$ as the CCR algebra on $\gph$.

Let $H$ be a self--adjoint operator acting on $\bar\gph$. Suppose that $e^{itH}\gph\subset\gph$.  
The one--parameter unitary
group $T_tf:=e^{itH}f$ defines a one--parameter group of $*$--automorphisms $\a_t$ of $\gw(\gph)$ by putting $\a_t(W(f)):=W(T_t f)$. The latter is called the one--parameter group of {\it Bogoliubov automorphisms} generated by $T_t$.

A representation $\pi$ of the Weyl algebra $\gw(\gph)$ is {\it regular} if
the unitary group $\l\in\br\mapsto\pi(W(\l f))$ is continuous in the
strong operator topology, for any $f\in\gph$.  A state $\f$ on $\gw(\gph)$
is regular if the associated GNS representation is regular. 

The {\it quasi--free} states on the Weyl algebra are those of interest for our purposes. Such states 
$\om$ are uniquely determined by the two--point functions $\om(a^{\dagger}(f)a(g))$, $f,g\in\gph$.
The expectation value of a quasi--free state on the Weyl unitaries is easily recovered as
$$
\om(W(f))=\exp\{-[\|f\|^2/4+\om(a^{\dagger}(f)a(f))/2]\}\,.
$$
A standard textbook for CCR is \cite{BR2} (cf. Section 5.2) to which the reader is referred for proofs, literature and further details.

Let $G$ be any graph. We denote by $G$ itself the set of vertices $VG$ when this causes no confusion.
Suppose that $\gph$ contains the indicator functions $\{\d_x\mid x\in G\}\subset\ell^2(G)$.  A representation $\pi$ of $\gw(\gph)$ is said to be {\it
locally normal (w.r.t. the Fock representation)} if
$\pi\lceil_{\gw(\ell^2(\La))}$ is quasi--equivalent to the Fock
representation of $\gw(\ell^2(\La))$, $\La\subset G$ being any finite region.  A state on $\gw(\gph)$ is said to be locally
normal if the associated GNS representation is locally normal.  A locally normal
state $\f$ does have finite local density
\begin{equation*} 
\r_{\La}(\f):=\frac{1}{|\La|}
\sum_{j\in\La}\f(a^{\dagger}(\d_{j})a(\d_{j}))
\end{equation*}
even if the mean density  might be infinite (e.g. 
$\limsup_{\La\uparrow G}\r_\La(\f)=+\infty$). 
Let $\La_{n}\uparrow G$ be an {\it exhaustion} of $G$, that is
a sequence of finite regions invading the graph $G$, together with
a sequence of quasi--free states $\{\om_{\La_{n}}\}$ on $\gw(\ell^2(\La_n))$. It is shown in Lemma 3.2 of \cite{FGI1} that if  
\begin{equation*} 
\lim_{n}\om_{\La_{n}}(a^{\dagger}(\d_{j})a(\d_{j}))=+\infty
\end{equation*}
for some $j\in G$, then $\om(W(v)):=\lim_n\om_{\La_{n}}(W(v))$ cannot define any locally normal state on $\gw(\gph)$.

Let $G$ be any graph, equipped with the finite volume exhaustion $\{\La_n\}_{n\in\bn}$ kept fixed during the analysis. 
Fix a general bounded positive Hamiltonian $H\in\cb(\ell^2(G))$ admitting the IDS $N$ w.r.t. the fixed exhaustion $\{\La_n\}_{n\in\bn}$. As before, we normalise $H$ such that $0\in\s(H)$ is the bottom of the spectrum $\s(H)$ of $H$. Consider also the finite volume IDS $N_n$ relative to $H_n:=P_nHP_n$.
Define
\begin{align*}
\eps_{0}(H):=&\lim_{\La_{n}\uparrow G}
 \bigg(\inf\supp\big(N_n)\bigg)=0\,,\\
E_{0}(H):=&\inf\supp\bigg(\lim_{\La_{n}\uparrow G}
N_n\bigg) \equiv\inf\supp\big(N\big)\,.
\end{align*}
Here, the first limit exists by Lemma 3.4 of \cite{FGI1} and is 0 by normalisation. The second one is meaningful directly by the definition of the IDS.
We always get $E_0(H)\geq \eps_0(H)$. We then report the definition of the appearance of Hidden Spectrum, firstly introduced in 
\cite{BCRSV}.
\begin{Dfn} 
\label{hyspe}
For any one--particle Hamiltonian as above, if $\eps_{0}(H)< E_{0}(H)$ we say
that $H$ has the (low energy) {\it Hidden Spectrum}.
\end{Dfn}
The (mean) density of particles at the inverse temperature $\b>0$ and chemical potential $\m<\eps_{0}(K)$ for the Hamiltonian $K$ (where $K$ is either $H$ or its finite volume approximations $P_nHP_n$), is defined as
\begin{equation*} 
\r_{K}(\b,\m):=\int\frac{dN_{K}(h)}{e^{\b(h-\m)}-1}\,.
\end{equation*}
The corresponding {\it critical density} for the infinite volume Hamiltonian $H$ is given by
\begin{equation*} 
\r^{H}_{c}(\b):=\int\frac{dN_{H}(h)}{e^{\b h}-1}\equiv \r_{H}(\b,0)\,,
\end{equation*}
see e.g. \cite{BR2, LL}.\footnote{It is also customary to fix the {\it  activity} $z:=e^{\b\m}$, instead of the chemical potential.}
The following theorem collects the implications of the appearance of Hidden Spectrum, relatively to the critical density.   
\begin{Thm}
For an Hamiltonian $H\geq0$ on a graph $G$ with $0\in\s(H)$ admitting the IDS $N$, Hidden Spectrum implies that the critical density is always finite. 

If $H$ is the Pure Hopping one--particle Hamiltonian of a negligible additive perturbation graph $Y$ of another graph $X$ admitting IDS, then
$$
\r^{H_Y}_{c}(\b)=\r_{H_X}(\b,\d)\leq\r^{H_X}_{c}(\b)\,.
$$
with $\d=\|A_X\|-\|A_Y\|$.
\end{Thm}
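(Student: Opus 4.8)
The plan is to treat the two assertions separately; the first is an elementary bound on the Bose integral, while the second is essentially a bookkeeping consequence of the translation identity for the IDS recorded in Theorem~\ref{density0}. For the first assertion, note that by the standing normalisation $0\in\s(H)$ one has $\eps_0(H)=0$ (the statement quoted above, coming from Lemma~3.4 of \cite{FGI1}), so that the Hidden Spectrum hypothesis $\eps_0(H)<E_0(H)$ is nothing but $E_0(H)=\inf\supp(N)>0$. Since $H$ is bounded, $N$ is a Borel probability measure supported in the compact set $[E_0(H),\|H\|]$, on which $h\mapsto(e^{\b h}-1)^{-1}$ is continuous, positive and decreasing, hence bounded above by $(e^{\b E_0(H)}-1)^{-1}$. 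Integrating this bound against $dN$ and using $\int dN=1$ gives
$$
\r^{H}_c(\b)=\int\frac{dN(h)}{e^{\b h}-1}\leq\frac{1}{e^{\b E_0(H)}-1}<+\infty\,.
$$
No recurrence/transience information enters here: Hidden Spectrum alone moves the bottom of $\supp(N)$ off the origin, the only point where the Bose integral could diverge.

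For the second assertion, set $H_X=\|A_X\|\idd-A_X$ and $H_Y=\|A_Y\|\idd-A_Y$, both normalised with $0$ as bottom of the spectrum, and observe that $\|A_Y\|\geq\|A_X\|$ since the perturbation only adds edges, so $\d=\|A_X\|-\|A_Y\|\leq0$. By Theorem~\ref{density0}, $A_Y$ admits the IDS and $N_{H_Y}(x)=N_{H_X}(x+\d)$. Reading this identity of cumulative functions at the level of the associated Stieltjes measures, it says exactly that $dN_{H_Y}$ is the image of $dN_{H_X}$ under the translation $h\mapsto h-\d$; hence $\int g\,dN_{H_Y}=\int g(h-\d)\,dN_{H_X}(h)$ for every bounded Borel $g$, and then, by monotone approximation, also for the nonnegative Borel function $g(h)=(e^{\b h}-1)^{-1}$ with values in $[0,+\infty]$. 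Applying this with that $g$ yields
$$
\r^{H_Y}_{c}(\b)=\int\frac{dN_{H_Y}(h)}{e^{\b h}-1}=\int\frac{dN_{H_X}(h)}{e^{\b(h-\d)}-1}=\r_{H_X}(\b,\d)\,,
$$
the last equality being the definition of $\r_{H_X}(\b,{\bf\cdot})$ evaluated at $\m=\d$, which is admissible since $\d\leq0=\eps_0(H_X)$ (with the usual convention $\r_{H_X}(\b,0)=\r^{H_X}_c(\b)$). Finally, $\d\leq0$ gives $h-\d\geq h$, whence $e^{\b(h-\d)}-1\geq e^{\b h}-1>0$ for $h>0$, while at $h=0$ the right--hand integrand is already $+\infty$; so the integrand of $\r_{H_X}(\b,\d)$ is dominated pointwise by that of $\r^{H_X}_c(\b)$ on $[0,\|H_X\|]$, and integrating against $dN_{H_X}$ gives $\r_{H_X}(\b,\d)\leq\r^{H_X}_c(\b)$.

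The whole argument is routine granted Theorem~\ref{density0}; the only points deserving some care are the passage from the distributional relation $N_{H_Y}(x)=N_{H_X}(x+\d)$ to the change of variables for the (possibly atomic, and unbounded near $0$) Bose integrand, and keeping track of the sign of $\d$, which simultaneously makes $\m=\d$ an admissible chemical potential for $H_X$ and orients the final monotonicity comparison the right way.
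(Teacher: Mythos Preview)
Your proof is correct and follows essentially the same approach as the paper's own proof, which is extremely terse: for the first part the paper simply notes that Hidden Spectrum makes the integration start at some $\eps>0$, and for the second part it just says ``follows directly by Theorem~\ref{density0}''. You have unpacked both steps carefully---the bound on the Bose integrand on $[E_0(H),\|H\|]$, the passage from the cumulative-function identity $N_{H_Y}(x)=N_{H_X}(x+\d)$ to the change of variables in the Stieltjes integral, and the monotonicity in $\m$ giving the final inequality---but the underlying ideas are identical.
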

\begin{proof}
Suppose that there is Hidden Spectrum. This means that for some $\eps>0$
$$
\r^{H}_{c}(\b)=\int_\eps^{+\infty}\frac{dN_{H}(h)}{e^{\b h}-1}<+\infty\,.
$$
The second part follows directly by Theorem \ref{density0}.
\end{proof}
The second part of the previous theorem explain in the case of the appearance of Hidden Spectrum for the Pure Hopping model, that $\d$ given in  \eqref{ddee} plays the role of a chemical potential.
 \begin{figure}
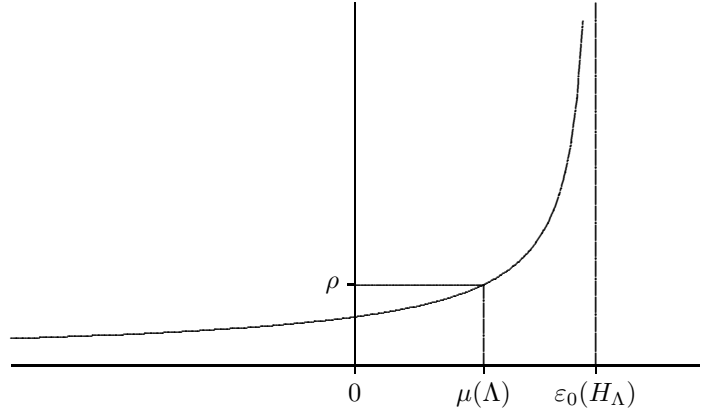

\hbox to\hsize\bgroup\hss
\beginpicture
\setcoordinatesystem units <1.8truein,.2truein>
\setplotarea x from -1 to 1, y from 0 to 9.5
\axis bottom ticks short withvalues {$0$} {$\m(\La)$} {$\eps_0(H_\La)$}
/ at 0 0.3745455  0.7 / /
\axis left shiftedto x=0 ticks short withvalues {$\rho$} / at 2.1135204 / /
\plot  0 2.1135204  0.3745455 2.1135204 /
\plot 0.7 0  0.7 9.5 /
\plot  0.3745455 0  0.3745455 2.1135204 /
\plot
-1.           0.7020489
  -0.9830303    0.7067601
  -0.9660606    0.7115512
  -0.9490909    0.7164242
  -0.9321212    0.7213816
  -0.9151515    0.7264256
  -0.8981818    0.7315588
  -0.8812121    0.7367835
  -0.8642424    0.7421026
  -0.8472727    0.7475187
  -0.8303030    0.7530348
  -0.8133333    0.7586537
  -0.7963636    0.7643787
  -0.7793939    0.7702129
  -0.7624242    0.7761597
  -0.7454545    0.7822226
  -0.7284848    0.7884054
  -0.7115152    0.7947118
  -0.6945455    0.8011458
  -0.6775758    0.8077116
  -0.6606061    0.8144137
  -0.6436364    0.8212565
  -0.6266667    0.8282449
  -0.6096970    0.8353839
  -0.5927273    0.8426788
  -0.5757576    0.850135
  -0.5587879    0.8577584
  -0.5418182    0.8655550
  -0.5248485    0.8735312
  -0.5078788    0.8816938
  -0.4909091    0.8900497
  -0.4739394    0.8986065
  -0.4569697    0.9073719
  -0.44         0.9163543
  -0.4230303    0.9255624
  -0.4060606    0.9350053
  -0.3890909    0.9446927
  -0.3721212    0.9546351
  -0.3551515    0.9648433
  -0.3381818    0.9753288
  -0.3212121    0.9861039
  -0.3042424    0.9971817
  -0.2872727    1.0085759
  -0.2703030    1.0203012
  -0.2533333    1.0323734
  -0.2363636    1.0448092
  -0.2193939    1.0576263
  -0.2024242    1.0708439
  -0.1854545    1.0844823
  -0.1684848    1.0985635
  -0.1515152    1.1131108
  -0.1345455    1.1281495
  -0.1175758    1.1437067
  -0.1006061    1.1598115
  -0.0836364    1.1764955
  -0.0666667    1.1937928
  -0.0496970    1.2117401
  -0.0327273    1.2303776
  -0.0157576    1.2497487
    0.0012121    1.2699007
    0.0181818    1.2908852
    0.0351515    1.3127586
    0.0521212    1.3355828
    0.0690909    1.3594257
    0.0860606    1.384362
    0.1030303    1.4104742
    0.12         1.4378536
    0.1369697    1.4666016
    0.1539394    1.496831
    0.1709091    1.5286679
    0.1878788    1.5622534
    0.2048485    1.5977467
    0.2218182    1.6353271
    0.2387879    1.6751982
    0.2557576    1.7175919
    0.2727273    1.7627737
    0.2896970    1.8110489
    0.3066667    1.8627711
    0.3236364    1.9183517
    0.3406061    1.9782727
    0.3575758    2.0431033
    0.3745455    2.1135204
    0.3915152    2.1903362
    0.4084848    2.2745348
    0.4254545    2.3673211
    0.4424242    2.470188
    0.4593939    2.5850104
    0.4763636    2.7141784
    0.4933333    2.8607926
    0.5103030    3.0289567
    0.5272727    3.2242303
    0.5442424    3.454357
    0.5612121    3.7304879
    0.5781818    4.069352
    0.5951515    4.4973694
    0.6121212    5.059147
    0.6290909    5.8371213
    0.6460606    7.0048035
    0.6630303    9.0109256
    /
\endpicture
\hss\egroup
 \caption{{The finite volume chemical potential at fixed density $\r$.}}
     \label{figg1}
\end{figure}

It is well known that the condensation regime is described by $\m=0$, see e.g. Section 5.2 of \cite{BR2}, or Section 3 of \cite{FGI1}. 
The most common way to study the appearance of BEC in homogeneous systems (cf. Fig. \ref{figg1}), is to determine the chemical potential
$\m(\La)$ at finite volume $\La\subset G$, and at fixed inverse temperature $\b$, as the unique solution of the equation
\begin{equation}
\label{fvden}
 \r_{H_{\La}}(\b,\m(\La))=\r\,,
 \end{equation}
where $\r$ is the mean density of the system fixed a--priori. 
The infinite volume limit can be investigated by means of the reference exhaustion 
$\{\La_n\}_{n\in\bn}$, with $H_{\La_n}=P_nHP_n$. To take into account also the very different situation appearing in non homogeneous situation, we can also start by fixing any general sequence of chemical potential 
$$
\m_n<\eps_{0}(H_{\La_n})\,,
$$
which we can suppose to converge (eventually passing to a subsequence) to some $\m$. In the case when such a sequence is recovered by using \eqref{fvden}, we put $\m_n:=\m(\La_n)$. Since $\eps_{0}(H_{\La_n})\to\eps_{0}(H)=0$, we get $\m\leq0$.
 The finite volume state with density $\rho$ at $\b$ is described by
 the two--point function
 \begin{equation} 
 \label{sato}
     \om_{\La_{n}}(a^{\dagger}(\xi)a(\eta))=\big\langle  
     (e^{\b(H_{\La_{n}}-\m_{n}\idd)}-\idd)^{-1}\xi,\eta\big\rangle,
 \end{equation}
 where $\xi,\eta\in\ell^{2}(\La_n)$.
After extending the states with two--point function in \eqref{sato} to the whole network, infinite volume states are 
described as cluster points of the net of such finite volume states. 
Concerning the infinite volume limit of the finite volume density, we get (cf. \cite{BCRSV, F2, FGI1}) in the condensation regime for each sequence of the chemical potential $\m_n\to0$,
$$
\lim_{\eps\downarrow0}\lim_{\La_{n}\uparrow G}
\int\frac{F_\eps(x)}{e^{\b(x-\m_{n})}-1}\di N_{H_{\La_{n}}}(x)\,,
=\r_c^H(\b)\,.
$$
provided $\lim_{x\downarrow0}N_H(x)=0$. Here, $\{F_\eps\}_{\eps>0}$ is made of continuous mollifiers, all vanishing in a neighborhood of $0$ and converging almost everywhere (w.r.t the measure determined by $N_H$) to $1$, see Proposition 3.3 of \cite{F2}. 
Thus, the quantity
 $$
 n_{0}:=\lim_{\eps\downarrow0}\lim_{\La_{n}\uparrow X}
 \int\frac{1-F_{\eps}(h)}{e^{\b(h-\m_{n})}-1}dN_{H_{\La_{n}}}(h)
 $$
 is well defined and independent of the particular choice of the
 mollifiers as above, and describes the amount of the condensate in the ground state.  Indeed, if the sequence of chemical potentials are obtained through \eqref{fvden} by fixing the mean density $\r\geq\r_c(\b)$, we have
 \begin{equation*}
\r=n_{0}+\r_c(\b)\,.
 \end{equation*}
Among the new results proved in the forthcoming sections for the inhomogeneous models under consideration, we prove that it can happen that the amount of the condensate converges in the infinite volume limit, even if for the sequence of the finite volume two--point function in \eqref{sato},
$$
\lim_n\om_{\La_{n}}(a^{\dagger}(\d_x)a(\d_x))=+\infty\,.
$$ 
The reader is also referred to \cite{BLP} and the reference therein, for the investigation of some properties connected with BEC, when other exhaustions different from the standard ones are considered. In such an analysis, no limit of the two--point function 
$\om_{\La_{n}}(a^{\dagger}(f)a(f))$ has been investigated.

To simplify the foregoing analysis without affecting the obtained results, from now on we put $\b=1$ if it is not otherwise specified. We also put $\r(\m):=\r^H(1,\m)$,  
$\r_{\La}(\m):=\r^{H_{\La}}(1,\m)$, and finally $\r_c:=\r^H(1,0)$, $H$ being the Hamiltonian of the model, which will always be the Pure Hopping one.

\section{General results}
\label{sec:gen}

We start with a simple result useful in the sequel without further mention, which allows us to manage the resolvent $R_{A_X}$ of the unperturbed Adjacency of the finite volume graph $X$ entering in the Krein Formula \eqref{k2s} of the perturbed Adjacency $R_{A_Y}$, for values of $\l>\|A_Y\|$.
\begin{Lemma}
\label{caficz}
Let $Y$ be an additive perturbation of $X$ involving only edges, with $|VX|<+\infty$. Then $\|A_Y\|>\|A_X\|$.
\end{Lemma}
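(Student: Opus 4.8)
The plan is to test the (self--adjoint) operator $A_Y$ against the Perron--Frobenius eigenvector of $A_X$. Since $X$ is finite and connected, $A_X$ is a finite irreducible matrix with non--negative entries, so by the Perron--Frobenius theorem its norm $\|A_X\|=\spr(A_X)$ is an eigenvalue admitting a \emph{strictly positive} eigenvector; normalise it to $v\in\ell^2(VX)$ with $v(x)>0$ for every $x\in VX$ and $\|v\|=1$. Write $A_Y=A_X+D$ with $D$ the perturbation: because $Y$ is an additive perturbation of $X$ involving only edges and $Y\neq X$, the matrix $D$ is symmetric, has non--negative entries, and $D\neq0$.

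First I would record the elementary bound $\|A_Y\|\geq\langle A_Y v,v\rangle$, which holds because $A_Y$ is self--adjoint (and, incidentally, the right--hand side is non--negative, $A_Y$ having non--negative entries and $v\geq0$). Expanding, $\langle A_Y v,v\rangle=\langle A_X v,v\rangle+\langle D v,v\rangle=\|A_X\|+\langle D v,v\rangle$, where I used $A_X v=\|A_X\|v$ and $\|v\|=1$. It remains to note that $\langle D v,v\rangle=\sum_{x,y\in VX}D_{xy}v(x)v(y)>0$: each summand is $\geq0$ since $D_{xy}\geq0$ and $v>0$, while at least one summand is strictly positive because $D\neq0$. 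Combining, $\|A_Y\|\geq\|A_X\|+\langle D v,v\rangle>\|A_X\|$.

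The argument is short and presents no real obstacle; the only points that genuinely use the hypotheses are (a) the strict positivity of the Perron--Frobenius weight $v$, which is where finiteness together with connectedness (irreducibility) of $X$ enters, and (b) the non--triviality $D\neq0$ of the perturbation, which is exactly what makes $\langle Dv,v\rangle$ strictly positive rather than merely non--negative. One could alternatively invoke the strict monotonicity of the spectral radius under the addition of a non--zero non--negative matrix to an irreducible non--negative matrix, but the direct test--vector computation above is cleaner and self--contained.
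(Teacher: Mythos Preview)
Your proof is correct and is essentially identical to the paper's own argument: the paper also tests $A_Y=A_X+D$ against a normalised Perron--Frobenius eigenvector $v$ of $A_X$ and concludes from $\|A_Y\|\geq\langle A_Yv,v\rangle=\|A_X\|+\langle Dv,v\rangle>\|A_X\|$. Your write--up is somewhat more explicit about why $\langle Dv,v\rangle>0$ (strict positivity of $v$ from irreducibility, together with $D\neq0$), but the underlying idea is the same.
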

\begin{proof}
Let $v\in\ell^2(VX)$ be any normalised PF eigenvector for $A_X$. Put $A_Y=A_X+D$.
\begin{align*}
\|A_Y\|\geq\langle A_Y v,v\rangle=&\langle (A_X+D)v,v\rangle=\langle A_X v,v\rangle
+\langle Dv,v\rangle\\
=&\|A_X\|+\langle Dv,v\rangle>\|A_X\|\,.
\end{align*}
\end{proof}
Let $Y$ be any additive perturbation of $X$ involving only edges described by $D$. Define $Z\subset Y$ as the subgraph whose vertices are precisely the sources (or equivalently ends, being the networks under consideration unordered) of the additional edges. The perturbation $D=P_{\ell^2(VZ)}DP_{\ell^2(VZ)}$ can be directly viewed by acting on $\ell^2(VZ)$. The candidate for the {\it Secular Equation }(cf. Theorem 3.1 of \cite{F} and Theorem 6.1 of \cite{FGI1}) in this more general situation might assume the form
\begin{equation}
\label{eksek}
\spr\big(DR_{A_X}(\l)P_{\ell^2(VZ)}\big)=1\,.
\end{equation}
More in detail, 
\eqref{eksek} would have at most one solution 
$\l_*\in(\|A_X\|,+\infty)$ which necessarily gives $\|A_Y\|=\l_*>\|A_X\|$, otherwise $\|A_Y\|=\|A_X\|$. The proof would follow the same lines of Theorem 3.2 of \cite{F}, after showing that the function
$$
\l\in(\|A_X\|,+\infty)\mapsto\spr\big(DR_{A_X}(\l)P_{\ell^2(VZ)}\big)
$$
is nondecreasing and strictly convex, by using the $1^{st}$ identity of resolvent and (1.1) of \cite{S}. 
As we do not use this result in the sequel, we have chosen not to pursue more such a possible generalisation. However, by using the same lines of the mentioned theorem,  we can prove that if $\l_*>\|A_X\|$ is the solution (necessarily unique) of
\begin{equation}
\label{1eksek1}
\big\|DR_{A_X}(\l)P_{\ell^2(VZ)}\big\|=1\,,
\end{equation}
then $\l_*=\|A_Y\|>\|A_X\|$. Conversely, if \eqref{1eksek1} has no solution $\l>\|A_X\|$, then $\|A_Y\|=\|A_X\|$. In other words, it can be proven that \eqref{eksek} is indeed the Secular Equation for the case $DR_{A_X}(\l)P_{\ell^2(VZ)}$ self--adjoint.

We pass to the definition of the geometrical and PF dimensions, the latter, appeared first in \cite{FGI1}, takes into account the 
growth of the $\ell^2$--norm of a fixed PF weight of the Adjacency, in amenable cases. It plays a fundamental role in the study of the condensation effects, see below.
Let $X$ be a graph, together with an exhaustion $\{\La_n\}_{n\in\bn}$ and a PF weight  $\{v(x)\mid x\in X\}$ for the adjacency.
\begin{Dfn}${}$
\label{pfcazol}
\begin{itemize}
\item[(i)] The {\it geometrical dimension} $d_G(X)$ of $X$, is defined to be $g\geq0$ if and only if $|\La_n|\approx a^g$ for $n\to\infty$, for some number $a>0$. 
\item[(ii)] The {\it Perron--Frobenius dimension} $d_{PF}(X)$ of $X$, is defined as $p\geq0$ if and only if 
$\|v\lceil_{\ell^2(V\La_n)}\|^2\approx b^p$ for $n\to\infty$, for some number $b>0$.
\end{itemize}
\end{Dfn}
In the sequel, we omit the possible dependence on the chosen exhaustion and on the PF weight when it causes no confusion.

We end with a result of general nature concerning the existence/non existence of locally normal KMS states exhibiting BEC for the Pure Hopping model on general networks. For the sake of completeness, we first report Proposition 5.2 of \cite{F2} concerning the recurrent situation. 
Let $\{\om_{\La_n}\}_{n\in\bn}$ be the sequence of finite volume quasi--free states whose two--point function is given in \eqref{sato} for any sequence of chemical potentials $\{\m_n\}_{n\in\bn}$, such that $\m<\|A_G\|-\|A_{\La_n}\|$ and $\lim_n\m_n=0$.
\begin{Prop}
\label{roec}
If $A_G$ is recurrent, then for each $x\in G$,
$$
\lim_n\om_n(a^\dagger(\d_x)a(\d_x))=+\infty\,.
$$
\end{Prop}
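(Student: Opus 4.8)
The plan is to compute the two-point function explicitly using the spectral theorem and then exploit the defining property of recurrence, namely the divergence of the diagonal resolvent matrix element at $\l\downarrow\|A_G\|$. Starting from \eqref{sato} with $\b=1$, write
\begin{equation*}
\om_n(a^\dagger(\d_x)a(\d_x))=\big\langle\big(e^{H_{\La_n}-\m_n\idd}-\idd\big)^{-1}\d_x,\d_x\big\rangle,
\end{equation*}
where $H_{\La_n}=P_nHP_n$ and $H=\|A_G\|\idd-A_G$. Using the elementary bound $e^{t}-1\le t\,e^{t}\le t\,e^{\|H_{\La_n}\|+|\m_n|}$ valid on the (bounded) spectrum of $H_{\La_n}-\m_n\idd$, which is uniformly bounded since $\|A_{\La_n}\|\le\|A_G\|$ and $\m_n\to 0$, one gets a constant $c>0$ (independent of $n$) with
\begin{equation*}
\big(e^{H_{\La_n}-\m_n\idd}-\idd\big)^{-1}\ \ge\ c\,\big(H_{\La_n}-\m_n\idd\big)^{-1}\ =\ c\,R_{A_{\La_n}}\big(\|A_G\|-\m_n\big)
\end{equation*}
as an operator inequality. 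Hence $\om_n(a^\dagger(\d_x)a(\d_x))\ge c\,\langle R_{A_{\La_n}}(\|A_G\|-\m_n)\d_x,\d_x\rangle$, and it suffices to show that the right-hand side diverges.

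\medskip

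The second step is to pass from the finite-volume resolvent $R_{A_{\La_n}}$ to the infinite-volume resolvent $R_{A_G}$. Here I would use monotone convergence of the finite-volume resolvents: since $A_{\La_n}\uparrow A_G$ along the exhaustion in the appropriate sense (the quadratic forms increase) and all operators involved are positive-preserving, the diagonal matrix elements satisfy
\begin{equation*}
\langle R_{A_{\La_n}}(\l)\d_x,\d_x\rangle\ \xrightarrow[n\to\infty]{}\ \langle R_{A_G}(\l)\d_x,\d_x\rangle
\end{equation*}
for every fixed $\l>\|A_G\|$, monotonically from below — this is a standard fact and can be read off from the Neumann/path-expansion $\langle R_{A_{\La_n}}(\l)\d_x,\d_x\rangle=\sum_{k\ge0}\l^{-k-1}(A_{\La_n}^k)_{xx}$, in which each term increases to $(A_G^k)_{xx}$. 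Combining this with the recurrence hypothesis \eqref{caz}, $\lim_{\l\downarrow\|A_G\|}\langle R_{A_G}(\l)\d_x,\d_x\rangle=+\infty$, one fixes $\l>\|A_G\|$ arbitrarily close to $\|A_G\|$ so that $\langle R_{A_G}(\l)\d_x,\d_x\rangle$ is as large as desired; then for $n$ large the finite-volume value $\langle R_{A_{\La_n}}(\l)\d_x,\d_x\rangle$ is close to it. Finally, since $\m_n\to 0$, for $n$ large we have $\|A_G\|-\m_n>\l$ is false in general — so one must instead be careful: pick $\l$ first, then note $\|A_G\|-\m_n\downarrow\|A_G\|$, and use that $\langle R_{A_{\La_n}}(\cdot)\d_x,\d_x\rangle$ is decreasing in its argument on $(\|A_{\La_n}\|,\infty)$, so $\langle R_{A_{\La_n}}(\|A_G\|-\m_n)\d_x,\d_x\rangle\ge\langle R_{A_{\La_n}}(\l)\d_x,\d_x\rangle$ once $\|A_G\|-\m_n\le\l$, i.e. for all large $n$. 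Letting $n\to\infty$ and then $\l\downarrow\|A_G\|$ gives the divergence.

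\medskip

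The main obstacle is the order-of-limits bookkeeping in the second step: one is taking $n\to\infty$ while simultaneously $\m_n\to0$ pushes the spectral parameter toward the (possibly accumulating) top of the spectrum, so one cannot simply invoke continuity of $R_{A_G}$ at $\|A_G\|$ (it blows up — that is precisely recurrence). The clean way around this, as sketched, is to exploit monotonicity of $\l\mapsto\langle R_{A_{\La_n}}(\l)\d_x,\d_x\rangle$ to decouple the two limits: fix a target value, choose $\l$ close to $\|A_G\|$ realizing it in the infinite-volume limit, and absorb the $\m_n\to0$ behavior using that $\|A_G\|-\m_n\le\l$ eventually. A minor point to check is that $\m_n<\|A_G\|-\|A_{\La_n}\|$ guarantees $\|A_G\|-\m_n>\|A_{\La_n}\|$, so the finite-volume resolvent is genuinely evaluated in its resolvent set; and that the operator inequality in the first step is legitimate, which follows from joint functional calculus for the single self-adjoint operator $H_{\La_n}-\m_n\idd$.
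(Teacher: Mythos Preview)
The paper does not actually supply a proof of this proposition: it is quoted verbatim as Proposition~5.2 of \cite{F2}. Your argument is correct and self--contained, so there is nothing to compare against in the present paper.

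Two small comments. First, your reduction to the resolvent via the inequality $e^{t}-1\le t\,e^{T}$ works, but the paper's preferred device (used e.g.\ in the proof of Theorem~\ref{tretre}) is the splitting $\frac{1}{e^{t}-1}=f(t)+\frac{1}{t}$ with $f$ the bounded continuous function of \eqref{resam}; this shows immediately that $\om_n(a^\dagger(\d_x)a(\d_x))$ differs from $\langle R_{A_{\La_n}}(\|A_G\|-\m_n)\d_x,\d_x\rangle$ by a uniformly bounded term, which is slightly cleaner than producing a multiplicative constant $c$. Second, your self--correction in Step~2 is exactly right and is the substance of the argument: fix $\l>\|A_G\|$ with $\langle R_{A_G}(\l)\d_x,\d_x\rangle$ large (recurrence), use the path expansion $(A_{\La_n}^k)_{xx}\uparrow(A_G^k)_{xx}$ to get $\langle R_{A_{\La_n}}(\l)\d_x,\d_x\rangle$ close to it for $n$ large, and then use that $\|A_G\|-\m_n\to\|A_G\|<\l$ together with the monotone decrease of $\l'\mapsto\langle R_{A_{\La_n}}(\l')\d_x,\d_x\rangle$ on $(\|A_{\La_n}\|,\infty)$ to conclude. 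The hypothesis $\m_n<\|A_G\|-\|A_{\La_n}\|$ guarantees that $\|A_G\|-\m_n$ lies in the resolvent set of $A_{\La_n}$, as you note.
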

Define 
\begin{equation*}
\gph:=\ssv\left\{e^{\imath t(\|A_G\|\idd-A_G)}\d_x\mid t\in\br\,,x\in G\right\}
\end{equation*}
for the linear span (no closure in $\ell^2(G))$ of elements of the form $e^{\imath t(\|A_G\|\idd-A_G)}\d_x$. Fix any 
PF weight $v$ for $A_G$ which exists by compactness (cf. Section 4 of \cite{FGI1}). Consider the continuous function
\begin{equation}
\label{resam}
f(x)= 
     \begin{cases}
     -\frac12\,,&x=0\,,\\
\frac{1}{e^x-1}-\frac1x\,,&x>0\,.
     \end{cases}
\end{equation}
It is bounded on $[0,+\infty)$, and 
$$
\frac{1}{e^x-1}=f(x)+\frac1x\,,\quad x\in(0,+\infty)\,.
$$
It provides the precise comparison heuristically explained in \eqref{euro}, between the Resolvent of $A$ and the functional calculus of $(e^H-\idd)^{-1}$ associated to the Bose--Gibbs occupation number \eqref{bgokn} for the Pure Hopping Hamiltonian.
\begin{Thm}
\label{tretre}
For a graph $G$, suppose that $A_G$ is transient. Fix any PF weight $v$ for $A_G$. For each $D\geq0$, the two--point function
\begin{align}
\label{kmsbecd}
\om_D(a^\dagger(u_1)a(u_2)):=&\big\langle(e^{\b(\|A_G\|\idd-A_G)}-\idd)^{-1}u_1,u_2\big\rangle\nn\\
+&D\langle u_1,v\rangle\langle v, u_2\rangle\,,
\quad u_1,u_2\in\gph\,.
\end{align}
uniquely defines locally normal KMS states on the Weyl CCR algebra $\gw(\gph)$ w.r.t. the dynamics generated by the Bogoliubov transformations $u\in \gph\mapsto e^{\imath t(\|A_G\|-A_G)}u$, $t\in\br$, $u\in\gph$.
\end{Thm}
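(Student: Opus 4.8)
The plan is to establish, in order: $(a)$ that the right--hand side of \eqref{kmsbecd} is a well defined, non--negative sesquilinear form on $\gph$, so that by the standard construction of gauge--invariant quasi--free states (cf. \cite{BR2}, Section 5.2) it uniquely determines a regular quasi--free state $\om_D$ on $\gw(\gph)$; $(b)$ that $\om_D$ is locally normal; $(c)$ that $\om_D$ satisfies the KMS boundary condition at inverse temperature $\b$ with respect to the Bogoliubov dynamics $\a_t(W(u)):=W(e^{\imath tH}u)$, $H:=\|A_G\|\idd-A_G$ (note $e^{\imath tH}\gph=\gph$, so $\a$ is a genuine one--parameter $*$--automorphism group of $\gw(\gph)$). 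The workhorse for $(a)$ and $(c)$ is the splitting $\frac1{e^x-1}=f(x)+\frac1x$ from \eqref{resam}, read as $(e^{\b H}-\idd)^{-1}=f(\b H)+\frac1\b R_{A_G}(\|A_G\|)$ with $R_{A_G}(\|A_G\|):=\lim_{\l\downarrow\|A_G\|}R_{A_G}(\l)$: here $f(\b H)$ is bounded since $f$ is bounded on $[0,+\infty)\supseteq\s(H)$, whereas $\langle R_{A_G}(\|A_G\|)\d_x,\d_x\rangle<+\infty$ holds exactly because $A_G$ is transient, by \eqref{caz}. Hence $R_{A_G}(\|A_G\|)^{1/2}$ maps each $\d_x$ into $\ell^2(G)$, and, commuting with $e^{\imath tH}$, it maps all of $\gph$ into $\ell^2(G)$, so $\langle R_{A_G}(\|A_G\|)u_1,u_2\rangle=\langle R_{A_G}(\|A_G\|)^{1/2}u_1,R_{A_G}(\|A_G\|)^{1/2}u_2\rangle$ is finite on $\gph$. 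For the condensate term one first checks $e^{\imath zH}v=v$ for all $z\in\bc$ pointwise, using $A_Gv=\|A_G\|v$ and the rapid (super--exponential) decay of $e^{-\imath tA_G}\d_x$ coming from $A_G^k$ having propagation $k$; hence $\langle e^{\imath tH}\d_x,v\rangle=v(x)$ and $(u_1,u_2)\mapsto D\langle u_1,v\rangle\langle v,u_2\rangle$ is a well defined non--negative rank--one form on $\gph$. Non--negativity of the total form then follows from the spectral theorem, $\langle(e^{\b H}-\idd)^{-1}u,u\rangle=\int_{[0,+\infty)}(e^{\b\l}-1)^{-1}\,d\langle E_\l u,u\rangle\ge0$ (the integral converging by the above), together with $D|\langle u,v\rangle|^2\ge0$; finiteness of the form on all of $\gph$ also gives regularity of $\om_D$.

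For $(b)$ I would invoke the Stone--von Neumann uniqueness theorem. For any finite region $\La\subseteq G$ one has $\ell^2(\La)=\ssv\{\d_x\mid x\in\La\}\subseteq\gph$, so $\gw(\ell^2(\La))$ is the CCR algebra over a finite--dimensional symplectic space, on which $\om_D$ restricts to a regular state because
$$\om_D(a^\dagger(\d_x)a(\d_x))=\langle f(\b H)\d_x,\d_x\rangle+\frac1\b\langle R_{A_G}(\|A_G\|)\d_x,\d_x\rangle+Dv(x)^2<+\infty,$$
the middle summand being finite precisely because $A_G$ is transient. Hence the GNS representation $\pi_{\om_D}$, which is regular since $\om_D$ is, restricts to a regular representation of $\gw(\ell^2(\La))$, and by Stone--von Neumann every such representation is quasi--equivalent to the Fock representation of $\gw(\ell^2(\La))$; this is precisely local normality. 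This is the step where transience is indispensable: for recurrent $A_G$ the middle summand equals $+\infty$, and no locally normal state can arise, in agreement with Proposition \ref{roec}.

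For $(c)$, since $\om_D$ is a regular gauge--invariant quasi--free state and $\a$ is Bogoliubov, the KMS boundary condition reduces to the corresponding relation for the two--point function. Writing $S:=(e^{\b H}-\idd)^{-1}+D|v\rangle\langle v|$ for the one--particle symbol of \eqref{kmsbecd} (a form on $\gph$), the equilibrium summand commutes with $e^{\imath tH}$ by functional calculus and obeys $(e^{\b H}-\idd)^{-1}+\idd=e^{\b H}(e^{\b H}-\idd)^{-1}$, while the condensate summand is $\a$--invariant and, being supported in the kernel of $H$, satisfies $e^{\b H}D|v\rangle\langle v|=D|v\rangle\langle v|$; hence $S+\idd=e^{\b H}S$ on $\gph$, which is the algebraic content of the KMS condition. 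Concretely one verifies that, for $u_1,u_2\in\gph$, the functions $z\mapsto\om_D\big(a(u_1)a^\dagger(e^{\imath zH}u_2)\big)$ and $z\mapsto\om_D\big(a^\dagger(e^{\imath zH}u_2)a(u_1)\big)$ extend analytically to the strip $0\le\im z\le\b$ with matching boundary values at $z$ and $z+\imath\b$; analyticity on the closed strip again rests on \eqref{resam} and transience, since for $\im z\ge0$ the contraction $e^{\imath zH}$ commutes with $R_{A_G}(\|A_G\|)^{1/2}$, so that $z\mapsto R_{A_G}(\|A_G\|)^{1/2}e^{\imath zH}u_2\in\ell^2(G)$ is analytic there, whereas $z\mapsto f(\b H)e^{\imath zH}$ is norm--analytic and uniformly bounded, and the $|v\rangle\langle v|$ contribution is entire and $z$--independent. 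Continuity of $t\mapsto\om_D(A\a_t(B))$ on the Weyl generators, hence on all of $\gw(\gph)$ by density, follows from regularity of $\om_D$ and strong continuity of $t\mapsto e^{\imath tH}$; uniqueness of the quasi--free state with two--point function \eqref{kmsbecd} is standard.

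The main obstacle is not a single deep fact but the careful bookkeeping needed to run all of the above rigorously at the level of sesquilinear forms on the non--closed domain $\gph$, given that $(e^{\b H}-\idd)^{-1}$ is unbounded and $v$ is in general not an $\ell^2$--vector. This is exactly what the decomposition \eqref{resam} and the transience bound \eqref{caz} (through the square--root identity $R_{A_G}(\|A_G\|)=R_{A_G}(\|A_G\|)^{1/2}R_{A_G}(\|A_G\|)^{1/2}$) are designed to absorb, and it is what dictates the choice of $\gph$ as the linear span of the vectors $e^{\imath tH}\d_x$, a space simultaneously stable under the dynamics and under $R_{A_G}(\|A_G\|)^{1/2}$. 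Once this form--domain analysis is in place, the representation--theoretic input is just the Stone--von Neumann theorem, and the KMS verification is the standard quasi--free calculus of \cite{BR2}.
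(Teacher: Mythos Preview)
Your argument is correct and shares the paper's overall skeleton: both use the splitting \eqref{resam} to isolate the bounded piece $f(\b H)$ from the resolvent, and both pin finiteness of the equilibrium term on transience via $\d_x\in\cd\big((e^{\b H}-\idd)^{-1/2}\big)$. The genuine difference is in the condensate pairing $\langle e^{\imath tH}\d_x,v\rangle$. You argue via finite propagation: the power series for $e^{\imath tH}\d_x$ at a vertex $y$ starts at order $d(x,y)$, so $(e^{\imath tH}\d_x)(y)$ decays faster than any exponential in $d(x,y)$; combined with the at-most-exponential growth $v(y)\le\|A_G\|^{d(x,y)}v(x)$ (immediate from $A_Gv=\|A_G\|v$) and the exponential ball growth from bounded degree, this makes $\sum_y|(e^{\imath tH}\d_x)(y)|v(y)$ finite and then yields $\langle e^{\imath tH}\d_x,v\rangle=v(x)$ by term-by-term summation. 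The paper instead represents $e^{\imath tH}\d_x$ by the Dunford contour integral over a large circle $C_r$, bounds $|(e^{\imath tH}\d_x)(y)|\le r\langle R_{A_G}(r)\d_x,\d_y\rangle$ using positivity of the Neumann series, and then invokes the single identity $R_{A_G}(r)v=(r-\|A_G\|)^{-1}v$ to get $\langle|e^{\imath tH}\d_x|,v\rangle\le\frac{r}{r-\|A_G\|}v(x)$. Your route buys the exact value $v(x)$ (which you need anyway for the $\a$-invariance step in the KMS check), at the price of a double-sum interchange that should be spelled out; the paper's route gives only a bound, but in one stroke and with no interchange. You are also more explicit than the paper on local normality (Stone--von Neumann on each finite $\ell^2(\La)$) and on the form-level KMS verification; the paper takes both as standard once the two-point function is shown to be well defined on $\gph$.
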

\begin{proof}
By construction, $e^{\imath tH}\gph=\gph$, Thus, $t\mapsto e^{\imath tH}$ defines a one--parameter group of Bogoliubov automorphisms of 
$\gw(\gph)$, for which the states determined by the two--point function \eqref{kmsbecd} are KMS, provided that the r.h.s. is 
well--defined. By considering the functional calculus of the function \eqref{resam}, we can reduce the matter to the Resolvent $R_{A_G}(\l)$. For the generator $u=e^{\imath tH}\d_x$ and $\l\downarrow\|A_G\|$ , we get
$$
\left\langle R_{A_G}(\l)e^{\imath t(\|A_G\|-A_G)}\d_x,e^{\imath t(\|A_G\|-A_G)}\d_x\right\rangle
=\left\langle R_{A_G}(\l)\d_x,\d_x\right\rangle\uparrow\left\langle R_{A_G}(\|A_G\|)\d_x,\d_x\right\rangle\,,
$$
which is finite because $A_G$ as transient. Thus, $e^{\imath tH}\d_x\in\cd\left((e^{\b H}-\idd)^{-1/2}\right)$ and the first addendum in the r.h.s. of \eqref{kmsbecd} is meaningful. Concerning the second one describing the portion of the condensate, for the circle $C_r\subset\bc$ of sufficiently big radius $r$, we note that if 
$z=re^{\imath\th}\in C_r$, then
$$
|\langle R_{A_G}(z)\d_x,\d_y\rangle|=\left|\sum_{n=0}^{+\infty}\frac{\langle A^n_G\d_x,\d_y\rangle}{r^{n+1}}e^{-\imath(n+1)\th}
\right|
\leq\sum_{n=0}^{+\infty}\frac{\langle A^n_G\d_x,\d_y\rangle}{r^{n+1}}=\langle R_{A_G}(r)\d_x,\d_y\rangle\,.
$$
For $u\in\ell^2(G)$, denote $|u|\in\ell^2(G)$ the vector whose entries are defined as $|u|(x):=|u(x)|$, $x\in VG$.
By reasoning as in Proposition 4.4 of \cite{F2}, we get
\begin{align*}
\big\langle\big|e^{\imath tH}\d_x&\big|,v\big\rangle=
\left\langle\left|\frac{1}{2\pi\imath}\oint_{C_r}e^{\imath t(\|A_G\|-z)}R_{A_G}(z)\d_x\, \di z
\right|,v\right\rangle\\
=&\sum_{y\in G}\left|\frac{1}{2\pi\imath}\oint_{C_r}e^{\imath t(\|A_G\|-z)}\langle R_{A_G}(z)\d_x,\d_y\rangle \di z\right|v(y)\\
\leq&r\sum_{y\in G}\langle R_{A_G}(r)\d_x,\d_y\rangle v(y)
\leq r\left\langle R_{A_G}(r)\d_x,v\right\rangle\\
=&r\left\langle\d_x,R_{A_G}(r)v\right\rangle
=\frac{r\left\langle\d_x,v\right\rangle}{r-\|A_G\|}
=\frac{rv(x)}{r-\|A_G\|}\,.
\end{align*}
Namely, if $u\in\gph$ then $\sum_{y\in G}|u(y)|v(y)<+\infty$ as $u=\sum_{x\in G}a_xe^{\imath tH}\d_x$ is a finite sum, that is the last addendum in the r.h.s. of \eqref{kmsbecd} is also meaningful.
\end{proof}
The last results explain the remarkable unexpected fact that the BEC for the Pure Hopping model is connected with the transience/recurrence character of the Adjacency, and not with the finiteness of the critical density. It might be straightforwardly generalised to models on graphs, as well as on $\br^d$ for semibounded Schr\"dinger Hamiltonians of the form $H=-\D+V(x)$. Below, we exhibit models whose critical density is infinite but exhibiting BEC and vice--versa, that is those for which the critical density is finite but there is no locally normal state exhibiting BEC. 

We end the present section by noticing that, for any finite subgraph $\La\subset G$ and any chemical potential 
$\m<\|A_G\|-\|A_\La\|$, the finite volume amount of the condensate for finite volume states in \eqref{sato} and states \eqref{kmsbecd}, defined respectively as 
$$
\r^{cond}(\om_{\La}):=\frac1{|\La|(\|A_G\|-\|A_\La\|-\m)}\,,\quad
\r^{cond}_{\La}(\om):=\frac{D\sum_{x\in\La}v(x)^2}{|\La|}\,,
$$
are meaningful.

\section{Comb graphs}
\label{sec:comb}

The present section is devoted to general results relative to the harmonic analysis for the Adjacency on the so--called {\it Comb Graphs}, the last being the main objects of the investigation of the remaining part of the present paper. 
\begin{Dfn}
     Let $G$, $H$ be graphs, and let $o\in VH$ be a given vertex. 
     Then the {\it comb product} $Y := G \comb (H,o)$ is a graph with $VY:=
     VG \times VH$, and $(g,h)$, $(g',h') \in VY$ are adjacent $\iff$
     $g=g'$ and $h\sim h'$ or $h=h'=o$ and $g\sim g'$.  We call $G$ the {\it base graph}, and
     $H$ the {\it fibre graph}.  
    \end{Dfn} 

When $o\in H$ is understood from the
context, we omit it and write $G\comb H$. 
Notice that $\ell^2(VG\comb (H,o))=\ell^2(VG)\otimes\ell^2(VH)$. In this case, $G\comb H$ can be viewed as the additive perturbation of the disjoint union 
$X:=\bigsqcup_{VG}H$
of $\# VG$--copies of $H$. For $D$ describing the perturbation, we have $D=A_G\otimes P_o$, 
$P_o:=\langle\,{\bf\cdot}\,,\d_o\rangle\d_o$ being the orthogonal projection onto the subspace generated by 
$\d_o\in\ell^2(VH)$. 
\begin{Prop}
\label{infneg}
If $|VH|=+\infty$ then $G\comb (H,o)$ is a negligible additive perturbation of $\bigsqcup_{VG} H$. 
\end{Prop}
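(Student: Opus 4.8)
The plan is to exhibit the natural exhaustion and reduce the claim to an elementary edge count. Fix exhaustions $\{G_n\}_{n\in\bn}$ of $G$ and $\{H_n\}_{n\in\bn}$ of $H$ with $o\in VH_n$ for every $n$, and equip both $X:=\bigsqcup_{VG}H$ and $Y:=G\comb(H,o)$ (which have the common vertex set $VG\times VH$) with the product exhaustion $\La_n$ defined by $V\La_n:=VG_n\times VH_n$. By definition of the comb product, $Y$ is obtained from $X$ by adding, for each edge of $G$ joining $g$ and $g'$, one edge joining $(g,o)$ and $(g',o)$ — and only such edges — so $EX\subset EY$ and the symmetric difference $EX\triangle EY$ is exactly the set of these "base" edges $e_{(g,o),(g',o)}$, one for each edge of $G$ between $g$ and $g'$ (counted with multiplicity $A_G(g,g')$).

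Next I would estimate the numerator in the density-zero condition. Since $o\in VH_n$ for all $n$, we have $(g,o)\in V\La_n$ if and only if $g\in VG_n$; hence an edge $e_{(g,o),(g',o)}\in EX\triangle EY$ has an endpoint in $V\La_n$ precisely when $g\in VG_n$ or $g'\in VG_n$. Therefore the number of edges of $EX\triangle EY$ having an endpoint in $V\La_n$, counted with multiplicity, is bounded by the number of edges of $G$ incident to $VG_n$, which is at most $\sum_{g\in VG_n}\deg_G(g)\le\deg(G)\,|VG_n|$, with $\deg(G)<+\infty$ because $G$ has uniformly bounded degree. Dividing by $|V\La_n|=|VG_n|\cdot|VH_n|$ gives
\[
\frac{|\{e_{xy}\in EX\triangle EY\mid x\in V\La_n\}|}{|V\La_n|}\le\frac{\deg(G)\,|VG_n|}{|VG_n|\cdot|VH_n|}=\frac{\deg(G)}{|VH_n|}\,.
\]
Since $|VH|=+\infty$ and $\{H_n\}_{n\in\bn}$ invades $H$, we have $|VH_n|\to+\infty$, so the right-hand side tends to $0$. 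This is exactly the density-zero condition, so $G\comb(H,o)$ is a negligible additive perturbation of $\bigsqcup_{VG}H$.

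The argument is essentially pure bookkeeping, so there is no substantive obstacle. The only point that needs a little care is the choice of the product exhaustion together with the observation that, because $o$ lies in every fibre region $H_n$, every added base edge is already "caught" by $V\La_n$ as soon as one of its $G$-coordinates enters $VG_n$; this is what makes the numerator grow like $|VG_n|$ rather than like $|V\La_n|=|VG_n|\cdot|VH_n|$, and hence forces the ratio to vanish in the infinite-fibre case $|VH|=+\infty$.
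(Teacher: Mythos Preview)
Your proof is correct and follows essentially the same approach as the paper: choose the product exhaustion $V\La_n=VG_n\times VH_n$ with $o\in VH_n$, identify the added edges as the base edges indexed by $EG$, bound their number meeting $V\La_n$ by $\deg(G)\,|VG_n|$, and divide by $|VG_n|\,|VH_n|$. The only cosmetic difference is that the paper writes the numerator as $|EG_n|$ (edges internal to $G_n$) whereas you, following the symmetric-difference definition verbatim, count edges of $G$ incident to $VG_n$; both are dominated by $\deg(G)\,|VG_n|$, so the conclusion is identical.
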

\begin{proof}
Let $\{G_n\}_{n\in\bn}$, $\{H_n,o\}_{n\in\bn}$ exhaustions of $G$, $(H,o)$ respectively, with $o\in H_n$, $n\in\bn$.
Then
$$
\frac{\big|E\big(G_n\comb (H_n,o)\big)\backslash E\big(\bigsqcup_{VG_n} H_n\big)\big|}{\big|V\big(\bigsqcup_{VG_n} H_n\big)\big|}
=\frac{|EG_n|}{|VG_n||VH_n|}\leq\frac{\deg_G|VG_n|}{|VG_n||VH_n|}=\frac{\deg_G}{|VH_n|}\to0\,.
$$
\end{proof}
The following result useful in the sequel, concerns the explicit expression of the Krein Formula for the Adjacency of Comb Graphs.
\begin{Prop}
\label{risgfor}
Let $\l\in\{z\mid |z|>\|A_{G \comb (H,o)}\|\}$. 
Then
\begin{equation}
 \label{nn0110}
R_{A_{G \comb (H,o)}}(\l)=\idd_{\ell^2(VG)}\otimes R_{A_H}(\l)+g(\l)R_{A_G}(g(\l))A_G\otimes R_{A_H}(\l)P_oR_{A_H}(\l)\,,
\end{equation}
where 
\begin{equation}
\label{eksek3}
g(\l):=\langle R_{A_H}(\l)\d_o,\d_o\rangle^{-1}\,.
\end{equation}
\end{Prop}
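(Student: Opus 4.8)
The plan is to deduce formula \eqref{nn0110} directly from the Krein Resolvent Formula \eqref{k2s} applied to the pair $X := \bigsqcup_{VG} H$ and $Y := G\comb(H,o)$, using the identification $\ell^2(VY) = \ell^2(VG)\otimes\ell^2(VH)$ and the explicit form of the perturbation $D = A_G\otimes P_o$ recorded just before Proposition \ref{infneg}. The key observation is that $\overline{\car(D)} = \ell^2(VG)\otimes\bc\d_o$ (assuming $A_G$ has no kernel orthogonality issues, which for our purposes one may sidestep by working on $\car(A_G)\otimes\bc\d_o$), so the projection $P$ in \eqref{s2s} is $\idd_{\ell^2(VG)}\otimes P_o$, and $R_{A_X}(\l) = \idd_{\ell^2(VG)}\otimes R_{A_H}(\l)$ because $X$ is a disjoint union of copies of $H$.

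First I would compute the operator $S(\l) = DPR_{A_X}(\l)\lceil_{\overline{\car(D)}}$ appearing in \eqref{s2s}. Plugging in the tensor forms gives
\[
S(\l) = (A_G\otimes P_o)(\idd\otimes P_o)(\idd\otimes R_{A_H}(\l))\lceil_{\ell^2(VG)\otimes\bc\d_o} = A_G\otimes \big(P_oR_{A_H}(\l)P_o\big).
\]
Since $P_oR_{A_H}(\l)P_o = \langle R_{A_H}(\l)\d_o,\d_o\rangle\, P_o = g(\l)^{-1} P_o$ with $g(\l)$ as in \eqref{eksek3}, we get $S(\l) = g(\l)^{-1} A_G\otimes P_o$ acting on $\ell^2(VG)\otimes\bc\d_o$. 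Consequently
\[
(P - S(\l))^{-1} = \big(\idd_{\ell^2(VG)} - g(\l)^{-1}A_G\big)^{-1}\otimes P_o,
\]
and I would rewrite $(\idd - g(\l)^{-1}A_G)^{-1} = g(\l)\big(g(\l)\idd - A_G\big)^{-1} = g(\l) R_{A_G}(g(\l))$; this is where the composition $R_{A_G}\circ g$ enters. Invertibility here is guaranteed for $|\l|$ large by the Neumann-expansion argument already invoked after \eqref{k2s}.

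Next I would substitute into the second term of \eqref{k2s}, namely $R_{A_X}(\l)(P-S(\l))^{-1}DPR_{A_X}(\l)$. Writing everything out:
\[
(\idd\otimes R_{A_H}(\l))\big(g(\l)R_{A_G}(g(\l))\otimes P_o\big)(A_G\otimes P_o)(\idd\otimes P_o)(\idd\otimes R_{A_H}(\l)),
\]
and multiplying the base-graph tensor factors ($g(\l)R_{A_G}(g(\l))A_G$) and the fibre tensor factors ($R_{A_H}(\l)P_oP_oP_oR_{A_H}(\l) = R_{A_H}(\l)P_oR_{A_H}(\l)$, using $P_o^2=P_o$) yields exactly the second summand of \eqref{nn0110}. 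Adding the first term $R_{A_X}(\l) = \idd_{\ell^2(VG)}\otimes R_{A_H}(\l)$ completes the identity. Finally, I would note that although \eqref{k2s} was derived for $|\l| > \|A_X\| + \|D\|$, the stated range $|\l| > \|A_{G\comb(H,o)}\|$ is the correct domain: both sides of \eqref{nn0110} are analytic operator-valued functions on $\{|\l| > \|A_{G\comb(H,o)}\|\}$ (the left side because this is outside the spectrum of $A_Y$; the right side because there $g(\l) > \|A_G\|$, as one checks from the power-series expansion of $\langle R_{A_H}(\l)\d_o,\d_o\rangle$, so $R_{A_G}(g(\l))$ is defined), hence they agree there by analytic continuation from a neighborhood of infinity.

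The main obstacle I anticipate is the careful bookkeeping of domains and of the subspace $\overline{\car(D)}$: one must check that $g(\l)$ is real and exceeds $\|A_G\|$ on the relevant $\l$-range so that $R_{A_G}(g(\l))$ makes sense, and that the restriction/projection gymnastics with $P_o$ are legitimate (in particular that $A_G\otimes P_o$ genuinely has closed range equal to $\car(A_G)\otimes\bc\d_o$ or that one may harmlessly replace $\overline{\car(D)}$ by $\ell^2(VG)\otimes\bc\d_o$ throughout). These are routine but must be handled to make the Neumann-expansion justification of \eqref{k2s} and the analytic-continuation step airtight; the algebraic heart of the computation — identifying $S(\l)$ and inverting $P - S(\l)$ — is short.
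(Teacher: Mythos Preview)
Your proposal is correct and follows exactly the approach indicated in the paper: apply the Krein Resolvent Formula \eqref{k2s} to $X=\bigsqcup_{VG}H$ and $Y=G\comb(H,o)$ using the tensor identification $\ell^2(VY)=\ell^2(VG)\otimes\ell^2(VH)$ and $D=A_G\otimes P_o$, then invert $P-S(\l)$ on the fibre over $o$ to produce $g(\l)R_{A_G}(g(\l))$. Your write-up is simply a fleshed-out version of what the paper states in one sentence (with a pointer to \cite{FGI1}), and the domain and $\overline{\car(D)}$ caveats you flag are the only bookkeeping required.
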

\begin{proof}
The proof follows by a direct application of the definition of the comb product to the Krein Formula \eqref{k2s} to the complement of the disk of radius $\|A_{G \comb (H,o)}\|$ on which $I_{\ell^2(G)}-S(\l)$ is certainly invertible, by taking into account that 
$\ell^2(G \comb (H,o))=\ell^2(G)\otimes\ell^2(H)$, see e.g. the proof of Proposition 9.5 of \cite{FGI1}.
\end{proof}
Using the Secular Equation \eqref{1eksek1}, another relevant step is to decide whether the norm of the Adjacency of the Comb Graph $G \comb (H,o)$ is greater than that of 
$H$. 
\begin{Prop}
\label{eksek2}
The equation
\begin{equation}
\label{eksek1}
\langle R_{A_H}(\l)\d_o,\d_o\rangle\|A_G\|=1
\end{equation}
has at most one solution $\l_*>\|A_H\|$. If there is no solution $\l_*>\|A_H\|$, then $\|A_{G \comb (H,o)}\|=\|A_H\|$. If such a solution $\l_*>\|A_H\|$
exists, then
$\l_*=\|A_{G \comb (H,o)}\|$. 
\end{Prop}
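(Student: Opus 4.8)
The plan is to invoke the Secular Equation machinery described just before the statement, specialised to the comb perturbation. Recall that for $Y = G\comb(H,o)$ viewed as an additive perturbation of $X = \bigsqcup_{VG} H$, the perturbation operator is $D = A_G\otimes P_o$ and it acts on $\overline{\car(D)} = \overline{\car(A_G)}\otimes\bc\d_o$. The key observation is that $S(\l) = DPR_{A_X}(\l)\lceil_{\overline{\car(D)}}$ factorises nicely: since $A_X = \idd_{\ell^2(VG)}\otimes A_H$, we have $PR_{A_X}(\l) = \idd\otimes R_{A_H}(\l)$ restricted to the relevant subspace, and hence on $\overline{\car(A_G)}\otimes\bc\d_o$ the operator $S(\l)$ equals $\langle R_{A_H}(\l)\d_o,\d_o\rangle\, A_G\otimes P_o$ (identifying the one--dimensional factor). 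Therefore $DR_{A_X}(\l)P_{\ell^2(VZ)}$ is self--adjoint for real $\l>\|A_H\|$ (it is a positive scalar times $A_G$, tensored with a rank--one projection), so the remark following \eqref{1eksek1} applies: \eqref{eksek} is genuinely the Secular Equation in this case.

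First I would record that $\ell^2(VZ) = \ell^2(VG)\otimes\bc\d_o$ here (the sources of the added edges are exactly the vertices of the base fibre at the root), so that \eqref{eksek} reads $\spr\big(\langle R_{A_H}(\l)\d_o,\d_o\rangle A_G\otimes P_o\big) = 1$, which simplifies to $\langle R_{A_H}(\l)\d_o,\d_o\rangle\,\spr(A_G) = \langle R_{A_H}(\l)\d_o,\d_o\rangle\,\|A_G\| = 1$ — precisely equation \eqref{eksek1}. Next I would verify monotonicity and strict convexity of $\l\mapsto\langle R_{A_H}(\l)\d_o,\d_o\rangle\,\|A_G\|$ on $(\|A_H\|,+\infty)$: the map $\l\mapsto\langle R_{A_H}(\l)\d_o,\d_o\rangle = \sum_n \langle A_H^n\d_o,\d_o\rangle\l^{-n-1}$ is a strictly decreasing (when $A_G\neq 0$, i.e.\ $\|A_G\|>0$), strictly convex, positive function tending to $0$ as $\l\to+\infty$, so it hits the value $1/\|A_G\|$ at most once. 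This gives the "at most one solution $\l_*>\|A_H\|$" assertion.

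For the identification of $\l_*$ with $\|A_{G\comb(H,o)}\|$, I would follow the argument sketched after \eqref{1eksek1} (modelled on Theorem 3.2 of \cite{F}): the general Krein formula \eqref{k2s}, combined with the Neumann expansion valid for $|\l|>\|A_X\|+\|D\|$, shows $R_{A_Y}(\l)$ extends analytically to the unbounded component of $\{\l : \idd - S(\l)\text{ invertible}\}$, which is contained in $\text{P}(A_Y)$; and $\|A_Y\| = \max\{|\l|\le\|A_X\| \mid \idd_{P\ell^2}-S(\l)\text{ not invertible}\}$ by the displayed formula in Section~\ref{sec:prel}. Since $S(\l)$ on $\overline{\car(A_G)}\otimes\bc\d_o$ has spectral radius equal to $\langle R_{A_H}(\l)\d_o,\d_o\rangle\|A_G\|$ for $\l>\|A_H\|$ (using self--adjointness of $A_G$, so that the norm equals the spectral radius), $\idd - S(\l)$ is invertible for those $\l$ precisely when this quantity is $\neq 1$; by the monotonicity just established it is $>1$ for $\|A_H\|<\l<\l_*$ and $<1$ for $\l>\l_*$, with $\l_*$ the unique crossing point. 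Hence if $\l_*$ exists it is the largest real point where $\idd-S$ fails to be invertible, giving $\|A_Y\| = \l_* > \|A_H\|$; and if no such $\l_*$ exists in $(\|A_H\|,+\infty)$ then $\idd-S(\l)$ is invertible for all $\l>\|A_H\|$, so nothing in $(\|A_H\|,+\infty)$ lies in the spectrum coming from the perturbation, forcing $\|A_Y\| = \|A_X\| = \|A_H\|$.

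The main obstacle I expect is the careful handling of the non--bounded versus bounded regime: the Neumann--series form of \eqref{k2s} is only guaranteed for $|\l|>\|A_X\|+\|D\|$, whereas $\l_*$ may well lie in $(\|A_H\|, \|A_H\|+\|A_G\|)$, so one must justify the analytic continuation of $R_{A_Y}$ down to $\l_*$ through the region where $\idd - S(\l)$ stays invertible, and argue that the spectrum of $A_Y$ in $(\|A_H\|,+\infty)$ is exactly detected by the non--invertibility of $\idd - S(\l)$. This is exactly the content of the extension statement in Section~\ref{sec:prel} ("the formula \eqref{k2s} extends to any simply connected subset containing the point at infinity\ldots"), so modulo quoting that, together with the elementary convexity/monotonicity of $\langle R_{A_H}(\l)\d_o,\d_o\rangle$, the proof is short. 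A secondary (routine) point is the degenerate case $\|A_G\|=0$, i.e.\ $G$ edgeless, where \eqref{eksek1} has no solution and trivially $\|A_{G\comb(H,o)}\| = \|A_H\|$; this should be disposed of in one line.
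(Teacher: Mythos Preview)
Your proposal is correct and follows essentially the same route as the paper. The paper's proof is terser: it identifies $Z$ with $G$, computes $DR_{A_X}(\l)P_{\ell^2(VZ)}=\langle R_{A_H}(\l)\d_o,\d_o\rangle A_G$ acting on $\ell^2(VG)$, notes this is symmetric, and then simply invokes Theorem~3.2 of \cite{F} for the remainder --- whereas you unpack that invocation explicitly (monotonicity/convexity of $\l\mapsto\langle R_{A_H}(\l)\d_o,\d_o\rangle$, analytic continuation of the Krein formula, location of the largest non--invertibility point). One small inaccuracy: your claim that ``$\idd-S(\l)$ is invertible precisely when $\langle R_{A_H}(\l)\d_o,\d_o\rangle\|A_G\|\neq1$'' is too strong --- for $\l\in(\|A_H\|,\l_*)$ there may be further points where $1/\langle R_{A_H}(\l)\d_o,\d_o\rangle$ hits $\s(A_G)\setminus\{\|A_G\|\}$ --- but this does not affect your conclusion, since what you actually need (and correctly use) is that $\spr(S(\l))<1$ for $\l>\l_*$ and $\spr(S(\l_*))=1$, making $\l_*$ the largest real point of non--invertibility.
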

\begin{proof}
In this situation $Z$ coincides with $G$, and $X$ is the disjoint union of $\# VG$--copies of $H$. Then 
$$
DR_{A_X}(\l)P_{\ell^2(VZ)}=\langle R_{A_H}(\l)\d_o,\d_o\rangle A_G
$$ 
acting directly on $\ell^2(VZ)$, which is symmetric.
The proof now follows the lines of the analogous Theorem 3.2 of \cite{F}. 
\end{proof}
The foregoing step concerns the construction of a PF weight $v$ for the adjacency of the comb graph $G \comb (H,o)$, starting from a PF one $w$ for $A_G$.
\begin{Prop}
\label{prcfi}
For the Comb Product $G \comb (H,o)$ and $w\in\cw_G$, the following assertions hold true.
\begin{itemize}
\item[(i)] Suppose that $\langle R_{A_H}(\|A_H\|)\d_o,\d_o\rangle\geq\|A_G\|^{-1}$. Then $A_{G \comb (H,o)}$ is transient if and only if $A_G$ is transient. In addition, 
\begin{equation*}
v=w\otimes R_{A_H}(\|A_{G \comb (H,o)}\|)\d_o
\end{equation*}
provides a PF weight for $A_{G \comb (H,o)}$, which is the unique PF weight up to a multiplicative constant if $A_{G \comb (H,o)}$ is recurrent.
\item[(ii)] Suppose that $\langle R_{A_H}(\|A_H\|)\d_o,\d_o\rangle<\|A_G\|^{-1}$. Then $A_{G \comb (H,o)}$ is always transient. \end{itemize}
\end{Prop}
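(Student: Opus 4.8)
The plan is to verify directly that the proposed weight $v = w \otimes R_{A_H}(\|A_{G\comb(H,o)}\|)\delta_o$ is a generalized Perron--Frobenius eigenvector for $A_{G\comb(H,o)}$, and then to read off transience/recurrence from the Krein Formula \eqref{nn0110}. First I would treat case $(i)$, where $\langle R_{A_H}(\|A_H\|)\delta_o,\delta_o\rangle \geq \|A_G\|^{-1}$. Here the monotonicity and convexity of $\lambda\mapsto\langle R_{A_H}(\lambda)\delta_o,\delta_o\rangle$ on $(\|A_H\|,+\infty)$, combined with Proposition \ref{eksek2}, guarantee a solution $\lambda_* = \|A_{G\comb(H,o)}\| \geq \|A_H\|$ of the Secular Equation \eqref{eksek1}, i.e. $g(\lambda_*) = \|A_G\|$ with $g$ as in \eqref{eksek3} (when equality holds in the hypothesis, $\lambda_*=\|A_H\|$ and the formula is read as a limit). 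The key computation is to apply $A_{G\comb(H,o)}$, which by the comb structure equals $\idd_{\ell^2(VG)}\otimes A_H + A_G\otimes P_o$, to $v = w\otimes R_{A_H}(\lambda_*)\delta_o$. For the first summand one uses $A_H R_{A_H}(\lambda_*)\delta_o = \lambda_* R_{A_H}(\lambda_*)\delta_o - \delta_o$; for the second, $P_o R_{A_H}(\lambda_*)\delta_o = \langle R_{A_H}(\lambda_*)\delta_o,\delta_o\rangle\,\delta_o = g(\lambda_*)^{-1}\delta_o$, and $A_G w = \|A_G\| w = \spr(A_G)w$. Adding, the $\delta_o$-terms combine as $(-1 + \|A_G\|\,g(\lambda_*)^{-1})\,w\otimes\delta_o$, which vanishes precisely because $g(\lambda_*)=\|A_G\|$, leaving $\lambda_* v$. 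Since $\lambda_* = \|A_{G\comb(H,o)}\| = \spr(A_{G\comb(H,o)})$, this exhibits $v$ as a PF weight.

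For the transience/recurrence dichotomy in $(i)$, I would plug $\delta_{(x,o)} = \delta_x\otimes\delta_o$ into \eqref{nn0110} and examine $\langle R_{A_{G\comb(H,o)}}(\lambda)\delta_{(x,o)},\delta_{(x,o)}\rangle$ as $\lambda\downarrow\|A_{G\comb(H,o)}\|$. The first term $\langle R_{A_H}(\lambda)\delta_o,\delta_o\rangle$ stays bounded (it converges to $\|A_G\|^{-1}$ or less). The second term is $g(\lambda)\langle R_{A_G}(g(\lambda))A_G\delta_x,\delta_x\rangle\,\langle R_{A_H}(\lambda)\delta_o,\delta_o\rangle^2$; since $\lambda\mapsto g(\lambda)$ is continuous, increasing, and maps a right-neighborhood of $\|A_{G\comb(H,o)}\|$ onto a right-neighborhood of $\|A_G\|$, this blows up if and only if $\langle R_{A_G}(\mu)A_G\delta_x,\delta_x\rangle = \mu\langle R_{A_G}(\mu)\delta_x,\delta_x\rangle - 1$ blows up as $\mu\downarrow\|A_G\|$, i.e. if and only if $A_G$ is recurrent. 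This proves the equivalence, and uniqueness of the PF weight in the recurrent case follows from the general fact (cited from \cite{S}) that recurrence forces uniqueness up to scalar. Case $(ii)$, where $\langle R_{A_H}(\|A_H\|)\delta_o,\delta_o\rangle < \|A_G\|^{-1}$, is the situation in which \eqref{eksek1} has no solution $\lambda_* > \|A_H\|$, so by Proposition \ref{eksek2} $\|A_{G\comb(H,o)}\| = \|A_H\|$; here $g(\|A_H\|) = \langle R_{A_H}(\|A_H\|)\delta_o,\delta_o\rangle^{-1} > \|A_G\|$, so in \eqref{nn0110} the factor $R_{A_G}(g(\lambda))$ evaluated near $\lambda = \|A_H\|$ stays bounded (we are evaluating the resolvent of $A_G$ at a point strictly above $\|A_G\|$), and the bounded first summand $R_{A_H}(\lambda)$ combined with the bounded second summand shows $\langle R_{A_{G\comb(H,o)}}(\lambda)\delta_{(x,o)},\delta_{(x,o)}\rangle$ stays finite as $\lambda\downarrow\|A_H\|$, forcing transience. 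One should also check at a generic base-point $(x,h)$ with $h\neq o$, but by the base-point independence of the transience criterion (Section 6 of \cite{S}) the vertex $(x,o)$ suffices.

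The main obstacle I anticipate is the boundary case of $(i)$ where the hypothesis holds with equality, so that $\lambda_* = \|A_H\|$ and $R_{A_H}(\|A_H\|)\delta_o$ may fail to be in $\ell^2$ or the Krein Formula \eqref{nn0110} is only valid for $|\lambda| > \|A_{G\comb(H,o)}\|$ and must be extended by continuity down to the boundary; one must argue that $R_{A_H}(\lambda)\delta_o$ still converges pointwise to a genuine PF weight and that the recurrence computation goes through in the limit. A secondary technical point is justifying that $g(\lambda)$ is a genuine homeomorphism from the relevant right-neighborhood of $\|A_{G\comb(H,o)}\|$ to a right-neighborhood of $\|A_G\|$ — this uses strict monotonicity of $\lambda\mapsto\langle R_{A_H}(\lambda)\delta_o,\delta_o\rangle$, which in turn follows from the first resolvent identity and irreducibility of $A_H$. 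Everything else is a matter of expanding \eqref{nn0110} and tracking which factors remain bounded.
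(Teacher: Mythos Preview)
Your proposal is correct. The transience/recurrence arguments via the Krein Formula \eqref{nn0110} are essentially identical to the paper's. The genuine difference lies in how the PF weight property of $v$ is established.

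You verify it by a direct algebraic computation: apply $A_{G\comb(H,o)} = \idd\otimes A_H + A_G\otimes P_o$ to $v = w\otimes R_{A_H}(\lambda_*)\delta_o$, use $A_H R_{A_H}(\lambda_*)\delta_o = \lambda_* R_{A_H}(\lambda_*)\delta_o - \delta_o$ and $A_G w = \|A_G\|w$, and observe that the Secular Equation $g(\lambda_*)=\|A_G\|$ kills the residual $w\otimes\delta_o$ term. The paper instead takes an approximation route: it picks an exhaustion $\{G_n\}$ of $G$ realizing $w$ as the pointwise limit of finite-volume PF eigenvectors $w_n$, invokes Theorem 6.1 of \cite{FGI1} to write down the genuine $\ell^2$ PF eigenvector $v_n = w_n\otimes R_{A_H}(\|A_{G_n\comb(H,o)}\|)\delta_o$ of $A_{G_n\comb(H,o)}$, and then passes to the limit $v_n\to v$, checking that the eigenvalue equation survives termwise.

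Your route is shorter and more transparent for the statement as literally written. The paper's route, however, delivers the extra information that $v\in\cw_{G\comb(H,o)}$, i.e.\ that $v$ is itself a pointwise limit of finite-volume PF eigenvectors along an exhaustion of the comb; this is precisely what the later sections need when computing PF dimensions and taking thermodynamic limits of two-point functions. The paper also dispatches the boundary case of $(i)$ --- the obstacle you correctly flag --- by the same approximation device, now exhausting $H$ as well and using Lemma \ref{caficz} to keep $\|A_{G_n\comb(H_n,o)}\|>\|A_{H_n}\|$ strictly at every finite stage, so that the resolvent factor is always evaluated in the interior of the resolvent set before passing to the limit.
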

\begin{proof}
$(i)$  If $\langle R_{A_H}(\|A_H\|)\d_o,\d_o\rangle\geq\|A_G\|^{-1}$, then the smooth function in \eqref{eksek3}
$g(\l)\downarrow\|A_H\|$ if $\l\downarrow\|A_{G \comb (H,o)}\|$. By 
Proposition \ref{eksek2}, either $\langle R_{A_H}(\|A_H\|)\d_o,\d_o\rangle>\|A_G\|^{-1}$, which means that 
$\|A_{G \comb (H,o)}\|>\|A_H\|$, or $\langle R_{A_H}(\|A_H\|)\d_o,\d_o\rangle=\|A_G\|^{-1}$ which means that $A_H$ is transient.
In both situations, $\langle R_{A_H}(\|A_{G \comb (H,o)}\|)\d_o,\d_o\rangle$ is always finite. Thus,
$A_{G\comb (H,o)}$ is transient if and only if 
$A_G$ is transient by Krein Formula for the resolvent \eqref{nn0110}. Concerning the PF weight,
first we note that, if $A_H$ is transient, then $\langle R_{A_H}(\|A_H\|)\d_h,\d_o\rangle<+\infty$. With $|E_{h,o}|$ the number of edges connecting $h$ and $o$, and
$$
\frac{x}{\l-x}=\frac{\l}{\l-x}-1\,,
$$
it automatically follows from the fact that 
\begin{align*}
\sum_{h\sim o}|E_{h,o}|\langle R_{A_H}(\|A_H\|)&\d_h,\d_o\rangle=\langle A_HR_{A_H}(\|A_H\|)\d_o,\d_o\rangle\\
=&\|A_H\|\langle R_{A_H}(\|A_H\|)\d_o,\d_o\rangle-1\,,
\end{align*}
where the l.h.s. is made of a finite numbers of addenda, and the r.h.s. is bounded. To simplify the matter, we suppose that 
$\langle R_{A_H}(\|A_H\|)\d_o,\d_o\rangle>\|A_G\|^{-1}$. The case $\langle R_{A_H}(\|A_H\|)\d_o,\d_o\rangle=\|A_G\|^{-1}$ is straightforwardly obtained at the same way, by considering an exhaustion $\{(H_n, o)\}_{n\in\bn}$ also for $(H_n, o)$ and taking into account Lemma \ref{caficz} together with the fact that $A_H$ is transient.
Choose an exhaustion $\{G_n\}_{n\in\bn}$ for $G$, together with the sequence
$\{w_n\}_{n\in\bn}$ of the PF eigenvectors, all normalised as $w_n(\g)=w(\g)$ at a common root $\g\in G_n$, and extended to 0 on $VG\backslash VG_n$ such that $\lim_n w_n(g)=w(g)$, $g\in G$. It certainly exists as $w\in\cw_G$. As $\langle R_{A_H}(\|A_H\|)\d_o,\d_o\rangle>\|A_G\|^{-1}$, we can also suppose that 
$\|A_{G_n\comb (H,o)}\|>\|A_H\|$ for each $n\in\bn$. Theorem 6.1 of \cite{FGI1} assures that 
$$
v_n(g,h)=\langle R_{A_H}(\|A_{G_n \comb (H,o)}\|)\d_h,\d_o\rangle w_n(g)\,, \quad (g,h)\in VG_n\times VH\,,
$$
provides a PF eigenvector (unique with the chosen normalisation) for $G_n \comb (H,o)$.
As 
$\langle R_{A_H}(\|A_H\|)\d_o,\d_o\rangle>\|A_G\|^{-1}$, then 
$\|A_{G \comb (H,o)}\|>\|A_H\|$ and $R_{A_H}(\|A_{G_n \comb (H,o)}\|)\rightarrow R_{A_H}(\|A_{G \comb (H,o)}\|)$ in norm, as 
$\|A_{G_n \comb (H,o)}\|\uparrow\|A_{G \comb (H,o)}\|$. 
Thus, $v_n(g,h)\rightarrow v(g,h)$, point--wise. By taking into account that
$\|A_{G_n\comb (H,o)}\|\uparrow\|A_{G\comb (H,o)}\|\geq\|A_H\|$, we compute with
$$
A_{G\comb (H,o)}=A_G\otimes P_o+I\otimes A_H\,,
$$
and the analogous for the $A_{G_n\comb (H,o)}$,
\begin{align*}
\langle A_{G\comb (H,o)}v,&\d_g\otimes\d_h\rangle
=\langle A_G w,\d_g\rangle\langle R_{A_H}(\|A_{G \comb (H,o)}\|)\d_o,\d_o\rangle\d_{h,o}\\
+&\langle w,\d_g\rangle\langle A_HR_{A_H}(\|A_{G \comb (H,o)}\|)\d_o,\d_h\rangle\\
=&\|A_G\|\langle w,\d_g\rangle\langle R_{A_H}(\|A_{G \comb (H,o)}\|)\d_o,\d_o\rangle\d_{h,o}\\
+&\langle w,\d_g\rangle\langle A_HR_{A_H}(\|A_{G \comb (H,o)}\|)\d_o,\d_h\rangle\\
=&\lim_n\bigg(\|A_{G_n}\|\langle w_n,\d_g\rangle\langle R_{A_H}(\|A_{G_n \comb (H,o)}\|)\d_o,\d_o\rangle\d_{h,o}\\
+&\langle w_n,\d_g\rangle\langle A_HR_{A_H}(\|A_{G_n \comb (H,o)}\|)\d_o,\d_h\rangle\bigg)\\
=&\lim_n\bigg(\langle A_{G_n}w_n,\d_g\rangle\langle R_{A_H}(\|A_{G_n \comb (H,o)}\|)\d_o,\d_o\rangle\d_{h,o}\\
+&\langle w_n,\d_g\rangle\langle A_HR_{A_H}(\|A_{G_n \comb (H,o)}\|)\d_o,\d_h\rangle\bigg)\\
=&\lim_n\bigg(\langle A_{G_n\comb (H,o)}v_n,\d_g\otimes\d_h\rangle\bigg)\\
=&\lim_n\bigg(\|A_{G_n\comb (H,o)}\|\langle v_n,\d_g\otimes\d_h\rangle\bigg)\\
=&\|A_{G\comb (H,o)}\|\langle v,\d_g\otimes\d_h\rangle\,.
\end{align*}
Namely, the point--wise limit $\lim_n v_n(g,h)=v(g,h)$ provides a PF weight for $A_{G \comb (H,o)}$, which is unique if $A_{G\comb (H,o)}$ is recurrent (cf. \cite{S}), which happens if and only if $A_{G}$ is so.

$(ii)$ If $\langle R_{A_H}(\|A_H\|)\d_o,\d_o\rangle<\|A_G\|^{-1}$, then $A_H$ is transient, $\|A_{G \comb (H,o)}\|=\|A_H\|$, and
finally $\l\in[\|A_{G \comb (H,o)}\|,+\infty)\Rightarrow g(\l)>\|A_G\|$. Thus, 
$\langle R_{A_G}(g(\|A_{G \comb (H,o)}\|))A_G\d_{g},\d_{g}\rangle$ is finite for each $g\in G$, and
$\langle R_{A_H}(\|A_{G \comb (H,o)}\|)\d_o,\d_o\rangle$ is also finite because $A_H$ is transient. We then conclude that 
$A_{G \comb (H,o)}$ is always transient again by \eqref{nn0110}. 
\end{proof}
The final part of the present section is devoted to the particular cases useful in the sequel, that is when $G$ and/or $H$ are isomorphic to 
$\bz^d$. For this purpose, we use the boundary conditions for the Adjacency $A_{\bz^d}$. Indeed, fix the segment 
$$
\S_n:=\{-n,-n+1,\dots,0,\dots,n-1,n\}\subset\bz
$$
of $\bz$ made of $2n+1$ points, together with the finite circle group $\bt_{2n+1}$ obtained by $\S_n$ by adding only an edge.
We always refer to such a graph with {\it periodic boundary condition} directly as $\bt_{2n+1}$.
By using Fourier transform, the circle group $\bt$ and its powers $\bt^d$,

are also considered as the dual of $\bz^d$. We identify $\bt\sim[-\pi,\pi]$, the latter equipped with the sum operation modulus $2\pi$, and the normalised Haar measure $\frac{\di\th}{2\pi}$. 
 With an abuse of notations, the normalised Haar measures on $\bt^d_{2n+1}$ and $\bt^d$ are 
symbolically denoted by $\di m_n(\th)$, $\di m(\th)$ by omitting the dependence on the dimension $d$. 

The following result similar to Proposition \ref{infneg}, assures that for exhaustions with periodic boundary conditions for the bases and/or fiber spaces $G$, $H$, we still have additive finite volume approximations of the Adjacency which become negligible in the limit of infinite volume. 
\begin{Prop}
\label{pradneg}
Let $H=\bz^d$, and $G$ be equipped with the exhaustion $\{G_n\}_{n\in\bn}$. Then
$$
\lim_n\frac{|E(G_n\comb\bt^d_{2n+1})|-\big|E\big(\bigsqcup_{VG_n}\S^d_{n}\big)\big|}
{|V(G_n\comb\S^d_{n})|}=0\,.
$$
If in addition $G=\bz^\n$, again
$$
\lim_n\frac{|E(\bt^\n_{2n+1}\comb\bt^d_{2n+1})|-\big|E\big(\bigsqcup_{V\S^\n_n}\S^d_{n}\big)\big|}
{|V(\S^\n_n\comb\S^d_{n})|}=0\,.
$$
\end{Prop}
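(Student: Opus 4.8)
The plan is to prove both identities by the same elementary counting argument already used in Proposition~\ref{infneg} and Proposition~\ref{pradneg}'s antecedents, namely by computing the numerator exactly and dividing by the (comparatively huge) product-type denominator. First I would fix the notation: for a fixed $n$, write $N:=2n+1$ for the cardinality of the segment $\S_n\subset\bz$, so $|V\S^d_n|=N^d$ and $|V\bt^d_{N}|=N^d$ as well, since passing to periodic boundary conditions adds edges but not vertices. The key observation is that $G_n\comb\bt^d_{N}$ and $\bigsqcup_{VG_n}\S^d_n$ have exactly the same vertex set $VG_n\times\S^d_n$, and their edge sets differ in two disjoint ways: (a) within each fibre copy, $\S^d_n$ is replaced by $\bt^d_N$, which adds exactly one edge per coordinate direction per fibre, i.e.\ $d$ extra edges in each of the $|VG_n|$ fibres; and (b) the comb product glues the fibres together at the root $o$ by a copy of the base edge set, contributing $|EG_n|$ further edges. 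Hence the numerator is exactly $d\,|VG_n| + |EG_n|$.

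The second step is the estimate. Since $G$ has uniformly bounded degree $\deg_G$, we have $|EG_n|\le \tfrac12\deg_G|VG_n|\le \deg_G|VG_n|$, so the numerator is bounded by $(d+\deg_G)|VG_n|$. The denominator is $|V(G_n\comb\S^d_n)| = |VG_n|\cdot|V\S^d_n| = |VG_n|\cdot N^d$. Therefore the ratio is at most $(d+\deg_G)/N^d = (d+\deg_G)/(2n+1)^d \to 0$ as $n\to\infty$, which gives the first claim. For the second identity, one specialises $G=\bz^\n$ with the periodic exhaustion $G_n=\bt^\n_N$; now the numerator picks up an extra $\n$ edges from imposing periodic boundary conditions on the base as well, so it equals $d\,|V\S^\n_n| + \n\,|V\S^\n_n| + |E\S^\n_n|$ — again $O(|V\S^\n_n|) = O(N^\n)$ with the implied constant depending only on $d$ and $\n$ — while the denominator is $|V\S^\n_n|\cdot|V\S^d_n| = N^{\n+d}$. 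Dividing, the ratio is $O(N^{-d})\to 0$.

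I do not anticipate a genuine obstacle here; the statement is a routine bookkeeping lemma and the only thing to be careful about is getting the edge count in the comb product right (the gluing at the root must not be double counted, and one must recall that periodic boundary conditions on a segment of $\ge 2$ points add precisely one edge per direction, not two — the added edge identifies the two endpoints). If I wanted to be slightly slicker I would observe that in both displays the numerator is by construction $|E(\text{perturbed})\triangle E(\text{disjoint union})|$ restricted to the vertices of the finite box, so the claim is literally an instance of the density-zero condition from Section~\ref{sec:prel}, and one could simply invoke Proposition~\ref{infneg} after checking that imposing periodic boundary conditions changes the edge count by only $O(n^{d-1})$ or even $O(1)$ per fibre; but writing out the exact count $d|VG_n|+|EG_n|$ is just as short and self-contained.
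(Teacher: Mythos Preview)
Your overall strategy---count the extra edges exactly, then divide by the product-type vertex count---is precisely the paper's approach. However, your edge count in step~(a) is wrong for $d\ge 2$. Passing from the box $\S^d_n$ to the torus $\bt^d_{2n+1}$ does \emph{not} add only $d$ edges; it adds an entire $(d-1)$-dimensional face of wrap-around edges in each coordinate direction, namely $d(2n+1)^{d-1}$ edges in total. (Think of $d=2$: wrapping a square grid into a torus requires $2n+1$ new horizontal edges and $2n+1$ new vertical edges, not two edges.) Consequently the numerator in the first display is
\[
|VG_n|\cdot d(2n+1)^{d-1} + |EG_n|\,,
\]
not $d|VG_n|+|EG_n|$, and the ratio becomes $\dfrac{d}{2n+1}+\dfrac{|EG_n|}{|VG_n|(2n+1)^d}$, which still tends to $0$ but for the right reason. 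The same correction applies to your second computation.

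You actually anticipate this in your closing remark (``$O(n^{d-1})$ or even $O(1)$ per fibre''), but then commit to the wrong alternative in the main body. The paper derives the correct count $d(2n+1)^{d-1}$ by writing $A_{\bt^d_{2n+1}}-A_{\S^d_n}$ as a sum of $d$ tensor terms $I\otimes\cdots\otimes D\otimes\cdots\otimes I$ with $n_D=1$, each contributing $(2n+1)^{d-1}$ edges. Once you fix the count, your argument coincides with the paper's.
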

\begin{proof}
We start by noticing that for the adjacency of the graph $\G_1\times\G_2$,
$$
A_{\G_1\times\G_2}=A_{\G_1}\otimes I_{\ell^2(V\G_2)}+I_{\ell^2(V\G_1)}\otimes A_{\G_2}\,.
$$
In addition, if $D$ is a self--adjoint matrix describing a number of edges 
$$
n_D=\frac12\sum_{i,j}D_{ij}\,, 
$$
$D\otimes I_{\ell^2(J)}$ describes a number of edges
\begin{equation}
\label{perpre}
n_{D\otimes I_{\ell^2(J)}}=|J|n_D\,.
\end{equation}
To pass from $\S_n$ to $\bt_{2n+1}$, it is enough to add only one edge, that is 
$A_{\bt_{2n+1}}=A_{S_n}+D$ with
$n_D=1$. Thus,
$$
n_{A_{\bt^d_{2n+1}}}=n_{A_{\S^d_n}}+d(2n+1)^{d-1}\,.
$$
As $G$ is supposed of uniformly bounded degree, then
$$
\sup_n\frac{|EG_n|}{|VG_n|}<+\infty\,.
$$
A simple calculation yields
$$
\frac{|E(G_n\comb\bt^d_{2n+1})|-\big|E\big(\bigsqcup_{VG_n}\S^d_{n}\big)\big|}
{|V(G_n\comb\S^d_{n})|}
=\frac{|VG_n|d(2n+1)^{d-1}+|EG_n|}{|VG_n|(2n+1)^d}\rightarrow0\,.
$$
If we adopt the periodic boundary conditions also on the base space $G=\bz^\n$, by using again \eqref{perpre} and reasoning as above, we simply get 
$$
\frac{|E(\bt^\n_{2n+1}\comb\bt^d_{2n+1})|-\big|E\big(\bigsqcup_{V\S^\n_n}\S^d_{n}\big)\big|}
{|V(\S^\n_n\comb\S^d_{n})|}
=d\frac{(2n+1)^{d-1}+1}{(2n+1)^{d}}\rightarrow0\,.
$$
\end{proof}
In order to compute the norm of the finite volume approximation of the comb graphs $G\comb\bz^d$, we particularise \eqref{eksek3} by
putting 
\begin{equation}
\label{gnegne}
g_n(x):=\langle R_{A_{\bt_{2n+1}^d}}(x)\d_{\bf0},\d_{\bf0}\rangle^{-1}\,,\quad x\in(2d,+\infty)\,.  
\end{equation}
Even if it is not directly needed in the sequel, for the sake of completeness we provide some results which have a self--containing interest. 
Let $d\geq3$ and consider 
$$
f(\boldsymbol\th)=\frac{1}{\sum_{j=1}^d(1-\cos\th_j)}\,.
$$
together with its Fourier transform
$$
\hat f(\boldsymbol k)=\frac{1}{(2\pi)^d}\int_{\bt^d}\frac{e^{-\imath\langle {\bf k},{\boldsymbol\th}\rangle}}{\sum_{j=1}^d(1-\cos\th_j)}
\di^d{\boldsymbol\th}\,,
$$
which is meaningful as $f\in L^1(\bt^d,\di^d{\boldsymbol\th})$. We start by providing the following Tauberian theorem, probably known to the experts, concerning the  behaviour for $n\to +\infty$, of $\sum_{{\bf k}\in\S_n^d}|\hat f({\bf k})|^2$. The case 
$d>4$ is trivial because $f\in L^2(\bt^n)$, so we reduce the matter to the critical dimensions $d=3,4$.
\begin{Prop}
\label{astc}
For $n\rightarrow\infty$ we get
\begin{itemize}
\item[(i)] $\sum_{{\bf k}\in\S_n^d}|\hat f({\bf k})|^2\approx n$ for $d=3$,
\item[(ii)] $\sum_{{\bf k}\in\S_n^d}|\hat f({\bf k})|^2\approx\ln n$ for $d=4$.
\end{itemize}
\end{Prop}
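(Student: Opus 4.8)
The plan is to reduce the asymptotics of $\sum_{{\bf k}\in\S_n^d}|\hat f({\bf k})|^2$ to a tractable integral via Parseval on the finite circle group, and then to extract the growth rate from the singularity of $f$ at the origin. First I would observe that $\hat f({\bf k})$ is, up to normalisation, the $\ell^2(\bz^d)$ vector whose Fourier series is $f(\boldsymbol\th)=\big(\sum_{j=1}^d(1-\cos\th_j)\big)^{-1}$; equivalently $\hat f$ is (a constant times) the Green's function $\langle R_{A_{\bz^d}}(2d)\delta_{\bf k},\delta_{\bf 0}\rangle$ introduced just before Proposition \ref{astc}, since $\|A_{\bz^d}\|=2d$ and $2d\,\idd-A_{\bz^d}$ acts in Fourier space as multiplication by $2\sum_j(1-\cos\th_j)$. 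I would then use the periodic boundary conditions of Proposition \ref{pradneg}: replacing the box $\S_n^d$ by the finite circle group $\bt^d_{2n+1}$ and using the discrete Parseval identity on $L^2(\bt^d_{2n+1})$, the partial sum $\sum_{{\bf k}\in\S_n^d}|\hat f({\bf k})|^2$ is comparable (up to bounded factors and lower order terms coming from the boundary, negligible by amenability) to $\int_{\bt^d}|f_n(\boldsymbol\th)|^2\,dm(\boldsymbol\th)$, where $f_n$ is a suitable Fejér-type truncation of $f$ at frequency $n$. More concretely, one has $\sum_{|{\bf k}|_\infty\le n}|\hat f({\bf k})|^2$ squeezed between two expressions of the form $\int_{\bt^d} K_n(\boldsymbol\th)\,|f(\boldsymbol\th)|^2$-type kernels, or dually between sums of $|\hat f({\bf k})|^2$ which one estimates directly.

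The cleanest route is probably the direct one: estimate $|\hat f({\bf k})|$ for large $|{\bf k}|$ and sum. Since $f$ is smooth on $\bt^d\meno\{\bf 0\}$ and behaves like $|\boldsymbol\th|^{-2}$ near the origin (because $\sum_j(1-\cos\th_j)\sim\frac12|\boldsymbol\th|^2$), a standard stationary-phase / singularity-analysis argument gives $\hat f({\bf k})\sim c_d\,|{\bf k}|^{-(d-2)}$ as $|{\bf k}|\to\infty$ (this is the classical Newtonian-kernel asymptotics of the lattice Green's function — indeed $\hat f$ is essentially the massless free lattice propagator). Granting this, $|\hat f({\bf k})|^2\sim c_d^2\,|{\bf k}|^{-2(d-2)}$, and then
\[
\sum_{{\bf k}\in\S_n^d}|\hat f({\bf k})|^2\asymp \sum_{1\le|{\bf k}|\le n}|{\bf k}|^{-2(d-2)}\asymp\int_1^n r^{d-1-2(d-2)}\,dr=\int_1^n r^{3-d}\,dr,
\]
which is $\asymp n$ for $d=3$ and $\asymp\ln n$ for $d=4$, exactly the two claimed cases; note $r^{3-d}$ is integrable at infinity precisely when $d>4$, consistent with the stated triviality of the case $d>4$.

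I would organise the write-up in two steps: Step 1, establish the pointwise asymptotic $\hat f({\bf k})=c_d|{\bf k}|^{-(d-2)}(1+o(1))$ with an error term uniform enough to be summable against $|{\bf k}|^{d-1}$; Step 2, perform the elementary integral comparison above, being careful to separate the contribution near ${\bf k}={\bf 0}$ (finitely many bounded terms, irrelevant to the asymptotics) from the tail. The main obstacle is Step 1: controlling the remainder in the lattice Green's function asymptotics uniformly in the direction of ${\bf k}$ and with a rate good enough that the subleading terms do not contaminate the $n$ versus $\ln n$ dichotomy. This is where one needs a genuine Tauberian/Abelian input — e.g. a contour-shift or a splitting of $\int_{\bt^d}e^{-\imath\langle{\bf k},\boldsymbol\th\rangle}f(\boldsymbol\th)\,d^d\boldsymbol\th$ into a neighbourhood of $0$ (handled by rescaling $\boldsymbol\th=\mathbf{u}/|{\bf k}|$ and dominated convergence to the continuum integral $\int_{\br^d}e^{-\imath\langle\hat{\bf k},\mathbf u\rangle}|\mathbf u|^{-2}\,d^d\mathbf u$, which converges for $3\le d$) plus a rapidly-decaying contribution from away from $0$ obtained by integration by parts exploiting the smoothness of $f$ there. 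Once that estimate is in hand the rest is the routine summation sketched above.
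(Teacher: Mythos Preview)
Your direct approach is correct and lands on the same integral $\int_1^n r^{3-d}\,dr$ as the paper, but the organisation differs in a way worth noting. You propose first proving the pointwise asymptotic $\hat f({\bf k})\sim c_d|{\bf k}|^{-(d-2)}$ with a uniform error bound and then summing; you rightly flag the error control as the main obstacle. The paper sidesteps that obstacle entirely by doing the localisation at the $L^2$ level rather than pointwise: it writes $f=F_1+F_2$ with $F_1(\boldsymbol\th)=2h(\boldsymbol\th)/|\boldsymbol\th|^2$ for a radial cutoff $h$ and $F_2=(f-g)h+f(1-h)\in L^2(\bt^d)$, so by Cauchy--Schwarz only $\sum_{{\bf k}\in\S_n^d}|\widehat F_1({\bf k})|^2$ contributes to the asymptotics. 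Since $F_1$ is an explicit radial function compactly supported in $\br^d$, its Fourier transform is a Hankel transform whose decay $\approx\rho^{2-d}$ follows from a cited lemma, and the summation is immediate. Thus the paper trades your ``pointwise asymptotic with uniform remainder'' for the single qualitative fact $F_2\in L^2$, which is easier (it only needs $f-2/|\boldsymbol\th|^2$ bounded near the origin and $f$ bounded away from it). Your route works but costs more; the paper's decomposition is the shortcut you were looking for in your Step~1 sketch, applied one level up.

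Your opening paragraph about Parseval on $\bt^d_{2n+1}$ and Fej\'er-type truncations does not lead anywhere cleanly --- the partial sum $\sum_{|{\bf k}|_\infty\le n}|\hat f({\bf k})|^2$ is not an $L^2$-norm of a nice truncation of $f$ --- and you were right to abandon it for the direct estimate.
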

\begin{proof}
Let $0\leq h\leq 1$ be a spherically symmetric positive cut--off  with small support which is  identically 1 around 
$0\in[-\pi,\pi]^d\sim\bt^d$. Let $g(\boldsymbol\th)=\frac{2}{\th^2}$
arising from the Taylor expansion of the cosine function around 0, with $\th=\sqrt{\sum_{j=1}^d\th_j^2}$.
It is immediate to see that
$$
f=gh+(f-g)h+f(1-h)\,.
$$
Put $F_1:=gh$, $F_2:=(f-g)h+f(1-h)$. As $F_2\in L^2(\bt^d,\di^d{\boldsymbol\th})$, by using Holder Inequality one shows that only
$\sum_{{\bf k}\in S_n^d}|\widehat F_1({\bf k})|^2$ contributes to the asymptotics of $\sum_{{\bf k}\in\S_n^d}|\hat f({\bf k})|^2$. 
Thus, we reduce the matter to that term and consider the function $gh$ as directly defined on the whole $\br^d$. Thus, 
$F_1(\boldsymbol\th)=s(r(\boldsymbol\th))$ for the function $s(r)=\frac1{r^2}$ on $\br_+$. By using the characteristic function of a spherically symmetric small neighbourhood of ${\bf0}$ as the cut--off, the Fourier transform of $F_1$ (considered as a function on the whole $\br^d$) can be expressed by the Hankel transform $S(\r)$ of $s(r)$ by
$$
S(\r)=2\pi\r^{1-\frac{d}2}\int_0^1r^{\frac{d}2-2}J_{\frac{d}2-1}(2\pi\r r)\di r\,.
$$
Here, $J_{\frac{d}2-1}$ is the Bessel function of the first kind of order $\frac{d}2-1$. By using Lemma 3.3 of \cite{Ja}, we get
$S(\r)\approx \r^{2-d}$ which leads to
$$
\sum_{{\bf k}\in\S_n^d}|\widehat F_1({\bf k})|^2\approx\int_1^n\r^{3-d}\di\r\,,
$$
and the proof follows.
\end{proof}
When $H=\bz^d$, it is possible to use the periodic boundary conditions. Thus, in this situation 
$\{G_n\comb\bt_{2n+1}^d\}_{n\in\bn}$ is an exhaustion of $G\comb\bz^d$, with periodic boundary conditions on the finer space, provided that $\{G_n\}_{n\in\bn}$ is an exhaustion of 
$G$. Following this line, we summarise some results of interest in the following
\begin{Prop}
\label{csette}
Consider any PF weight $w\in\cw_G$. The following assertions hold true.
\begin{itemize}
\item[(i)] If $\langle R_{A_{\bz^d}}(2d)\d_o,\d_o\rangle>\|A_G\|^{-1}$, the weight $v=w\otimes r$ on $G\comb\bz^d$ with
$\eps=\frac{\|A_{G\comb\bz^d}\|}2-d>0$ and
\begin{equation}
\label{errkapo}
r({\bf k}):=\frac{\|A_G\|}{2(2\pi)^d}\int_{\bt^d}\frac{e^{-\imath\langle {\bf k},{\boldsymbol\th}\rangle}}{\eps+\sum_{j=1}^d(1-\cos\th_j)}
\di^d{\boldsymbol\th}
\end{equation}
gives a PF weight for $A_{G\comb\bz^d}$. 
\item[(ii)] If $\langle R_{A_{\bz^d}}(2d)\d_o,\d_o\rangle\leq\|A_G\|^{-1}$, the weight $v=w\otimes r$ on $G\comb\bz^d$ with
\begin{equation}
\label{errkapo1}
r({\bf k}):=1+\frac{\|A_{G}\|}{2(2\pi)^d}\int_{\bt^d}\frac{e^{-\imath\langle {\bf k},{\boldsymbol\th}\rangle}-1}{\sum_{j=1}^d(1-\cos\th_j)}
\di^d{\boldsymbol\th}
\end{equation}
gives a PF weight for $A_{G\comb\bz^d}$. 
\item[(iii)] If $\langle R_{A_{\bz^d}}(2d)\d_o,\d_o\rangle>\|A_G\|^{-1}$ (which always happens when 
$d\leq2$), then $d_{PF}(G\comb\bz^d)=d_{PF}(G)$. If 
$\langle R_{A_{\bz^d}}(2d)\d_o,\d_o\rangle=\|A_G\|^{-1}$, then  $d_{PF}(G\comb\bz^d)=d_{PF}(G)+1$
when $d=3$, and $d_{PF}(G\comb\bz^d)=d_{PF}(G)$ when $d\geq4$ (apart of a logarithmically divergent term which does not contribute to $d_{PF}(G\comb\bz^4)$). If $\langle R_{A_{\bz^d}}(2d)\d_o,\d_o\rangle<\|A_G\|^{-1}$, then 
$d_{PF}(G\comb\bz^d)=d_{PF}(G)+d$.
\end{itemize}
\end{Prop}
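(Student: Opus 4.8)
The plan is to reduce all three items to the explicit Fourier representation of the resolvent of $A_{\bz^d}$, together with the PF--weight formula of Proposition \ref{prcfi} for $(i)$--$(ii)$ and the Tauberian asymptotics of Proposition \ref{astc} for $(iii)$. Recall that the Fourier symbol of $A_{\bz^d}$ is $2\sum_{j=1}^d\cos\th_j$, so $\|A_{\bz^d}\|=2d$ and, for $\l>2d$,
$$
\langle R_{A_{\bz^d}}(\l)\d_o,\d_{\bf k}\rangle=\frac1{(2\pi)^d}\int_{\bt^d}\frac{e^{-\imath\langle{\bf k},\boldsymbol\th\rangle}}{\l-2\sum_{j=1}^d\cos\th_j}\,\di^d\boldsymbol\th\,.
$$
For $(i)$, the hypothesis $\langle R_{A_{\bz^d}}(2d)\d_o,\d_o\rangle>\|A_G\|^{-1}$ (automatic for $d\le2$, where $A_{\bz^d}$ is recurrent and this quantity is $+\infty$) lets Proposition \ref{eksek2} produce the unique secular root $\l_*=\|A_{G\comb\bz^d}\|>2d$, so Proposition \ref{prcfi}$(i)$ gives the PF weight $v=w\otimes R_{A_{\bz^d}}(\l_*)\d_o$. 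Putting $\eps:=\l_*/2-d>0$ one has $\l_*-2\sum_j\cos\th_j=2(\eps+\sum_j(1-\cos\th_j))$, and multiplying $R_{A_{\bz^d}}(\l_*)\d_o$ by the constant $\|A_G\|$ (which only normalises the entry at $o$ to $1$, by the Secular Equation \eqref{eksek1}) produces exactly the $r$ of \eqref{errkapo}.

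For $(ii)$, the hypothesis forces $\langle R_{A_{\bz^d}}(2d)\d_o,\d_o\rangle<\infty$, hence $d\ge3$ and $A_{\bz^d}$ transient, and Proposition \ref{eksek2} gives $\|A_{G\comb\bz^d}\|=2d$. With $c:=1-\|A_G\|\langle R_{A_{\bz^d}}(2d)\d_o,\d_o\rangle\ge0$, the $r$ of \eqref{errkapo1} reads $r({\bf k})=c+\|A_G\|\langle R_{A_{\bz^d}}(2d)\d_o,\d_{\bf k}\rangle=c+\tfrac{\|A_G\|}{2}\hat f({\bf k})$, with positive entries and $r(o)=1$. If $c=0$ this is the boundary instance of Proposition \ref{prcfi}$(i)$; if $c>0$, Proposition \ref{prcfi}$(ii)$ only yields transience, so I would verify $v:=w\otimes r$ is a PF weight directly, using $A_{G\comb\bz^d}=A_G\otimes P_o+I\otimes A_{\bz^d}$, the relation $A_Gw=\|A_G\|w$, the identity $A_{\bz^d}{\bf 1}=2d\,{\bf 1}$ for the constant sequence, and $(2d\,I-A_{\bz^d})\big(\tfrac{\|A_G\|}{2}\hat f\big)=\|A_G\|\d_o$ (read off in Fourier coordinates): these give $A_{G\comb\bz^d}v=2d\,v$ pointwise. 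To see that $v$ is moreover the genuine infinite--volume limit of finite--volume PF eigenvectors, I would rerun the argument in the proof of Proposition \ref{prcfi}$(i)$ along the exhaustion $G_n\comb\bt^d_{2n+1}$, where the PF eigenvector equals $w_n(g)\langle R_{A_{\bt^d_{2n+1}}}(\l_n)\d_{\bf k},\d_o\rangle$ with $\l_n\downarrow 2d$ the finite--volume secular root; the point is that the torus carries the constant Fourier mode where $\sum_j(1-\cos\th_j)$ vanishes, so $\langle R_{A_{\bt^d_{2n+1}}}(\l_n)\d_{\bf k},\d_o\rangle=\big(|\bt^d_{2n+1}|(\l_n-2d)\big)^{-1}+\langle R_{A_{\bz^d}}(2d)\d_{\bf k},\d_o\rangle+o(1)$ for $d\ge3$, the finite--volume Secular Equation forces $|\bt^d_{2n+1}|(\l_n-2d)\to\|A_G\|/c$, and the pointwise limit reproduces \eqref{errkapo1}. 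This zero--mode bookkeeping is the only genuinely delicate point; everything else is a routine identity.

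For $(iii)$, use the product exhaustion $\La_n:=G_n\comb\S^d_n$, so $v\lceil_{\ell^2(V\La_n)}=(w\lceil_{\ell^2(VG_n)})\otimes(r\lceil_{\ell^2(\S^d_n)})$ and $\|v\lceil_{\ell^2(V\La_n)}\|^2=\|w\lceil_{\ell^2(VG_n)}\|^2\cdot\sum_{{\bf k}\in\S^d_n}|r({\bf k})|^2$; by Definition \ref{pfcazol}$(ii)$ the first factor grows like $n^{d_{PF}(G)}$, so only the growth of $\sum_{{\bf k}\in\S^d_n}|r({\bf k})|^2$ matters. In case $(i)$ the symbol $(\eps+\sum_j(1-\cos\th_j))^{-1}$ is bounded and smooth, hence $r\in\ell^2(\bz^d)$ and this sum stays bounded: $d_{PF}(G\comb\bz^d)=d_{PF}(G)$. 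In the subcase $c=0$ of $(ii)$, $r=\tfrac{\|A_G\|}{2}\hat f$, so $\sum_{{\bf k}\in\S^d_n}|r({\bf k})|^2$ is, up to the constant $\|A_G\|^2/4$, precisely the quantity of Proposition \ref{astc}: $\approx n$ for $d=3$ (giving $d_{PF}(G)+1$), $\approx\ln n$ for $d=4$ (a logarithm, hence the exponent is unchanged), and bounded for $d\ge5$ since then $f\in L^2(\bt^d)$. In the subcase $c>0$ of $(ii)$, $r({\bf k})=c+\tfrac{\|A_G\|}{2}\hat f({\bf k})$ with $c>0$ and $\hat f({\bf k})\to0$, so $\sum_{{\bf k}\in\S^d_n}|r({\bf k})|^2=c^2|\S^d_n|+O\big(\sum_{{\bf k}\in\S^d_n}|\hat f({\bf k})|\big)+O\big(\sum_{{\bf k}\in\S^d_n}|\hat f({\bf k})|^2\big)\sim c^2|\S^d_n|\sim n^d$, the two error sums being of lower order (from $\hat f({\bf k})=O(|{\bf k}|^{2-d})$ and Proposition \ref{astc}), giving $d_{PF}(G\comb\bz^d)=d_{PF}(G)+d$.
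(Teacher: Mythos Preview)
Your proposal is correct and follows essentially the same route as the paper. The only organisational difference is that the paper treats $(i)$ and $(ii)$ in a single pass: it works directly with the finite--volume exhaustion $G_n\comb\bt^d_{2n+1}$, writes the PF eigenvector as $v_n=w_n\otimes r_n$ with $r_n({\bf k})=\tfrac{\|A_{G_n}\|}{2}\int_{\bt^d_{2n+1}}\frac{e^{-\imath\langle{\bf k},\boldsymbol\th\rangle}}{\eps_n+\sum_j(1-\cos\th_j)}\di m_n(\boldsymbol\th)$, isolates the zero--mode term $\frac{1}{\eps_n(2n+1)^d}$ (whose limit is precisely your constant $2c/\|A_G\|$), and then verifies the PF identity for the limit $v=w\otimes r$ by passing to the limit in the finite--volume identity --- exactly the ``zero--mode bookkeeping'' you sketched for $(ii)$. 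You instead short--circuit case $(i)$ by invoking Proposition~\ref{prcfi}$(i)$ and only run the finite--volume argument for $(ii)$; this is cleaner for $(i)$ but costs a case split. For $(iii)$ you give more detail than the paper (which merely cites Proposition~\ref{astc}), and your cross--term estimate in the $c>0$ subcase via $\hat f({\bf k})=O(|{\bf k}|^{2-d})$ is correct.
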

\begin{proof}
Denote $\di m_n$, $\di m$ the normalized Haar measures on $\bt_{2n+1}^d$, $\bt^d$ respectively. Fix an exhaustion $\{G_n\}$ of $G$ together with a PF weight $w\in\cw_G$. Define $\eps_n>0$ as the unique solution of the Secular Equation
$$
\frac{\|A_{G_n}\|}2\int_{\bt_{2n+1}^d}\frac{\di m_n({\boldsymbol\th})}{\eps_n+\sum_{j=1}^d(1-\cos\th_j)}=1
$$
corresponding to $A_{G_n\comb \bt_{2n+1}^d}$. It is straightforward to see that 
$$
\lim_n\eps_n=\eps=\frac{\|A_{G\comb\bz^d}\|}2-d\geq0\,.
$$
We also get
$$
\frac1{\eps_n(2n+1)^d}=\frac2{\|A_{G_n}\|}-\int_{\bt_{2n+1}^d\backslash\{{\bf 0}\}}
\frac{\di m_n({\boldsymbol\th})}{\eps_n+\sum_{j=1}^d(1-\cos\th_j)}
$$
which leads to
\begin{equation*}
\lim_n\frac1{\eps_n(2n+1)^d}=\frac2{\|A_{G}\|}-\int_{\bt^d}
\frac{\di m({\boldsymbol\th})}{\eps+\sum_{j=1}^d(1-\cos\th_j)}
\end{equation*}
by Lemma 10.9 of \cite{FGI1}.

(i) and (ii) Fix $w\in\cw_G$ with the corresponding finite volume approximations $\{w_n\}$, all normalised at 1 on a common root 
of $G_n$.  Define $r_n$ on $\bt_{2n+1}^d$ as 
\begin{align*}
&r_n({\bf k})=\frac{\|A_{G_n}\|}2\int_{\bt_{2n+1}^d}
\frac{e^{-\imath\langle {\bf k},{\boldsymbol\th}\rangle}}{\eps_n+\sum_{j=1}^d(1-\cos\th_j)}\di m_n({\boldsymbol\th})\\
=&\frac{\|A_{G_n}\|}2\bigg(\frac1{\eps_n(2n+1)^d}
+\int_{\bt_{2n+1}^d\backslash\{{\bf 0}\}}
\frac{e^{-\imath\langle {\bf k},{\boldsymbol\th}\rangle}}{\eps_n+\sum_{j=1}^d(1-\cos\th_j)}\di m_n({\boldsymbol\th})\bigg)\,.
\end{align*}
As $\eps_n>0$, then $v_n=w_n\otimes r_n$ is the PF eigenvector on $G_n\comb\bt_{2n+1}^d$, unique with the given normalisation. In addition, $\lim_nr_n({\bf k})=r({\bf k})=:\langle r,\d_{\bf k}\rangle$. We also compute
\begin{align*}
&\langle A_{\bt_n^d}r_n,\d_{\bf k}\rangle=\frac{\|A_{G_n}\|}2\bigg(\frac{2d}{\eps_n(2n+1)^d}
+\int_{\bt_{2n+1}^d\backslash\{{\bf 0}\}}
\frac{2(\sum_{j=1}^d\cos\th_j)e^{-\imath\langle {\bf k},{\boldsymbol\th}\rangle}}{\eps_n+\sum_{j=1}^d(1-\cos\th_j)}\di m_n({\boldsymbol\th})\bigg)\\
&\rightarrow2d
+\|A_{G}\|\int_{\bt^d}
\frac{(\sum_{j=1}^d\cos\th_j)e^{-\imath\langle {\bf k},{\boldsymbol\th}\rangle}-d}{\eps_n+\sum_{j=1}^d(1-\cos\th_j)}
\di m({\boldsymbol\th})=:\langle A_{\bz^d}r,\d_{\bf k}\rangle
\end{align*}
By following the same lines of the proof of Propositon \ref{prcfi}, and taking into account the previous computations, we get in all the situations, 
\begin{align*}
&\langle A_{G\comb \bz^d}v,\d_g\otimes\d_{\bf k}\rangle
=\langle A_G w,\d_g\rangle\langle r,\d_{\bf 0}\rangle\d_{{\bf k},{\bf 0}}
+\langle w,\d_g\rangle\langle A_{\bz^d}r,\d_{\bf k}\rangle\\
=&\|A_G\|\langle w,\d_g\rangle\langle r,\d_{\bf 0}\rangle\d_{{\bf k},{\bf 0}}
+\langle w,\d_g\rangle\langle A_{\bz^d}r,\d_{\bf k}\rangle\\
=&\lim_n\big(\|A_{G_n}\|\langle w_n,\d_g\rangle\langle r_n,\d_{\bf 0}\rangle\d_{{\bf k},{\bf 0}}
+\langle w_n,\d_g\rangle\langle A_{\bt_{2n+1}^d}r_n,\d_{\bf k}\rangle\big)\\
=&\lim_n\big(\langle A_{G_n}w_n,\d_g\rangle\langle r_n,\d_{\bf 0}\rangle\d_{{\bf k},{\bf 0}}
+\langle w_n,\d_g\rangle\langle A_{\bt_{2n+1}^d}r_n,\d_{\bf k}\rangle\big)\\
=&\lim_n\langle A_{G_n\comb\bt_{2n+1}^d}v_n,\d_g\otimes\d_{\bf k}\rangle
=\lim_n\big(\|A_{G_n\comb\bt_{2n+1}^d}\|\langle v_n,\d_g\otimes\d_{\bf k}\rangle\big)\\
=&\lim_n\big(\|A_{G_n\comb\bt_{2n+1}^d}\|\langle w_n,d_g\rangle\langle r_n,\d_{\bf k}\rangle\big)
=\|A_{G\comb\bz^d}\|\langle w_n,\d_g\rangle\langle r_n,\d_{\bf k}\rangle\\
=&\|A_{G\comb\bz^d}\|\langle v,\d_g\otimes\d_{\bf k}\rangle\,.
\end{align*}

(iii) It follows directly by (i) and (ii), together with Proposition \ref{astc}.
\end{proof}

\section{The graph $\bn$}
\label{sec:en}

Contrarily to $\bz$, the Pure Hopping model on the graph $\bn$, together with the comb graphs whose base space is $\bn$ itself exhibits very interesting new phenomena concerning the appearance of the BEC. In order to study the spectral properties of the Adjacency, the first step is to point out the differences between the one--side chain corresponding to $\bn$, and the two--sides one corresponding to $\bz$. The starting point will be the infinite volume limit of finite volume approximations. The simple difference is to take the one--side segment
$$
S_n:=\{0,1,\dots,n\}\subset\bn
$$
made of $n+1$ points whose common root is the initial point $0$, as the finite volume approximations, and perform the one--side limit. The difference is to start for 
$\bz$ with the segment $\S_n=\{-n,-n+1,\dots,0,\dots,n-1,n\}$ made of $2n+1$ points whose common root is the middle point 0. In this case, the infinite volume limit will be a two--sides one, or a limit involving the periodic boundary conditions without essentially affecting the analysis. The finite volume approximation adopting the boundary conditions is evidently not allowed for $\bn$. Namely, the chosen exhaustion $\{\La_n\}_{n\in\bn}$ for $\bn$ will be $\La_n=S_n$
The explicit calculation concerning the spectral properties of such finite graphs $S_n$ as before are reported in \cite{VM} to which we refer the reader without further mention. 

The first main difference will concern the PF dimension which is easily seen to be $1$ for $\bz$, and $3$ for $\bn$ as it is summarised in the following result. Indeed, denote $v_n$ the PF eigenvector on $\S_n$, normalised at $1$ on $0\in\S_n$, and extended at $0$ on $\bn\backslash\S_n$.
\begin{Prop}
\label{pfene}
The weight $v(k):=k+1$ is a PF one for $A_\bn$. Then $d_{PF}(\bn)=3$. In addition, 
$$
\lim_nv_n(k)=v(k)\,,\quad j\in\bn\,,
$$
and
\begin{equation}
 \label{nn010}
\frac{\big\|v\lceil_{\La_n}\big\|^2}{\|v_n\|^2}=\frac{2\pi^2}3+o(1)\,.
\end{equation}
\end{Prop}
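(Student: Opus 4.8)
The plan is entirely computational: verify the eigenvector equation for $v$ directly, then read off everything else from the classical closed--form diagonalisation of the path graph $S_n=\{0,1,\dots,n\}$ ($=\La_n$, cf. \cite{VM}).

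First I would check that $v(k)=k+1$ is a PF weight for $A_\bn$. On the one--sided chain one has $(A_\bn u)(0)=u(1)$ and $(A_\bn u)(k)=u(k-1)+u(k+1)$ for $k\geq1$; hence $(A_\bn v)(0)=v(1)=2=2v(0)$ and $(A_\bn v)(k)=k+(k+2)=2(k+1)=2v(k)$ for $k\geq1$, so $A_\bn v=2v$. Since $A_{S_n}=P_nA_\bn P_n$ and $\|A_{S_n}\|=2\cos\!\big(\tfrac{\pi}{n+2}\big)\uparrow 2$, we get $\|A_\bn\|=\spr(A_\bn)=2$, so $v$ is genuinely a PF weight with the right eigenvalue. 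For the PF dimension I would simply sum: $\big\|v\lceil_{\La_n}\big\|^2=\sum_{k=0}^n(k+1)^2=\tfrac{(n+1)(n+2)(2n+3)}{6}\sim\tfrac{n^3}{3}$, which gives $d_{PF}(\bn)=3$ by Definition \ref{pfcazol}.

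For the point--wise convergence and the ratio \eqref{nn010} I would use the explicit finite--volume PF eigenvector on $S_n$, normalised to $1$ at the root $0$ and extended by $0$ off $S_n$, namely
$$
v_n(k)=\frac{\sin\!\big(\tfrac{(k+1)\pi}{n+2}\big)}{\sin\!\big(\tfrac{\pi}{n+2}\big)}\,,\qquad 0\leq k\leq n\,.
$$
For fixed $k$, as $n\to\infty$ numerator and denominator are each asymptotic to their arguments, so $v_n(k)\to k+1=v(k)$. For $\|v_n\|^2$ I would evaluate the trigonometric sum in closed form: writing $\sin^2x=\tfrac12(1-\cos2x)$ and using $\sum_{\ell=1}^{n+1}\cos\!\big(\tfrac{2\pi\ell}{n+2}\big)=-1$ (the real part of the sum of the nontrivial $(n{+}2)$--th roots of unity equals $-1$), one obtains $\sum_{\ell=1}^{n+1}\sin^2\!\big(\tfrac{\ell\pi}{n+2}\big)=\tfrac{n+2}{2}$, hence
$$
\|v_n\|^2=\frac{n+2}{2\sin^2\!\big(\tfrac{\pi}{n+2}\big)}\sim\frac{(n+2)^3}{2\pi^2}\sim\frac{n^3}{2\pi^2}\,.
$$
Dividing the two asymptotics yields $\big\|v\lceil_{\La_n}\big\|^2/\|v_n\|^2\to(n^3/3)/(n^3/2\pi^2)=\tfrac{2\pi^2}{3}$, which is exactly \eqref{nn010}; in fact the ratio equals $\tfrac13(n+1)(2n+3)\sin^2\!\big(\tfrac{\pi}{n+2}\big)$, making the $o(1)$ term explicit.

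None of the steps is genuinely delicate; the only points requiring attention are the bookkeeping of the index shift $k\leftrightarrow k+1$ relating $\bn$ to the $(n{+}1)$--vertex path $S_n$, the closed--form evaluation of $\sum\sin^2$ via roots of unity, and keeping the error terms controlled. The ``main obstacle'' is thus merely assembling these standard facts about the spectrum of $S_n$ cleanly and citing \cite{VM} for them.
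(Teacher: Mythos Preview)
Your proof is correct and follows essentially the same line as the paper's: both use the explicit finite--volume PF eigenvector $v_n(k)=\sin\!\big(\tfrac{(k+1)\pi}{n+2}\big)/\sin\!\big(\tfrac{\pi}{n+2}\big)$, compute $\|v\lceil_{\La_n}\|^2=\tfrac{(n+1)(n+2)(2n+3)}{6}$ and $\|v_n\|^2$, and take the ratio. The only minor differences are that you verify $A_\bn v=2v$ directly from the recurrence (the paper obtains $v$ as the pointwise limit of the $v_n$) and you evaluate $\sum_{\ell=1}^{n+1}\sin^2\!\big(\tfrac{\ell\pi}{n+2}\big)=\tfrac{n+2}{2}$ exactly via roots of unity, whereas the paper gets the same asymptotic by a Riemann--sum argument; your exact evaluation is in fact slightly sharper and yields the closed form $\tfrac13(n+1)(2n+3)\sin^2\!\big(\tfrac{\pi}{n+2}\big)$ for the ratio.
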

\begin{proof}
For $0\leq k,m\leq n$, the PF eigenvector, normalised to $1$ at $m$ is given by
$$
v_n(k)=\frac{\sin\frac{\pi(k+1)}{n+2}}{\sin\frac{\pi(m+1)}{n+2}}\longrightarrow\frac{k+1}{m+1}
$$
when $n\to+\infty$, and the first part follows by putting $m=0$. Concerning the PF dimension, we have
$$
\|v\lceil_{\La_n}\big\|^2=\sum_{k=0}^n(k+1)^2=\frac{(n+1)(n+2)(2n+3)}6\approx\frac{n^3}3\,,
$$
that is $d_{PF}(\bn)=3$. Finally, by taking into account the Riemann sum approximations of an integral of a continuos function, we compute
$$
\|v_n\|^2=\frac{n+1}{\sin^2\frac{\pi}{n+2}}\left(\frac1{n+1}\sum_{k=0}^n\sin^2\frac{\pi(k+1)}{n+2}\right)
\approx\bigg(\frac{n}{\pi}\bigg)^3\int_0^\pi\sin^2mx\di x=\frac{n^3}{2\pi^2}\,.
$$
Collecting together we get \eqref{nn010}\,.
\end{proof}
We pass to compute the matrix elements of the resolvent $R_{A_\bn}$, its limit when $\l\downarrow\|A_\bn\|$, together with its finite volume approximations. To shorten the notations in the various proofs, we put $A_n:=A_{S_n}$, $A:=A_\bn$, with the corresponding resolvent $R_n(\l)$, 
$R(\l)$, respectively. In addition $Q_n:=P^\perp_{v_n}$ is the orthogonal projection onto the orthogonal complement of the one--dimensional subspace generated by the PF eigenvectors for $A_{S_n}$.
Let $\l\geq 2$, define inductively
$$
\G_0(\l)=+\infty\,,\qquad \G_{n+1}(\l)=\l-\frac1{\G_{n}(\l)}\,,\,\,n=0,1,\dots\,.
$$
When $\l=2$, we easily compute
 \begin{equation*}
\G_n:=\G_n(2)=\frac{n+1}n\,,\,\,n=1,2,\dots\,.
\end{equation*}
\begin{Prop}
 \label{nn11}
If $\l>2$, we get
$$
\langle R_{A_\bn}(\l)\d_k,\d_{k+n}\rangle
=\frac2{\l-\frac2{\G_k(\l)}+\sqrt{\l^2-4}}\bigg(\frac{\l-\sqrt{\l^2-4}}2\bigg)^n\,,\quad k,n\in\bn\,.
$$
In addition, 
$$
\lim_{\l\downarrow2}\langle R(\l)\d_k,\d_l\rangle=(k\wedge l)+1\,.
$$
\end{Prop}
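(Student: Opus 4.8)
The plan is to get $v:=R_{A_\bn}(\l)\d_k$ directly as the (unique $\ell^2$) solution of the difference equation $(\l\idd-A_\bn)v=\d_k$, exploiting that $A_\bn$ is the half--line free Jacobi matrix, i.e. the tridiagonal matrix with all off--diagonal entries equal to $1$, whose norm is $2$. Since $\l>2=\|A_\bn\|$ lies in the resolvent set of $A_\bn$, the vector $v$ belongs to $\ell^2(\bn)$; moreover, expanding $R_{A_\bn}(\l)=\sum_{m\geq0}\l^{-m-1}A_\bn^{\,m}$ (valid for $\l>\|A_\bn\|$) and using that $\bn$ is connected, every entry $v(j)=\langle R_{A_\bn}(\l)\d_k,\d_j\rangle$ is strictly positive, so all the ratios used below are legitimate. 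Finally, observe that $\langle R_{A_\bn}(\l)\d_k,\d_{k+n}\rangle=v(k+n)$ since $R_{A_\bn}(\l)$ is a real symmetric matrix.

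First I would analyse $v$ separately on the two regions cut out by the source at the vertex $k$. For $j>k$ the equation is homogeneous, $v(j+1)=\l v(j)-v(j-1)$, with fundamental solutions $x_\pm^{\,j}$, where $x_\pm=\tfrac{\l\pm\sqrt{\l^2-4}}2$ satisfy $x_+x_-=1$ and $x_+>1>x_->0$; the $\ell^2$ condition forces the $x_+$--component to vanish, hence $v(j)=v(k)\,x_-^{\,j-k}$ for every $j\geq k$, and in particular $v(k+n)=v(k)\,x_-^{\,n}$ and $v(k+1)=v(k)\,x_-$. For $0\leq j<k$ (a region present only when $k\geq1$) the equations are again homogeneous, supplemented by the boundary relation $\l v(0)=v(1)$ coming from the vertex $0$. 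Introducing the ratios $t_j:=v(j)/v(j-1)$ for $1\leq j\leq k$, the boundary relation gives $t_1=\l$, while the equation at each interior vertex rewrites as $t_{j+1}=\l-1/t_j$; this is exactly the recursion defining $\G_j(\l)$ with the same initial value $\G_1(\l)=\l$, so $t_j=\G_j(\l)$ by induction, and in particular $v(k-1)=v(k)/\G_k(\l)$.

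Next I would substitute these relations into the equation at the source vertex $k$, namely $\l v(k)-v(k-1)-v(k+1)=1$ (which for $k=0$ reads $\l v(0)-v(1)=1$). Using $v(k-1)=v(k)/\G_k(\l)$ together with the convention $1/\G_0(\l)=0$ — which makes the case $k=0$ uniform — and $v(k+1)=v(k)\,x_-$, one obtains $v(k)\bigl(\l-\tfrac1{\G_k(\l)}-x_-\bigr)=1$. Since $\l-x_-=\tfrac{\l+\sqrt{\l^2-4}}{2}$, this gives
$$
v(k)=\frac{2}{\l-\frac{2}{\G_k(\l)}+\sqrt{\l^2-4}}\,,
$$
and multiplying by $x_-^{\,n}$ yields the asserted closed form for $v(k+n)=\langle R_{A_\bn}(\l)\d_k,\d_{k+n}\rangle$.

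For the limit $\l\downarrow2$ it suffices, by symmetry of $R_{A_\bn}(\l)$, to take $k\leq l$ and write $l=k+n$. The expression above is an elementary function of $\l$ which extends continuously up to $\l=2$: there $\sqrt{\l^2-4}\to0$, $x_-\to1$, and $\G_k(2)=(k+1)/k$ for $k\geq1$ (resp. $\G_0(2)=+\infty$), so $\tfrac{2}{\G_k(\l)}\to\tfrac{2k}{k+1}$ (resp. $0$), whence the limit equals $\tfrac{2}{\,2-\frac{2k}{k+1}\,}=k+1=(k\wedge l)+1$; no interchange of limits is needed, one simply evaluates the closed form at $\l=2$. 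The computation is essentially routine; the only points requiring a little care are the $\ell^2$--uniqueness of the decaying solution on the half--line $\{k,k+1,\dots\}$ and the bookkeeping at the boundary vertex $0$, which is precisely why the continued fractions $\G_k$ arise naturally here.
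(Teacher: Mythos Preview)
Your proof is correct and follows essentially the same strategy as the paper's: both solve the tridiagonal difference equation $(\l\idd-A_\bn)v=\d_k$ by splitting at the source vertex, selecting the $\ell^2$ (i.e.\ $x_-$--type) solution to the right of $k$, and handling the finite segment to the left. The execution differs only in packaging: the paper uses a two--step transfer matrix with eigenvalue $\mu_-=x_-^2$ and solves the backward segment as an explicit finite linear system (leaning on computations from \cite{FGI1}), whereas you work with the single--step recursion and recognise the ratios $v(j)/v(j-1)$ as the continued fractions $\G_j(\l)$ directly. Your version is a bit more self--contained and makes the role of the $\G_k$ recursion transparent; the paper's formulation has the small advantage of giving the backward entries $\langle R(\l)\d_k,\d_l\rangle$ for $l<k$ explicitly as a product $\prod_{m=l+1}^k\G_m(\l)^{-1}$, which you obtain only implicitly.
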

\begin{proof}
By taking into account the computation in Section 8 of \cite{FGI1}, we obtain 
$$
\begin{pmatrix} 
	\langle R(\l)\d_k,\d_{k+2n}\rangle\\
	 \langle R(\l)\d_k,\d_{k+2n+1}\rangle\\
 \end{pmatrix}
 :=\begin{pmatrix} 
	 \a_{n+1}\\
	 \b_{n+1}\\
 \end{pmatrix}
 =(\m_-)^n\begin{pmatrix} 
	 \a_{0}\\
	 \b_{0}\\
 \end{pmatrix}\,,
$$
where $\m_-=\frac{\l^2-2-\l\sqrt{\l^2-4}}2$. The initial value $\begin{pmatrix} 
	 \a_{0}\\
	 \b_{0}\\
 \end{pmatrix}$ can be determined by solving the finite system
\begin{align*}
&\l\s_1-\s_2=0\,,\\
&-\s_{i-1}+\l\s_i-\s_{i+1}=0\,,\,\, i=2,\dots,k-1\,,\\
&-\s_{k-1}+\l\s_k-\a_0=0\,,\\
&-\s_{k}+\l\a_0-\b_0=1\,,\\
&\begin{pmatrix} 
	 \a_{0}\\
	 \b_{0}\\
 \end{pmatrix}
 =a\begin{pmatrix} 
	2\\
	 \l-\sqrt{\l^2-4}\\
 \end{pmatrix}
 \end{align*}
 relative to the backward portion of the chain which leads to
$$
\langle R(\l)\d_k,\d_l\rangle=\frac2{\bigg[\l-\frac2{\G_k(\l)}+\sqrt{\l^2-4}\bigg]\prod_{m=l+1}^k\G_m(\l)}\,,\,\,l=0,1,\dots k-1\,.
$$
The last part follows by direct computation with $\l=2$, as the r.h.s. of the formulae defining 
 $\langle R(\l)\d_k,\d_l\rangle$ are continuous for $\l\downarrow\|A_\bn\|=2$.
\end{proof}
We pass to the investigation of the thermodynamic limit, starting with that concerning the two--point function.
\begin{Prop}
\label{pefo}
Let $\l_n\geq\|A_{S_n}\|$ such that $\lim_n\l_n=2=\|A_\bn\|$. Then
$$
\lim_n\langle R_{S_n}(\l_n)Q_n\d_k,Q_n\d_l\rangle=\langle R_{A_\bn}(2)\d_k,\d_l\rangle\,.
$$
\end{Prop}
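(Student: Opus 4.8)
The plan is to diagonalise the finite--volume Adjacency $A_{S_n}$ explicitly, rewrite the left--hand side as a Riemann--type sum approximating a resolvent integral for $A_\bn$, and then evaluate that integral in closed form.

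First I would recall from \cite{VM} that $A_{S_n}$ has eigenvalues $2\cos\frac{\pi j}{n+2}$, $j=1,\dots,n+1$, with orthonormal eigenvectors $\phi_j^{(n)}(k)=\sqrt{\tfrac2{n+2}}\,\sin\frac{\pi j(k+1)}{n+2}$, $0\le k\le n$. The Perron--Frobenius eigenvector is $\phi_1^{(n)}$, which is proportional to $v_n$, so $Q_n=\sum_{j=2}^{n+1}|\phi_j^{(n)}\rangle\langle\phi_j^{(n)}|$ and $R_{S_n}(\l_n)Q_n=R_{S_n}(\l_n)-\frac1{\l_n-\|A_{S_n}\|}\,\frac{|v_n\rangle\langle v_n|}{\|v_n\|^2}$. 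Since $\l_n\ge\|A_{S_n}\|=2\cos\frac{\pi}{n+2}>2\cos\frac{2\pi}{n+2}\ge2\cos\frac{\pi j}{n+2}$ for all $j\ge2$, the operator $R_{S_n}(\l_n)Q_n$ is well defined and bounded --- removing the condensate mode $\phi_1^{(n)}$, along which $R_{S_n}(\l_n)$ diverges, is precisely what makes this possible. Expanding in the eigenbasis,
\begin{equation*}
\langle R_{S_n}(\l_n)Q_n\d_k,Q_n\d_l\rangle=\frac2{n+2}\sum_{j=2}^{n+1}\frac{\sin\frac{\pi j(k+1)}{n+2}\,\sin\frac{\pi j(l+1)}{n+2}}{\l_n-2\cos\frac{\pi j}{n+2}}=\frac2{\pi}\,\frac{\pi}{n+2}\sum_{j=2}^{n+1}g_n(x_j)\,,
\end{equation*}
where $x_j:=\pi j/(n+2)$ and $g_n(x):=\sin((k+1)x)\sin((l+1)x)/(\l_n-2\cos x)$.

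The core of the argument is the uniform bound $\sup_n\sup_{2\le j\le n+1}|g_n(x_j)|\le C(k,l)<+\infty$. Indeed $|\sin((k+1)x_j)\sin((l+1)x_j)|\le(k+1)(l+1)x_j^2$, while, using $\l_n\ge2\cos\frac{\pi}{n+2}$, the identity $\cos a-\cos b=2\sin\frac{a+b}2\sin\frac{b-a}2$ and $\sin\th\ge\frac2{\pi}\th$ on $[0,\frac{\pi}2]$, one gets for $2\le j\le n+1$
\begin{equation*}
\l_n-2\cos x_j\ \ge\ 2\Big(\cos\tfrac{\pi}{n+2}-\cos\tfrac{\pi j}{n+2}\Big)\ \ge\ \frac{4(j^2-1)}{(n+2)^2}\ \ge\ \frac3{\pi^2}\,x_j^2\,,
\end{equation*}
so that the two powers of $x_j$ cancel in $g_n(x_j)$. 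Granting this, on every interval $[\d,\pi]$ the functions $g_n$ converge uniformly to $g_\infty(x):=\sin((k+1)x)\sin((l+1)x)/(2-2\cos x)$ (the denominator of $g_n$ is bounded below there, since $\l_n\to2$), so the part of the sum with $x_j\ge\d$ is a genuine Riemann sum converging to $\frac2{\pi}\int_\d^\pi g_\infty$; the part with $x_j<\d$ is bounded by $\frac2{n+2}\,C\,\#\{j\ge2:x_j<\d\}\le\frac{2C}{\pi}\d$, and likewise $\frac2{\pi}\int_0^\d g_\infty\le\frac{\pi}2(k+1)(l+1)\d$, because $g_\infty$ is bounded with a removable singularity at $0$. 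Letting $n\to\infty$ and then $\d\downarrow0$ would yield $\lim_n\langle R_{S_n}(\l_n)Q_n\d_k,Q_n\d_l\rangle=\frac2{\pi}\int_0^\pi g_\infty(x)\di x$.

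It would then remain to evaluate the integral and identify the value. Taking $k\le l$ without loss of generality and using $2\sin((k+1)x)\sin((l+1)x)=[1-\cos((k+l+2)x)]-[1-\cos((l-k)x)]$ together with $2-2\cos x=2(1-\cos x)$ and the elementary identity $\frac1{\pi}\int_0^\pi\frac{1-\cos mx}{1-\cos x}\di x=m$ (which follows from $\big|\sum_{r=0}^{m-1}e^{irx}\big|^2=\frac{1-\cos mx}{1-\cos x}=m+2\sum_{i=1}^{m-1}(m-i)\cos ix$ and $\int_0^\pi\cos ix\,\di x=0$), one obtains $\frac2{\pi}\int_0^\pi g_\infty=\tfrac12\big[(k+l+2)-(l-k)\big]=k+1=(k\wedge l)+1$, which is exactly the value $\langle R_{A_\bn}(2)\d_k,\d_l\rangle$ furnished by Proposition \ref{nn11}. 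The main obstacle will be the estimate in the third paragraph: near the bottom of the spectrum the denominators $\l_n-2\cos x_j$ almost vanish, and one must prevent the low--lying ($j$ small) terms from accumulating into a divergence. This is precisely where discarding the genuinely singular $j=1$ mode via $Q_n$, and the hypothesis $\l_n\ge\|A_{S_n}\|$, are indispensable: together they force the coercivity $\l_n-2\cos x_j\gtrsim x_j^2$ that offsets the quadratic vanishing of the numerator and keeps the Riemann sum both bounded and convergent.
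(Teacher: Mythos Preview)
Your proof is correct and follows essentially the same route as the paper: diagonalise $A_{S_n}$ explicitly, write $\langle R_{S_n}(\l_n)Q_n\d_k,Q_n\d_l\rangle$ as a Riemann--type sum, show it converges to $\frac{2}{\pi}\int_0^\pi\frac{\sin((k+1)x)\sin((l+1)x)}{2(1-\cos x)}\,\di x$, and evaluate the integral to obtain $(k\wedge l)+1$. The only notable difference is in the technical control of the low--lying modes: the paper shifts the summation index, expands $\sin\frac{\pi(m+1)(k+1)}{n+2}$ via the addition formula into four terms and shows three of them vanish separately before applying a generalised dominated convergence theorem, whereas you obtain a clean uniform bound $|g_n(x_j)|\le\frac{\pi^2}{3}(k+1)(l+1)$ directly from the coercivity $\l_n-2\cos x_j\ge\frac{3}{\pi^2}x_j^{2}$ and then argue by splitting at a cutoff $\d$; your route is somewhat more economical, and your evaluation of the final integral via the Dirichlet--kernel identity $\frac{1}{\pi}\int_0^\pi\frac{1-\cos mx}{1-\cos x}\,\di x=m$ is equivalent to the paper's use of the Fej\'er kernel.
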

\begin{proof}
We start by noticing that 
\begin{align}
\label{ligez}
&\frac1{\pi}\int_0^\pi\sin^2x\di x
\leq\frac1{n+2}\sum_{k=1}^{n+1}\sin^2\frac{\pi k}{n+2}\longrightarrow\frac1{\pi}\int_0^\pi\sin^2x\di x=\frac12\nn\,,\\
&\frac1{n+2}\sum_{k=1}^{n+1}\sin^2\frac{\pi mk}{n+2}\longrightarrow\frac1{\pi}\int_0^\pi\sin^2mx\di x=\frac12\,,\quad m=1,2,\dots\,.
\end{align}
Define 
$$
\n_n:=\l_n-\|A_n\|=\l_n-2\cos\frac{\pi}{n+2}\,.
$$
By using the results in Section 5.4 of \cite{VM}, we compute after a bit of calculations involving trigonometric functions, 
\begin{align*}
\langle R_n&(\l_n)Q_n\d_k,Q_n\d_l\rangle\leq\frac1{2(n+1)}
\sum_{m=1}^{n}\frac{\big|\sin\frac{\pi(m+1)(k+1)}{n+2}\sin\frac{\pi(m+1)(l+1)}{n+2}\big|}
{\sin^2\frac{\pi m}{2(n+2)}}\\
\leq&\frac1{2(n+1)}\sum_{m=1}^{n}\bigg(
\frac{\big|\sin\frac{\pi m(k+1)}{n+2}\sin\frac{\pi m(l+1)}{n+2}\big|+
\sin\frac{\pi(k+1)}{n+2}\big|\sin\frac{\pi m(l+1)}{n+2}\big|}{\sin^2\frac{\pi m}{2(n+2)}}\\
&\quad\quad\quad\quad\quad\quad+\frac{\sin\frac{\pi(k+1)}{n+2}\big|\sin\frac{\pi m(l+1)}{n+2}\big|+
\sin\frac{\pi(k+1)}{n+2}\sin\frac{\pi(l+1)}{n+2}}
{\sin^2\frac{\pi m}{2(n+2)}}\bigg)\,.
\end{align*}
Concerning the last three addenda, we get
\begin{align*}
\frac1{2(n+1)}\sum_{m=1}^{n}
&\frac{\sin\frac{\pi(k+1)}{n+2}\big|\sin\frac{\pi m(l+1)}{n+2}\big|
+\sin\frac{\pi(k+1)}{n+2}\big|\sin\frac{\pi m(l+1)}{n+2}\big|+
\sin\frac{\pi(k+1)}{n+2}\sin\frac{\pi(l+1)}{n+2}}
{\sin^2\frac{\pi m}{2(n+2)}}\\
=&O\bigg(\frac1{n^2}\sum_{m=1}^{n}\frac{\sin\frac{\pi m}{n+2}}{\sin^2\frac{\pi m}{2(n+2)}}\bigg)
\leq O\bigg(\frac1{n}\int_{\frac1n}^\pi\frac{\sin x}{\sin^2\frac x2}\di x\bigg)\longrightarrow 0\,.
\end{align*}
Thus, by using a generalised version of Lebesgue Dominated Convergence Theorem (cf. Theorem 19 in Section 4.4 of \cite{RF})
and retaining the leading terms, we get by taking into account of \eqref{ligez},
\begin{align*}
&\langle R_n(\l_n)Q_n\d_k,Q_n\d_l\rangle=\sum_{m=2}^{n+1}\frac{\sin\frac{\pi m(k+1)}{n+2}\sin\frac{\pi m(l+1)}{n+2}}
{\n_n+2\big(\cos\frac{\pi}{n+2}-\cos\frac{\pi m}{n+2}\big)\sum_{k=1}^{n+1}\sin^2\frac{\pi mk}{n+2}}\\
\approx&\frac1{n+1}\sum_{m=2}^{n+1}\frac{\sin\frac{\pi m(k+1)}{n+2}\sin\frac{\pi m(l+1)}{n+2}}
{\cos\frac{\pi}{n+2}-\cos\frac{\pi m}{n+2}}
=\frac1{2(n+1)}\sum_{m=1}^{n}\frac{\sin\frac{\pi(m+1)(k+1)}{n+2}\sin\frac{\pi(m+1)(l+1)}{n+2}}
{\sin\frac{\pi(m+2)}{2(n+2)}\sin\frac{\pi m}{2(n+2)}}\\
\approx&\frac1{2(n+1)}\sum_{m=1}^{n}\frac{\sin\frac{\pi m(k+1)}{n+2}\sin\frac{\pi m(l+1)}{n+2}}
{\sin^2\frac{\pi m}{2(n+2)}}\longrightarrow\frac1{2\pi}\int_0^\pi\frac{\sin(k+1)x\sin(l+1)x}{\sin^2\frac x2}\di x\,,
\end{align*}
where the last step is justified as the integrand is continuous in $[0,\pi]$. Concerning the last integral and the case $k\neq l$, we compute by taking into account the  definition and the properties of Fej\'er kernel $\F_n$ (see e.g. \cite{KF}),
\begin{align*}
&\frac1{2\pi}\int_0^\pi\frac{\sin(k+1)x\sin(l+1)x}{\sin^2\frac x2}\di x
=\frac1{4\pi}\int_{-\pi}^\pi\frac{\cos|k-l|x-\cos(k+l+2)x}{2\sin^2\frac x2}\di x\\
=&\frac1{4\pi}\int_{-\pi}^\pi\frac{\sin^2(k+l+2)\frac x2}{\sin^2\frac x2}\di x-\frac1{4\pi}\int_{-\pi}^\pi\frac{\sin^2|k-l|\frac x2}
{\sin^2\frac x2}\di x\\
=&\frac{k+l+2}2\int_{-\pi}^\pi\F_{k+l+2}(x)\di x-\frac{|k-l|}2\int_{-\pi}^\pi\F_{|k-l|}(x)\di x\\
=&(k\wedge l)+1\,.
\end{align*}
Similarly, the case $k=l$ leads to
$$
\frac1{2\pi}\int_0^\pi\frac{\sin^2(k+1)x}{\sin^2\frac x2}\di x=(k+1)\int_{-\pi}^\pi\F_{2(k+1)}(x)\di x=k+1\,.
$$
\end{proof}
Now we pass to the infinite volume limit involving the density of particles. To do that we start with the following
\begin{Lemma}
\label{starct}
Let $\{b_n\}_{n\in\bn}\subset\br_+$ be a sequence of positive number such that $\lim_nb_n=0$. Then
$$
\lim_n\int_{\frac1{n}}^{1}\frac{\di x}{b_n+x^2}=+\infty\,.
$$
\end{Lemma}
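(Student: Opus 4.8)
The plan is to obtain a divergent lower bound by discarding the $n$-dependence of the lower endpoint. The point worth noting at the outset is that the exact size of $\int_{1/n}^{1}(b_n+x^2)^{-1}\di x$ genuinely depends on how the two null sequences $b_n$ and $1/n$ compare (it has the closed form $b_n^{-1/2}\big[\arctan(b_n^{-1/2})-\arctan\big((n\sqrt{b_n})^{-1}\big)\big]$), so attempting a direct computation of the limit is not the way to go. Instead, only the behaviour of the integrand away from $0$ is needed to force divergence, and there the small parameter $b_n$ is harmless.

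Concretely, I would fix $\eps\in(0,1)$. As soon as $n>1/\eps$ one has $1/n<\eps$, and since the integrand is positive,
\[
\int_{1/n}^{1}\frac{\di x}{b_n+x^2}\ \geq\ \int_{\eps}^{1}\frac{\di x}{b_n+x^2}.
\]
On the interval $[\eps,1]$ one has $b_n+x^2\geq x^2\geq\eps^2$, so
\[
0\ \leq\ \int_{\eps}^{1}\frac{\di x}{x^2}-\int_{\eps}^{1}\frac{\di x}{b_n+x^2}
=\int_{\eps}^{1}\frac{b_n\,\di x}{x^2(b_n+x^2)}\ \leq\ \frac{b_n}{\eps^4}\ \longrightarrow\ 0
\]
as $n\to\infty$, using $b_n\to0$. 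Hence $\int_{\eps}^{1}(b_n+x^2)^{-1}\di x\to\int_{\eps}^{1}x^{-2}\di x=\eps^{-1}-1$, and therefore
\[
\liminf_n\int_{1/n}^{1}\frac{\di x}{b_n+x^2}\ \geq\ \frac1\eps-1.
\]

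Finally, since $\eps\in(0,1)$ was arbitrary and $\eps^{-1}-1\to+\infty$ as $\eps\downarrow0$, the $\liminf$ is $+\infty$, so the limit exists and equals $+\infty$. There is essentially no obstacle in this argument; the only thing to be careful about is to perform the cutoff at a \emph{fixed} $\eps$ before letting $n\to\infty$, rather than trying to track the relative rates of $b_n$ and $1/n$.
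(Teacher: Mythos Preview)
Your proof is correct and takes a genuinely different, more elementary route than the paper.

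The paper computes the integral in closed form as
\[
\frac{1}{\sqrt{b_n}}\bigg(\arctan\frac{1}{\sqrt{b_n}}-\arctan\frac{1}{n\sqrt{b_n}}\bigg),
\]
rewrites this via the arctangent subtraction formula, and then performs a case analysis on subsequences according to whether $n\sqrt{b_n}$ tends to $0$, to a finite positive constant, or to $+\infty$, showing divergence in each case. In other words, the paper \emph{does} track the relative rates of $b_n$ and $1/n$, precisely the complication you deliberately sidestep. Your fixed-$\eps$ cutoff isolates a compact interval on which the integrand converges uniformly to $x^{-2}$, reducing everything to the single elementary fact that $\int_\eps^1 x^{-2}\,\di x=\eps^{-1}-1$ is unbounded as $\eps\downarrow0$. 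This buys a shorter argument with no case distinctions and no explicit antiderivative; the paper's approach, by contrast, gives slightly more information about the actual growth rate in each regime, which is not needed for the lemma as stated.
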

\begin{proof}
\begin{align*}
&\int_{\frac1{n}}^{1}\frac{\di x}{b_n+x^2}=\frac1{\sqrt{b_n}}\bigg(\arctan\frac1{\sqrt{b_n}}-\arctan\frac1{n\sqrt{b_n}}\bigg)\\
=&\frac1{\sqrt{b_n}}\arctan\frac{\frac1{\sqrt{b_n}}\big(1-\frac1n\big)}{1+\frac1{nb_n}}
\approx\frac1{\sqrt{b_n}}\arctan\frac1{\sqrt{b_n}+\frac1{n\sqrt{b_n}}}\,.
\end{align*}
Consider any subsequence $\{b_{n_k}\}_{k\in\bn}$ such that $n_k\sqrt{b_{n_k}}\to c$. If $c\in(0,+\infty)\cup\{+\infty\}$ then 
$\arctan\frac1{\sqrt{b_{n_k}}+\frac1{n_k\sqrt{b_{n_k}}}}\to K>0$. This means that 
$\int_{\frac1{n_k}}^{1}\frac{\di x}{b_{n_k}+x^2}\to+\infty$. If $c=0$ then
$$
\arctan\frac1{\sqrt{b_{n_k}}+\frac1{n_k\sqrt{b_{n_k}}}}\approx
\frac1{\sqrt{b_{n_k}}+\frac1{n_k\sqrt{b_{n_k}}}}\,.
$$
Thus, again
$$
\int_{\frac1{n_k}}^{1}\frac{\di x}{b_{n_k}+x^2}\approx
\frac1{b_{n_k}+\frac1{n_k}}\longrightarrow+\infty
$$
and the assertion follows.
\end{proof}
\begin{Prop}
\label{traccaza}
Let $\l_n>\|A_n\|$ such that $\lim_n\l_n=2=\|A\|$. Then
$$
\lim_n\t_n(Q_nR_{S_n}(\l_n)Q_n)=+\infty\,.
$$
\end{Prop}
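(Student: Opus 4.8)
The plan is to pass to the explicit diagonalisation of the one--sided path $S_n$ and to recognise $\t_n(Q_nR_{S_n}(\l_n)Q_n)$ as a Riemann--type sum which diverges because of the low--momentum modes. Recall (cf.\ \cite{VM}) that the eigenvalues of $A_{S_n}$ are $2\cos\frac{\pi m}{n+2}$, $m=1,\dots,n+1$, the largest being the Perron--Frobenius one $\|A_n\|=2\cos\frac{\pi}{n+2}$, i.e.\ the mode $m=1$, which is exactly the one removed by $Q_n$; hence $Q_n$ is the spectral projection of $A_{S_n}$ onto the span of the eigenvectors with $m\geq2$. Since $\l_n>\|A_n\|=\max\s(A_{S_n})$, the resolvent $R_{S_n}(\l_n)$ is well defined, it commutes with $Q_n$ (a spectral projection of $A_{S_n}$), and, recalling $|VS_n|=n+1$,
\begin{equation*}
\t_n(Q_nR_{S_n}(\l_n)Q_n)=\frac1{n+1}\sum_{m=2}^{n+1}\frac1{\l_n-2\cos\frac{\pi m}{n+2}}\,,
\end{equation*}
a sum of strictly positive terms, each denominator being $\geq\l_n-\|A_n\|>0$. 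So it suffices to bound this sum from below.

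\textbf{Reduction to an integral.} Set $\n_n:=\l_n-\|A_n\|=\l_n-2\cos\frac{\pi}{n+2}>0$; since $\l_n\to2$ and $\|A_n\|\to2$ one has $\n_n\to0$. Writing $\l_n-2\cos\frac{\pi m}{n+2}=\n_n+2\big(\cos\frac{\pi}{n+2}-\cos\frac{\pi m}{n+2}\big)$ and applying the product--to--sum identity together with $\sin t\leq t$ on $[0,\pi/2]$ (both angles lie there when $2\leq m\leq n+1$),
\begin{equation*}
\l_n-2\cos\frac{\pi m}{n+2}=\n_n+4\sin\frac{\pi(m+1)}{2(n+2)}\sin\frac{\pi(m-1)}{2(n+2)}\leq\n_n+\frac{\pi^2m^2}{(n+2)^2}\,.
\end{equation*}
Since $t\mapsto\big(\n_n+\pi^2t^2/(n+2)^2\big)^{-1}$ is decreasing, comparing the sum with the corresponding integral and rescaling $x=t/(n+2)$ gives
\begin{equation*}
\t_n(Q_nR_{S_n}(\l_n)Q_n)\geq\frac1{n+1}\sum_{m=2}^{n+1}\frac1{\n_n+\pi^2m^2/(n+2)^2}\geq\frac{n+2}{\pi^2(n+1)}\int_{2/(n+2)}^{1}\frac{\di x}{\n_n/\pi^2+x^2}\,.
\end{equation*}

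\textbf{Conclusion.} In the last bound the prefactor $\frac{n+2}{\pi^2(n+1)}$ converges to $\pi^{-2}>0$, while $b_n:=\n_n/\pi^2$ is a sequence of positive numbers with $b_n\to0$, and the lower limit $2/(n+2)\to0$. Lemma \ref{starct}---whose proof goes through verbatim with $2/(n+2)$ in place of $1/n$; alternatively, for each fixed $\eta\in(0,1)$ one has $\int_{2/(n+2)}^{1}\frac{\di x}{b_n+x^2}\geq\int_{\eta}^{1}\frac{\di x}{b_n+x^2}$ for $n$ large, and Fatou's lemma gives $\liminf_n\int_{\eta}^{1}\frac{\di x}{b_n+x^2}\geq\int_{\eta}^{1}\frac{\di x}{x^2}=\eta^{-1}-1$ with $\eta$ arbitrary---implies $\int_{2/(n+2)}^{1}\frac{\di x}{b_n+x^2}\to+\infty$. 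Therefore $\t_n(Q_nR_{S_n}(\l_n)Q_n)\to+\infty$, as claimed.

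\textbf{Where the difficulty lies.} There is no deep obstacle: the one--dimensionality of $\bn$ makes the spectrum fully explicit, and the divergence comes solely from the modes with $m$ of order $(n+2)\sqrt{\n_n}$, which furnish a divergent Riemann sum no matter how fast $\l_n\to2$. The only points needing care are the correct identification of the single eigenvalue suppressed by $Q_n$, keeping the inequality $\sin t\leq t$ oriented so that it produces a \emph{lower} bound for $\t_n$, and the purely cosmetic discrepancy between the lower integration limit $2/(n+2)$ obtained here and the normalisation $1/n$ in the statement of Lemma \ref{starct}.
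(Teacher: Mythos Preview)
Your proof is correct and follows essentially the same route as the paper: write $\t_n(Q_nR_{S_n}(\l_n)Q_n)$ explicitly via the spectral decomposition of $A_{S_n}$, bound the trigonometric denominator $\n_n+2\big(\cos\tfrac{\pi}{n+2}-\cos\tfrac{\pi m}{n+2}\big)$ from above by a quadratic in $m/(n+2)$, pass to an integral, and invoke Lemma~\ref{starct}. The only differences are cosmetic: the paper shifts the summation index and uses the addition formula $\cos\a-\cos(\a+\g)=2\cos\a\sin^2(\g/2)+\sin\a\sin\g$ before bounding by $8\sin^2(\g/2)$, whereas you use the product formula $\cos A-\cos B=2\sin\tfrac{A+B}{2}\sin\tfrac{B-A}{2}$ and apply $\sin t\leq t$ to both factors directly, which is slightly cleaner; and your Fatou-type alternative to Lemma~\ref{starct} is a nice self-contained shortcut that the paper does not mention.
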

\begin{proof}
Again by using the explicit computations in in Section 5.4 of \cite{VM}, and for $\n_n$ as before with $b_n=\n_n/8$, we get
\begin{align*}
\t_n&(Q_nR_n(\l_n)Q_n)=\frac1{n+1}\sum_{m=2}^{n+1}\frac1{\n_n+2\big(\cos\frac\pi{n+2}-\cos\frac{\pi m}{n+2}\big)}\\
=&\frac1{n+1}\sum_{m=1}^{n}\frac1{\n_n+4\big(\cos\frac\pi{n+2}\sin^2\frac{\pi m}{2(n+2)}
+\sin\frac\pi{n+2}\sin\frac{\pi m}{2(n+2)}\cos\frac{\pi m}{2(n+2)}\big)}\\
\geq&\frac1{(n+1)}\sum_{m=2}^{n}\frac1{b_n+\sin^2\frac{\pi m}{2(n+2)}}
\geq\frac1{4\pi}\int_{\frac\pi{(n+2)}}^{\frac{(n+1)\pi}{2(n+2)}}\frac{\di x}{b_n+\sin^2 x}\\
&\geq\frac1{4\pi}\int_{\frac\pi{(n+2)}}^{\frac{(n+1)\pi}{2(n+2)}}\frac{\di x}{b_n+x^2}
\approx\frac1{4\pi}\int_{\frac\pi{(n+2)}}^{\pi}\frac{\di x}{b_n+x^2}\rightarrow+\infty
\end{align*}
by Lemma \ref{starct}.
\end{proof}
Concerning the thermodynamic limit for the Adjacency of the graph $\bn$, the result in Proposition \ref{traccaza} is in accordance to the fact that the critical density for the Pure Hopping model on $\bn$ is infinite (cf. Remark 8.4 of \cite{FGI1}), whereas Proposition \ref{pefo} is compatible with the fact that the Adjacency $A_\bn$ is transient. The latter property is the necessary and sufficient condition for the existence of locally normal states exhibiting BEC, see Theorem \ref{tretre}. The next theorem summarise the previous results and describes in details how it is possible to construct the states \eqref{kmsbecd} by infinite volume limit of the Bose--Gibbs grand--canonical finite volume ensemble.
\begin{Thm}
\label{liift}
Let $D\geq0$, and $v(k):=k+1$, $k\in\bn$, be the PF weight of $A_\bn$ together with the finite dimensional PF eigenvectors 
$\{v_n\}_{n\in\bn}$ for $A_{S_n}$, all normalized at 1 on the common root 0. For each sequence of the chemical potentials
$\{\m_n\}_{n\in\bn}$ with $\m_n<\|A_\bn\|-\|A_{S_n}\|$, such that $\lim_n\m_n=0$ and 
$$
\lim_{\m_n\to0}\frac1{\|v_n\|^2(\|A_\bn\|-\|A_{S_n}\|-\m_n)}=D\,,
$$
we get 
\begin{align*}
&\lim_n\left\langle\left(e^{(\|A_\bn\|-\m_n)\idd_{\ell^2(S_n)}-A_{S_n}}-\idd_{\ell^2(S_n)}\right)^{-1}
P_{\ell^2(S_n)}u_1,P_{\ell^2(S_n)}u_2\right\rangle\\
=&\left\langle\left(e^{\|A_\bn\|\idd-A_\bn}-\idd\right)^{-1}u_1,u_2\right\rangle+D\langle u_1,v\rangle\langle v,u_2\rangle\quad u_1,u_2\in\gph\,.
\end{align*}
\end{Thm}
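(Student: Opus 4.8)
The aim is to realize the states~\eqref{kmsbecd} of Theorem~\ref{tretre} for $G=\bn$ as infinite volume limits of the finite volume Bose--Gibbs ensembles~\eqref{sato}, with the parameter $D$ selected by the prescribed rate of the chemical potentials. Throughout put $P_{n}:=P_{\ell^{2}(S_{n})}$, $A_{n}:=A_{S_{n}}$, and let $\big(\l^{(n)}_{m},\phi^{(n)}_{m}\big)_{m=1}^{n+1}$ be the orthonormal eigensystem of $A_{n}$, with $\l^{(n)}_{m}=2\cos\frac{\pi m}{n+2}$ and $\phi^{(n)}_{m}(k)=\sqrt{\tfrac{2}{n+2}}\sin\frac{\pi m(k+1)}{n+2}$ (cf.~\cite{VM}), so that $\phi^{(n)}_{1}=v_{n}/\|v_{n}\|$ is the Perron--Frobenius direction and $Q_{n}:=\idd-|\phi^{(n)}_{1}\rangle\langle\phi^{(n)}_{1}|$. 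Write $\nu_{n}:=\|A_{\bn}\|-\|A_{n}\|-\m_{n}$ (the denominator appearing in the hypothesis) and $\m^{(n)}_{m}:=\|A_{\bn}\|-\l^{(n)}_{m}-\m_{n}$; the constraint $\m_{n}<\|A_{\bn}\|-\|A_{n}\|$ forces $\m^{(n)}_{m}\ge\nu_{n}>0$ for all $m$, while $\m^{(n)}_{1}=\nu_{n}$. The plan is to feed the elementary splitting $\tfrac1{e^{x}-1}=\tfrac1x+f(x)$ of~\eqref{resam}, valid for $x>0$, into the spectral resolution of the positive operator $e^{H_{n}-\m_{n}\idd}-\idd$ on $\ell^{2}(S_{n})$, where $H_{n}:=\|A_{\bn}\|\idd-A_{n}$, obtaining, with $T_{n}:=Q_{n}R_{A_{n}}(2-\m_{n})Q_{n}$, the exact identity
\[
\big(e^{H_{n}-\m_{n}\idd}-\idd\big)^{-1}=\frac{|v_{n}\rangle\langle v_{n}|}{\nu_{n}\|v_{n}\|^{2}}+T_{n}+f\big(2\idd-A_{n}-\m_{n}\idd\big),
\]
and then to identify $\lim_{n}\langle M_{n}P_{n}u_{1},P_{n}u_{2}\rangle$ for each of the three summands $M_{n}$.

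For the rank--one (condensate) term one uses $v_{n}(k)=\sin\frac{\pi(k+1)}{n+2}\big/\sin\frac{\pi}{n+2}\le\frac{\pi}{2}(k+1)=\frac{\pi}{2}\,v(k)$ for $k\le n$, together with $v_{n}(k)\to v(k)$ pointwise; since Theorem~\ref{tretre} provides $\sum_{k}|u_{i}(k)|\,v(k)<+\infty$ for $u_{i}\in\gph$, dominated convergence gives $\langle P_{n}u_{i},v_{n}\rangle\to\langle u_{i},v\rangle$, and then the hypothesis $\nu_{n}^{-1}\|v_{n}\|^{-2}\to D$ yields $\big\langle\frac{|v_{n}\rangle\langle v_{n}|}{\nu_{n}\|v_{n}\|^{2}}P_{n}u_{1},P_{n}u_{2}\big\rangle\to D\langle u_{1},v\rangle\langle v,u_{2}\rangle$. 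For the bounded functional--calculus term, $f$ is continuous and bounded on a fixed compact interval containing $\s\big(2\idd-A_{n}-\m_{n}\idd\big)$ for all large $n$ together with $\s\big(2\idd-A_{\bn}\big)=[0,4]$; since $A_{n}\to A_{\bn}$ strongly by finite propagation (indeed $A_{n}\d_{k}=A_{\bn}\d_{k}$ whenever $n\ge k+1$) and $\m_{n}\to0$, approximating $f$ uniformly by polynomials gives $f\big(2\idd-A_{n}-\m_{n}\idd\big)P_{n}\to f\big(2\idd-A_{\bn}\big)$ strongly, whence $\langle f(2\idd-A_{n}-\m_{n}\idd)P_{n}u_{1},P_{n}u_{2}\rangle\to\langle f(2\idd-A_{\bn})u_{1},u_{2}\rangle$.

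The substantive step is $\langle T_{n}P_{n}u_{1},P_{n}u_{2}\rangle\to\langle R_{A_{\bn}}(2)u_{1},u_{2}\rangle$. Proposition~\ref{pefo}, applied with $\l_{n}=2-\m_{n}$ (legitimate since $\m_{n}<\|A_{\bn}\|-\|A_{n}\|$ and $\m_{n}\to0$), already yields the entrywise convergence $\langle T_{n}\d_{k},\d_{l}\rangle\to\langle R_{A_{\bn}}(2)\d_{k},\d_{l}\rangle=(k\wedge l)+1$, the last equality being Proposition~\ref{nn11}. To promote this to the bilinear limit I need an $n$--uniform bound on the entries of $T_{n}$ that is summable against $u_{1},u_{2}$. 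Since $T_{n}\ge0$, Cauchy--Schwarz reduces this to the diagonal, and there the key input is the gap estimate $\m^{(n)}_{m}=\nu_{n}+\l^{(n)}_{1}-\l^{(n)}_{m}\ge\l^{(n)}_{1}-\l^{(n)}_{m}=4\sin\frac{\pi(m+1)}{2(n+2)}\sin\frac{\pi(m-1)}{2(n+2)}\gtrsim m^{2}(n+2)^{-2}$ for $m\ge2$, combined with $|\phi^{(n)}_{m}(k)|^{2}\lesssim(n+2)^{-1}\min\big(1,m^{2}(k+1)^{2}(n+2)^{-2}\big)$: splitting $\langle T_{n}\d_{k},\d_{k}\rangle=\sum_{m\ge2}|\phi^{(n)}_{m}(k)|^{2}/\m^{(n)}_{m}$ at $m\approx(n+2)/(k+1)$ gives $\langle T_{n}\d_{k},\d_{k}\rangle\le C(k+1)$ with $C$ independent of $n$, hence $|\langle T_{n}\d_{k},\d_{l}\rangle|\le C\sqrt{(k+1)(l+1)}$. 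Since $\sum_{k}|u_{i}(k)|\sqrt{k+1}\le\sum_{k}|u_{i}(k)|\,v(k)<+\infty$, the double series $\langle T_{n}P_{n}u_{1},P_{n}u_{2}\rangle=\sum_{k,l\le n}\overline{u_{2}(l)}\,u_{1}(k)\,\langle T_{n}\d_{k},\d_{l}\rangle$ is dominated by a convergent series independent of $n$, so dominated convergence and the entrywise limit give $\langle T_{n}P_{n}u_{1},P_{n}u_{2}\rangle\to\sum_{k,l}\overline{u_{2}(l)}u_{1}(k)\big((k\wedge l)+1\big)=\langle R_{A_{\bn}}(2)u_{1},u_{2}\rangle$ (the last identity because $\gph\subset\cd\big(R_{A_{\bn}}(2)^{1/2}\big)$ by transience and the form $\langle R_{A_{\bn}}(2)\,\cdot\,,\,\cdot\,\rangle$ expands in the $\d$--basis via the monotone limit $R_{A_{\bn}}(\l)\uparrow R_{A_{\bn}}(2)$ as $\l\downarrow2$, cf.\ the proof of Theorem~\ref{tretre} and Proposition~\ref{nn11}).

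Adding the three limits produces $\langle R_{A_{\bn}}(2)u_{1},u_{2}\rangle+\langle f(2\idd-A_{\bn})u_{1},u_{2}\rangle+D\langle u_{1},v\rangle\langle v,u_{2}\rangle$; since $\tfrac1x+f(x)=\tfrac1{e^{x}-1}$ on $(0,+\infty)$ with $f$ bounded, the first two summands coincide with $\big\langle(e^{\|A_{\bn}\|\idd-A_{\bn}}-\idd)^{-1}u_{1},u_{2}\big\rangle$ as bilinear forms on $\cd\big(R_{A_{\bn}}(2)^{1/2}\big)\supseteq\gph$, which is exactly the two--point function~\eqref{kmsbecd} at $\b=1$, and the assertion follows. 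The one genuinely delicate point is the $n$--uniform bound $\langle T_{n}\d_{k},\d_{k}\rangle\le C(k+1)$: for $k$ of order $n$ a crude estimate only gives $O(n)$, and what rescues the argument is that the spectral gaps $\m^{(n)}_{m}$ above the Perron--Frobenius level behave like $m^{2}/n^{2}$ rather than $1/n$, so that the tail of $\sum_{m\ge2}|\phi^{(n)}_{m}(k)|^{2}/\m^{(n)}_{m}$ stays linear in $k$ uniformly in $n$ --- the finite volume counterpart of the linear growth $\langle R_{A_{\bn}}(2)\d_{k},\d_{k}\rangle=k+1$ which is precisely why $A_{\bn}$ is transient and the states of Theorem~\ref{tretre} exist.
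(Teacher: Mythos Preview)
Your proof is correct and takes a genuinely different route from the paper's. Both arguments split $\tfrac{1}{e^{x}-1}=\tfrac{1}{x}+f(x)$ and treat the rank--one condensate term by $v_{n}(k)\lesssim v(k)$ plus dominated convergence, but diverge on the excited--state piece $T_{n}=Q_{n}R_{A_{n}}(2-\m_{n})Q_{n}$. The paper handles only the generators $u_{i}=e^{\imath t_{i}H}\d_{x_{i}}$ of $\gph$, represents them by the analytic functional calculus $e^{\imath tH}\d_{x}=\tfrac{1}{2\pi\imath}\oint_{C_{r}}e^{\imath t(\|A\|-z)}R_{A_{\bn}}(z)\d_{x}\,\di z$, bounds the matrix entries $R^{(n)}_{jk}\le 11(j+1)(k+1)$ by direct trigonometric estimation following the computations in Proposition~\ref{pefo}, and then dominates the resulting double sum using the exponential decay of $\langle R_{A_{\bn}}(r)\d_{j},\d_{x}\rangle$ for $r>2$ from Proposition~\ref{nn11}. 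You instead work with arbitrary $u_{i}\in\gph$ directly, exploiting the $\ell^{1}$--weighted summability $\sum_{k}|u_{i}(k)|v(k)<\infty$ established in Theorem~\ref{tretre}; your uniform diagonal bound $\langle T_{n}\d_{k},\d_{k}\rangle\le C(k+1)$ comes from the spectral gap $\l^{(n)}_{1}-\l^{(n)}_{m}\gtrsim m^{2}/(n+2)^{2}$ and a dyadic split of the spectral sum, and Cauchy--Schwarz then gives the sharper off--diagonal bound $C\sqrt{(k+1)(l+1)}$. Your approach is more elementary (no contour integrals) and the spectral--gap argument is quite transparent; the paper's approach has the advantage that the analytic functional calculus machinery is set up once and then invoked verbatim for the comb graphs in Theorems~\ref{liift010} and~\ref{liift010czac}. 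You also spell out the bounded--$f$ term via strong resolvent convergence and polynomial approximation, which the paper leaves implicit.
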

\begin{proof}
Thanks to the functional calculus with the function in \ref{resam}, we can reduce the matter to the resolvent. 
Put $u_i=e^{\imath t_i H}\d_{x_i}$, $i=1,2$, and $R^{(n)}:=Q_nR_n(\l_n)Q_n$ which is well defined as $\l_n=\|A\|-\m_n>\|A_n\|$. By following the same lines of 
Proposition \ref{pefo}, and taking into account \eqref{ligez}, we compute for the four terms appearing in the matrix elements $R^{(n)}_{jk}$,
\begin{align*}
&R^{(n)}_{jk}\leq\frac3{4\pi}\text{length}[0,\pi]
\bigg(\max_{x\in[0,\pi]}\frac{|\sin(j+1)x\sin(k+1)x|}{\sin^2\frac x2}\\
+\max_{x\in[0,\pi]}&\bigg|\frac{\sin(j+1)x}{\sin\frac x2}\bigg|\sup_{n\geq1}\bigg|\frac{\sin\frac{\pi(k+1)}{n+2}}
{\sin\frac{\pi}{2(n+2)}}\bigg|
+\max_{x\in[0,\pi]}\bigg|\frac{\sin(k+1)x}{\sin\frac x2}\bigg|\sup_{n\geq1}\bigg|\frac{\sin\frac{\pi(l+1)}{n+2}}
{\sin\frac{\pi}{2(n+2)}}\bigg|\bigg)\\
+&\frac12\sup_{n\geq1}\frac{\big|\sin\frac{\pi(l+1)}{n+2}\sin\frac{\pi(k+1)}{n+2}\big|}
{\sin^2\frac{\pi}{2(n+2)}}\leq11(j+1)(k+1)\,.
\end{align*}
Suppose that $|z|=|w|=r>2$. Proposition \ref{nn11} says that $R(r)_{jk}$ is exponentially decreasing for $j$ or $k$ going to infinity, then we get
$$
\bigg|\sum_{j,k=0}^n\overline{R(z)_{ji}}R^{(n)}_{jk}R(w)_{kl}\bigg|\leq11\bigg(\sum_{j=0}^{+\infty}R(r)_{ji}(j+1)\bigg)
\bigg(\sum_{k=0}^{+\infty}R(r)_{kl}(k+1)\bigg)
<+\infty\,.
$$
Now, by using Lebesgue Dominated Convergence Theorem, Proposition 4.2 and Theorem \ref{tretre}, we obtain
\begin{align*}
&\lim_n\langle R^{(n)}u_1,u_2\rangle\\
=-&\lim_n\frac{1}{4\pi^2}\oint_{C_r}\di ze^{-\imath t_1(\|A\|-\bar z)}\oint_{C_r} \di w e^{\imath t_2(\|A\|-w)}
\sum_{j,k=0}^n\overline{R(z)_{jx_1}}R^{(n)}_{jk}R(w)_{kx_2}\\
=-&\frac{1}{4\pi^2}\oint_{C_r}\di ze^{-\imath t_1(\|A\|-\bar z)}\oint_{C_r} \di w e^{\imath t_2(\|A\|-w)}
\sum_{j,k=0}^{+\infty}\overline{R(z)_{jx_1}}\big(\lim_nR^{(n)}_{jk}\big)R(w)_{kx_2}\\
=-&\frac{1}{4\pi^2}\oint_{C_r}\di ze^{-\imath t_1(\|A\|-\bar z)}\oint_{C_r} \di w e^{\imath t_2(\|A\|-w)}
\sum_{j,k=0}^{+\infty}\overline{R(z)_{jx_1}}R(2)R(w)_{kx_2}\\
=&\langle R(2)u_1,u_2\rangle\,.
\end{align*}
Concerning the condensate portion, with the obvious notations we first note that, under the condition $\m_n\to0$, it reduces to
$$
\left\langle\left(e^{(\|A\|-\m_n)\idd_n-A_n}-\idd_n\right)^{-1}P_{v_n}u_1,P_{v_n}u_2\right\rangle
\approx\frac{\langle u_1,v_n\rangle\langle v_n,u_2\rangle}{\|v_n\|^2(\|A\|-\|A_n\|-\m_n)}\,.
$$
Thus, the matter is reduced to investigate the limit of the numerator. By
$$
v_n(k)=\frac{\sin\frac{\pi(k+1)}{n+2}}{\sin\frac{\pi}{n+2}}\leq k+1\,,\quad k=0,1,\dots,n\,,
$$
and reasoning as before, we obtain
$$
\sum_{k=0}^nR(r)_{jk}v_n(k)\leq\sum_{k=0}^{+\infty}R(r)_{jk}(k+1)<+\infty\,.
$$
Thus, by Proposition \ref{pfene} and Theorem \ref{tretre}, for $u=e^{\imath tH}\d_x\in\gph$ we get,
\begin{align*}
\lim_n\langle v_n,u\rangle
=&\lim_n\frac{1}{2\pi\imath}\oint_{C_r}\di ze^{\imath t(\|A\|-z)}
\sum_{k=0}^nR(z)_{xk}v_n(k)\\
=&\frac{1}{2\pi\imath}\oint_{C_r}\di ze^{\imath t(\|A\|-z)}
\sum_{k=0}^{+\infty}R(z)_{xk}(\lim_nv_n(k))\\
=&\frac{1}{2\pi\imath}\oint_{C_r}\di ze^{\imath t(\|A\|-z)}
\sum_{k=0}^{+\infty}R(z)_{xk}v(k)\\
=&\langle v,u\rangle\,.
\end{align*}
Collecting together the results concerning the excited states and the portion of the condensate, the proof follows.
\end{proof}
\begin{Rem}
\label{reenai}${}$\\
\noindent
{\bf(i)} The graph $\bn$ is transient, so it exhibit locally normal states describing BEC, see Theorem \ref{tretre}. The finite volume two--point function splits into two terms. In the infinite volume limit, the first one converges to the corresponding term of the two--point function given in the l.h.s. of \eqref{kmsbecd}, the former and the latter describing the occupation portion of the excited levels for the finite and infinite volume, respectively. The second one, describing the occupation portion of the ground state, converges to the last addendum in the l.h.s. of \eqref{kmsbecd} corresponding to the portion of the condensate. This is precisely Theorem \ref{liift}.\\
\vskip.1cm
\noindent
{\bf(ii)} Concerning the mean density, as before two parts contribute to that: the part taking into account the excited levels and the condensation portion, respectively. 
Being $\r_c=+\infty$ and considering the functional calculus by the bounded function in \eqref{resam}, the first part always diverges in the infinite volume limit by giving
\begin{align*}
\lim_n\r\big(\om_{\La_n}\big)=+\infty=\r_c
=&\lim_n\frac1{|\La_n|}\sum_{x\in\La_n}\langle f(\|A\|\idd-A)\d_x,\d_x\rangle\\ 
+&\lim_n\frac1{|\La_n|}\sum_{x\in\La_n}\langle R_A(\|A\|)\d_x,\d_x\rangle\,.
\end{align*}
For each state $\om=\om_D$ in \eqref{kmsbecd} describing the condensation regime, its mean density $\r(\om)=+\infty$ even if they are locally normal, i.e. the local density is finite:
$$
\r_\La(\om)=\sum_{x\in\La}\om\big(a^\dagger(\d_x)a(\d_x)\big)<+\infty\,.
$$
Concerning the second terms, we have to distinguish two cases arising from the condensation regime.\\
\vskip.1cm
\noindent
{\bf(iii)} Suppose in \eqref{kmsbecd} that $D>0$. Being $3=d_{PF}>d_G=1$,
\begin{align*}
\lim_n\r^{{\text cond}}\big(\om_{\La_n}\big)=&\lim_n\frac{\|v_n\|^2}{|\La_n|}\lim_n\frac1{\|v_n\|^2(\|A\|-\|A_n\|-\m_n)}=+\infty\\
=&D\lim_n\frac{\sum_{x\in\La_n}v(x)^2}{|\La_n|}=\lim_n\r_{\La_n}^{{\text cond}}(\om)\,.
\end{align*}
But the ratio between the mean density of the condensate and its infinite volume limit is different from 1 by \eqref{nn010}. In fact
$$
\frac{\r_{\La_n}^{{\text cond}}(\om)}{\r^{{\text cond}}\big(\om_{\La_n}\big)}
=\frac{\big\|v\lceil_{\La_n}\big\|^2}{\|v_n\|^2}\longrightarrow\frac{2\pi^2}3\,.
$$
\vskip.1cm
\noindent
{\bf(iv)} The critical case $D=0$ leads to the following situations:
\begin{align*}
\r^{{\text cond}}&\big(\om_{\La_n}\big)=\frac{\|v_n\|^2}{|\La_n|}\frac1{\|v_n\|^2(\|A\|-\|A_n\|-\m_n)}\\
=&\bigg(\frac1{2\pi^2}+o(1)\bigg)\frac{(n+1)^2}{\|v_n\|^2(\|A\|-\|A_n\|-\m_n)}\,,
\end{align*}
whose possible limit depends on the rate of the convergence to 0 of $\frac1{\|v_n\|^2(\|A\|-\|A_n\|-\m_n)}$.
\end{Rem}

\section{The comb graph $\bn\comb\bz^d$}
\label{snd}

The results in the previous sections allow us to investigate in the full generality the surprising phenomena described below, relative to the appearance of the BEC for the Pure Hopping model on the comb graphs $\bn\comb\bz^d$.
As we have shown in Proposition \ref{pradneg}, the comb graph $N_d:=\bn\comb\bz^d$ is a negligible additive perturbation of the (non connected) graph consisting of the disjoint union of $\bn$ copies of $\bz^d$. The first step is to decide whether its Adjacency admits Hidden Spectrum. 
\begin{Prop}
\label{hspcn}
The Adjacency of the comb graph $N_d$ has Hidden Spectrum if and only if
$d\in\{1,2\}$.
\end{Prop}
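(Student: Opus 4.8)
\emph{Step 1: reduce Hidden Spectrum to an enlargement of the norm.} First I would observe that $N_d=\bn\comb(\bz^d,\mathbf0)$ is the additive edge--perturbation of $X:=\bigsqcup_\bn\bz^d$ by $D=A_\bn\otimes P_{\mathbf0}$, and that, since $|V\bz^d|=+\infty$, this perturbation is negligible by Proposition \ref{infneg} (one may equally use Proposition \ref{pradneg} with periodic boundary conditions on the fibres); moreover $X$ admits the IDS, equal to that of $\bz^d$, with $\|A_X\|=\|A_{\bz^d}\|=2d$. By Theorem \ref{density0}, $A_{N_d}$ then admits the IDS and $F_{N_d}(x)=F_X(x+\d)$ with $\d=2d-\|A_{N_d}\|$ as in \eqref{ddee}. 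Here $F_X=N_{\|A_X\|\idd-A_X}$ has support $[0,4d]$, so $\inf\supp F_X=0$, whence $\inf\supp F_{N_d}=-\d=\|A_{N_d}\|-2d$; that is, $E_0(\|A_{N_d}\|\idd-A_{N_d})=\|A_{N_d}\|-2d$, while $\eps_0=0$ by the normalisation. Hence, by Definition \ref{hyspe}, $N_d$ has Hidden Spectrum if and only if $\|A_{N_d}\|>2d$.

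\emph{Step 2: the Secular Equation.} Next I would apply Proposition \ref{eksek2} with $G=\bn$ (so $\|A_\bn\|=2$) and $H=\bz^d$, $o=\mathbf0$: the norm $\|A_{N_d}\|$ exceeds $2d$ precisely when $\langle R_{A_{\bz^d}}(\l)\d_{\mathbf0},\d_{\mathbf0}\rangle\,\|A_\bn\|=1$ admits a root $\l_*>2d$. The function
$$
\l\mapsto\langle R_{A_{\bz^d}}(\l)\d_{\mathbf0},\d_{\mathbf0}\rangle=\int_{\bt^d}\frac{\di m(\boldsymbol\th)}{\l-2\sum_{j=1}^d\cos\th_j}
$$
is continuous and strictly decreasing on $(2d,+\infty)$, tends to $0$ at $+\infty$, and by monotone convergence increases, as $\l\downarrow2d$, to
$$
c_d:=\int_{\bt^d}\frac{\di m(\boldsymbol\th)}{2\sum_{j=1}^d(1-\cos\th_j)}\in(0,+\infty].
$$
Thus a root $\l_*>2d$ exists if and only if $c_d>\|A_\bn\|^{-1}=1/2$, and the whole statement reduces to showing $c_d=+\infty$ for $d\le2$ and $c_d\le1/2$ for $d\ge3$.

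\emph{Step 3: estimating $c_d$, and the main obstacle.} Using $1/a=\int_0^\infty e^{-as}\di s$ for $a>0$, Fubini, and the standard integral $\int_{\bt}e^{2s\cos\th}\di m(\th)=I_0(2s)$ for the modified Bessel function, one gets the Laplace representation $c_d=\int_0^\infty e^{-2ds}I_0(2s)^d\,\di s=\frac12\int_0^\infty\big(e^{-t}I_0(t)\big)^d\,\di t$. Since $0\le e^{-t}I_0(t)\le1$ (indeed $<1$ for $t>0$) and $e^{-t}I_0(t)\sim(2\pi t)^{-1/2}$ as $t\to+\infty$, the integrand is pointwise decreasing in $d$ and its $d$--th power has a non--integrable tail for $d\le2$, so $c_d=+\infty$ there --- this is just the recurrence of $\bz^d$ in dimensions $\le2$, and Hidden Spectrum appears. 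For $d\ge3$ the same monotonicity reduces matters to $c_3\le1/2$. Using $1-\cos\th\ge\frac{2}{\pi^2}\th^2$ on $[-\pi,\pi]$ one has $2\sum_{j=1}^3(1-\cos\th_j)\ge\frac{4}{\pi^2}|\boldsymbol\th|^2$, hence $c_3\le\frac{\pi^2}{4(2\pi)^3}\int_{[-\pi,\pi]^3}|\boldsymbol\th|^{-2}\di^3{\boldsymbol\th}$; since $|\boldsymbol\th|^{-2}$ is radially decreasing, the rearrangement inequality bounds the last integral by $\int_{B_R}|\boldsymbol\th|^{-2}\di^3{\boldsymbol\th}=4\pi R$, where $B_R$ is the ball of volume $(2\pi)^3$, i.e. $R=(6\pi^2)^{1/3}$, so $c_3\le R/8=(6\pi^2)^{1/3}/8<1/2$ (because $6\pi^2<64$). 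Alternatively one may simply quote the classical evaluation of Watson's simple--cubic lattice integral, equivalently that the return probability of the simple random walk on $\bz^3$ is $<2/3$. Consequently $c_d<1/2$ for all $d\ge3$, and by Steps 1--2 no Hidden Spectrum occurs. The conceptual heart is Step 1 --- negligibility of the perturbation (ensured by $|V\bz^d|=+\infty$) freezes the IDS via Theorem \ref{density0}, so Hidden Spectrum becomes equivalent to a genuine increase of $\|A_{N_d}\|$, governed by the Secular Equation; the only quantitative point requiring care is the bound $c_d\le1/2$ for $d\ge3$, which after reduction to $d=3$ is the elementary estimate above (or the known value of the Watson integral).
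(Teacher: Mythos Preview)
Your proof is correct, and its overall architecture coincides with the paper's: reduce Hidden Spectrum to $\|A_{N_d}\|>2d$ via negligibility of the perturbation and Theorem~\ref{density0}, invoke the Secular Equation of Proposition~\ref{eksek2}, and then decide whether $c_d:=\langle R_{A_{\bz^d}}(2d)\d_{\mathbf0},\d_{\mathbf0}\rangle>1/2$, using the monotonicity $c_{d+1}\le c_d$ to reduce the $d\ge3$ case to $d=3$.

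The one genuine difference is in the treatment of $c_3$. The paper simply quotes the exact value of Watson's simple--cubic lattice Green function from Joyce, obtaining $2c_3=P(1)/3<0.6$. You instead give a self--contained elementary bound: from $1-\cos\th\ge\tfrac{2}{\pi^2}\th^2$ on $[-\pi,\pi]$ you get $c_3\le\tfrac{\pi^2}{4(2\pi)^3}\int_{[-\pi,\pi]^3}|\boldsymbol\th|^{-2}\di^3\boldsymbol\th$, and then the bathtub/rearrangement inequality for the radially decreasing integrand replaces the cube by the ball of equal volume, giving $c_3\le(6\pi^2)^{1/3}/8<1/2$ since $6\pi^2<64$. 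This buys you independence from an external numerical evaluation, at essentially no cost. (For $d\le2$ your Bessel--function tail $e^{-t}I_0(t)\sim(2\pi t)^{-1/2}$ and the paper's direct polar--coordinate divergence argument are interchangeable.)
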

\begin{proof}
As $A_{\bz^d}$ does not have Hidden Spectrum, $A_{N_d}$ has Hidden Spectrum if and only if $\|A_{N_d}\|>\|A_{\bz^d}\|$.
It happens if and only if the Secular Equation (cf. \eqref{eksek1})
\begin{equation}
\label{sigek}
\langle R_{A_{\bz^d}}(\l)\d_0,\d_0\rangle\|A_\bn\|=1\,,
\end{equation}
has a (necessarily unique) solution $\l_*>\|A_{\bz^d}\|=2d$. This can happen
if and only if
$$
\lim_{\l\downarrow2d}\langle R_{A_{\bz^d}}(\l)\d_0,\d_0\rangle>1/2\,.
$$
Note that
 \begin{align}
 \label{nngr00}
\langle R_{A_{\bz^{d+1}}}(&2(d+1))\d_0,\d_0\rangle
=\frac1{\pi^{d+1}}\int\!\!\!\!\int\!\!\dots\!\!\int_{[0,\pi]^{d+1}}\frac{\bf\di^{d+1}\th}{\sum_{k=1}^{d+1}2(1-\cos\th_k)}\nn\\
\leq&\frac1{\pi^d}\int\!\!\!\!\int\!\!\dots\!\!\int_{[0,\pi]^d}\frac{\bf\di^d\th}{\sum_{k=1}^{d}2(1-\cos\th_k)}
=\langle R_{A_{\bz^{d}}}(2d)\d_0,\d_0\rangle
\end{align}
If $d=1,2$, by \eqref{nngr00} and using the polar coordinates we get,
 \begin{align*}
\langle R_{A_{\bz}}(2)\d_0,\d_0\rangle\geq\langle R_{A_{\bz^2}}(4)\d_0,\d_0\rangle
=&\frac1{\pi^2}\int\!\!\!\!\int_{[0,\pi]^2}\frac{\di\th_1\di\th_2}{2(1-\cos\th_1)+2(1-\cos\th_2)}\\
\geq&\frac1{\pi^2}\int\!\!\!\!\int_{[0,\pi]^2}\frac{\di\th_1\di\th_2}{\th_1^2+\th_2^2}
\geq\frac1{2\pi}\int_0^\pi\frac{\di r}r=+\infty
\end{align*}
Thus, if $d=1,2$, $A_{N_d}$ has Hidden Spectrum. Concerning $d\geq3$, it follows by (2.1) and (2.8) of \cite{J} that (with $P(1)$ defined there)
$$
2\langle R_{A_{\bz^3}}(6)\d_0,\d_0\rangle=P(1)/3<0.6<1\,.
$$
Namely, the Secular Equation \eqref{sigek} does not have any solution for $\l>6$ when $d=3$, that is $N_3$ does not have Hidden Spectrum. By \eqref{nngr00}, we conclude that $N_d$ cannot have Hidden Spectrum for each $d\geq3$.
\end{proof}
Fix $\La_n:=S_n\comb\bt_{2n+1}^d$, together with the PF eigenvector $w_n$ on the segment $S_n=\{0,1,\dots,n\}$ normalised to 1 on the common root 0. Let $L_n$ be the unique solution of the Secular Equation
$$
\langle R_{A_{\bt_{2n+1}^d}}(\l_n)\d_0,\d_0\rangle\|A_{S_n}\|=1\,.
$$
By Lemma \ref{caficz}, $L_n>2d$. In addition, by Proposition \ref{hspcn} $L_n\to\|A_{N_d}\|$, where $\|A_{N_d}\|\geq2d$ either satisfies \eqref{sigek} for $d=1,2$, or
$\|A_{N_d}\|=2d$ if $d\geq3$. Put
$\eps_n:=\frac{L_n-2d}2$. Then 
$$
\eps_n\to\eps:=\frac{\|A_{N_d}\|-2d}2\geq0\,.
$$
The PF eigenvector on the Comb $\La_n$ is then given by $v_n:=w_n\otimes R_{A_{\bt_{2n+1}^d}}(L_n)\d_0$. As usual, we extend the $v_n$ to all $\ell^2(VN_d)$ by putting 0 elsewhere in $N_d\backslash\La_n$. 
By Propositions
\ref{prcfi} and \ref{pfene}, $v_n$ converges point--wise to $v=w\otimes r$, where $w(j)=j+1$ is the PF weight on $\bn$ (which can be proved to be unique up to multiplicative costants), and $r({\bf k})$ is given in \eqref{errkapo} if $d=1,2$, and \eqref{errkapo1} if 
$d\geq3$. We specialise the results about the PF weight in the following
\begin{Prop}
\label{pfenenz}
For the PF dimension, we have $d_{PF}(N_d)=3$ if $d=1,2$, and $d_{PF}(N_d)=3+d$ if $d\geq3$. For $d=1,2$, \eqref{nn010}
$$
\lim_n\frac{\big\|v\lceil_{\La_n}\big\|^2}{\|v_n\|^2}=\frac{2\pi^2}3
$$
holds true as well.
\end{Prop}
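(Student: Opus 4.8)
The plan is to treat the two assertions separately. The first is immediate from results already established; the second rests on a factorisation of the $\ell^2$--norm along the comb product.

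\emph{The PF dimension.} I would apply Proposition \ref{csette}(iii) with $G=\bn$, so that $\|A_G\|^{-1}=1/2$. The dichotomy there is governed by comparing $\langle R_{A_{\bz^d}}(2d)\d_{\bf0},\d_{\bf0}\rangle$ with $1/2$, a comparison already carried out inside the proof of Proposition \ref{hspcn}: for $d\in\{1,2\}$ one has $\langle R_{A_{\bz^d}}(2d)\d_{\bf0},\d_{\bf0}\rangle=+\infty>1/2$, whereas for $d\geq3$ the bound $2\langle R_{A_{\bz^3}}(6)\d_{\bf0},\d_{\bf0}\rangle<1$ together with the monotonicity \eqref{nngr00} gives $\langle R_{A_{\bz^d}}(2d)\d_{\bf0},\d_{\bf0}\rangle<1/2$. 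Feeding this, together with $d_{PF}(\bn)=3$ from Proposition \ref{pfene}, into Proposition \ref{csette}(iii) yields $d_{PF}(N_d)=d_{PF}(\bn)=3$ for $d=1,2$, and $d_{PF}(N_d)=d_{PF}(\bn)+d=3+d$ for $d\geq3$.

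\emph{The ratio, for $d=1,2$.} Here $\eps=\|A_{N_d}\|/2-d>0$ by Proposition \ref{hspcn}. Using the tensor decompositions $v=w\otimes r$ and $v_n=w_n\otimes r_n$ provided by Propositions \ref{prcfi} and \ref{csette}, with $\La_n=S_n\comb\bt^d_{2n+1}$ and $\S_n^d=\{-n,\dots,n\}^d$ the vertex set of $\bt^d_{2n+1}$, I would factor
\[
\frac{\big\|v\lceil_{\La_n}\big\|^2}{\|v_n\|^2}
=\frac{\big\|w\lceil_{S_n}\big\|^2}{\|w_n\|^2}\cdot\frac{\big\|r\lceil_{\S_n^d}\big\|^2}{\|r_n\|^2}\,.
\]
The first factor tends to $2\pi^2/3$ by \eqref{nn010}. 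For the second, since $\eps>0$ the function ${\boldsymbol\th}\mapsto\big(\eps+\sum_{j=1}^d(1-\cos\th_j)\big)^{-1}$ is smooth on $\bt^d$, so its Fourier coefficients $r({\bf k})$ are square-summable; monotone convergence gives $\big\|r\lceil_{\S_n^d}\big\|^2\uparrow\|r\|_{\ell^2(\bz^d)}^2$, which by Parseval equals $\int_{\bt^d}\big(\eps+\sum_{j=1}^d(1-\cos\th_j)\big)^{-2}\di m({\boldsymbol\th})$. For $\|r_n\|^2$, discrete Parseval on $\bt^d_{2n+1}$ gives
\[
\|r_n\|^2=\frac{\|A_{S_n}\|^2}{4}\int_{\bt^d_{2n+1}}\frac{\di m_n({\boldsymbol\th})}{\big(\eps_n+\sum_{j=1}^d(1-\cos\th_j)\big)^2}\,.
\]
The Haar point mass of $\di m_n$ at ${\boldsymbol\th}={\bf0}$ contributes $\|A_{S_n}\|^2\big(4(2n+1)^d\eps_n^2\big)^{-1}\to0$ because $\eps_n\to\eps>0$, and the remaining sum is a Riemann approximation that converges (Lemma 10.9 of \cite{FGI1}) to the same integral $\int_{\bt^d}\big(\eps+\sum_{j=1}^d(1-\cos\th_j)\big)^{-2}\di m({\boldsymbol\th})$ since $\|A_{S_n}\|\to2$ and $\eps_n\to\eps$. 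Hence the second factor tends to $1$, and the limit $2\pi^2/3$ follows.

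\emph{Main obstacle.} The delicate point is the discrete-to-continuum passage for $\|r_n\|^2$: one must peel off the singular Haar point mass at the origin, which is harmless only because $\eps_n$ stays bounded away from $0$ (precisely what Proposition \ref{hspcn} secures for $d\leq2$, and what fails for $d\geq3$), and then justify the Riemann-sum convergence of the remaining integral uniformly in the slowly varying data $\|A_{S_n}\|$ and $\eps_n$. Everything else is bookkeeping on the tensor structure together with the already established \eqref{nn010}.
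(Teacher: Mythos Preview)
Your proof is correct and follows essentially the same route as the paper: the first part is obtained by invoking the trichotomy of Proposition~\ref{csette}(iii) together with the comparison of $\langle R_{A_{\bz^d}}(2d)\d_{\bf0},\d_{\bf0}\rangle$ with $1/2$ carried out in Proposition~\ref{hspcn}, and the second part proceeds by the same tensor factorisation of the ratio into the base-space factor (handled by~\eqref{nn010}) times a fibre-space factor that tends to $1$ because $\eps>0$. Your treatment of the fibre factor via discrete Parseval and a Riemann-sum argument is more explicit than the paper's, which simply asserts the $\ell^2$--convergence $R_{A_{\bt_{2n+1}^d}}(\l_n)\d_{\bf0}\to R_{A_{\bz^d}}(\|A_{N_d}\|)\d_{\bf0}$; the underlying reason---that the integrand stays uniformly bounded away from its singularity thanks to Hidden Spectrum---is the same in both.
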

\begin{proof}
The proof of the first part follows collecting together Propositions \ref{hspcn}, \ref{prcfi} and \ref{pfene}. Concerning the ratio 
$\frac{\big\|v\lceil_{\La_n}\big\|^2}{\|v_n\|^2}$, for $d=1,2$, $\|A_{N_d}\|>\|A_{\bz^d}\|$, which implies 
$$
R_{A_{\bt_{2n+1}^d}}(\l_n)\d_{\bf k}\longrightarrow R_{A_{\bz^d}}(\|A_{N_d}\|)\d_{\bf k}\,,\quad {\bf k}\in\bz^d\,.
$$
Then we get 
$$
\frac{\big\|v\lceil_{\La_n}\big\|^2}{\|v_n\|^2}=\frac{\big\|w\lceil_{[0,n]}\big\|^2}{\|w_n\|^2}
\frac{\|R_{A_{\bt_{2n+1}^d}}(\l_n)\d_{\bf0}\|^2}{\|R_{A_{\bz^d}}(\|A_{N_d}\|)\d_{\bf0}\|^2}\longrightarrow\frac{2\pi^2}3\,.
$$
\end{proof}
Due to the appearance of the Hidden Spectrum, from now in in this section if it is not otherwise specified, we limit the analysis to the cases $d=1,2$, more interesting for the investigation of the BEC due to the inhomogeneity. In this case, the PF weight is written as
\begin{equation}
\label{fifi}
v(j,{\bf k})=\frac{j+1}{(2\pi)^d}\int_{\bt^d}\frac{e^{\imath\langle {\bf k},{\bf\th}\rangle}}{\eps+\sum_{j=1}^d(1-\cos\th_j)}
\di^d{\bf\th}\,,\quad (j,{\bf k})\in \bn\times\bz^d\,.
\end{equation}
As in the previous section, we pass to the infinite volume limit of the finite approximations of the two--point function w.r.t the exhaustion given by $\La_n=S_n\comb\bt_{2n+1}$. 
Consider the self--adjoint projection $Q_n$ onto the codimension $1$ orthogonal subspace to the finite volume PF eigenvector $v_n=w_n\otimes r_n$.
\begin{Prop}
\label{cocaszon}
Fix any sequence $\{\l_n\}_{n\in\bn}$ such that $\l_n>\|A_{\La_n}\|$ and 
$\lim_n\l_n=\|A_{\bn\comb\bz^d}\|$. With $\eps=\frac{\|A_{N_d}\|}2-d$ and $\d_{-1}:=0$ in $\ell^2(\bn)$, we get
\begin{align*}
&\lim_n\langle R_{A_{\La_n}}(\l_n)Q_n\d_k\otimes\d_{\bf m},Q_n\d_l\otimes\d_{\bf n}\rangle\\
=\frac{\d_{kl}}{2(2\pi)^d}\int_{\bt^d}
&\frac{e^{\imath\langle {\bf m}-{\bf n},{\boldsymbol\th}\rangle}}{\eps+\sum_{j=1}^d(1-\cos\th_j)}
\di^d{\boldsymbol\th}+\frac{\langle R_{A_{\bn}}(2)(\d_{k-1}+\d_{k+1}),\d_l\rangle}{2(2\pi)^{2d}}\\
\times\int\!\!\!\!\int_{\bt^d\times\bt^d}
&\frac{e^{\imath(\langle {\bf m},{\boldsymbol\a}\rangle-\langle {\bf n},{\boldsymbol\b}\rangle)}}
{(\eps+\sum_{j=1}^d(1-\cos\a_j)(\eps+\sum_{j=1}^d(1-\cos\b_j))}
\di^d{\boldsymbol\a}
\di^d{\boldsymbol\b}\\
=&\lim_{\l\downarrow\|A_{N_d}\|}\langle R_{A_{N_d}}(\l)\d_k\otimes\d_{\bf m},\d_l\otimes\d_{\bf n}\rangle\,.
\end{align*} 
Thus, $A_{\bn\comb\bz^d}$ is transient.
\end{Prop}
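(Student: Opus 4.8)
The plan is to evaluate both one-sided limits by the Krein resolvent formula for comb graphs (Proposition \ref{risgfor}), once at infinite volume and once along the finite volume exhaustion, and to recognise the explicit double integral as their common value. Write $N_d=\bn\comb\bz^d$; for $d\in\{1,2\}$, Proposition \ref{hspcn} gives $\|A_{N_d}\|>2d=\|A_{\bz^d}\|$, so $\eps=\|A_{N_d}\|/2-d>0$, and the Secular Equation \eqref{sigek} at $\l_*=\|A_{N_d}\|$ says $g(\|A_{N_d}\|):=\langle R_{A_{\bz^d}}(\|A_{N_d}\|)\d_{\bf 0},\d_{\bf 0}\rangle^{-1}=\|A_\bn\|=2$. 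By Proposition \ref{risgfor}, for $\l>\|A_{N_d}\|$,
$$
\langle R_{A_{N_d}}(\l)\d_k\otimes\d_{\bf m},\d_l\otimes\d_{\bf n}\rangle
=\d_{kl}\langle R_{A_{\bz^d}}(\l)\d_{\bf m},\d_{\bf n}\rangle
+g(\l)\langle R_{A_\bn}(g(\l))A_\bn\d_k,\d_l\rangle\langle R_{A_{\bz^d}}(\l)\d_{\bf m},\d_{\bf 0}\rangle\langle R_{A_{\bz^d}}(\l)\d_{\bf 0},\d_{\bf n}\rangle .
$$
As $\l\downarrow\|A_{N_d}\|$ the matrix elements of $R_{A_{\bz^d}}(\l)$ are continuous (no singularity, since $\|A_{N_d}\|>2d$) and $g(\l)\downarrow2$, so the explicit formulae of Proposition \ref{nn11} — which are continuous down to the argument $2$ — give $g(\l)R_{A_\bn}(g(\l))\to2R_{A_\bn}(2)$; using $A_\bn\d_k=\d_{k-1}+\d_{k+1}$ (with $\d_{-1}:=0$) and $\int_{\bt^d}\big(\eps+\sum_j(1-\cos\th_j)\big)^{-1}e^{-\imath\langle{\bf k},\boldsymbol\th\rangle}\di^d\boldsymbol\th=2(2\pi)^d\langle R_{A_{\bz^d}}(\|A_{N_d}\|)\d_{\bf 0},\d_{\bf k}\rangle$, together with evenness and reality of the relevant kernels, a direct substitution identifies $\lim_{\l\downarrow\|A_{N_d}\|}\langle R_{A_{N_d}}(\l)\d_k\otimes\d_{\bf m},\d_l\otimes\d_{\bf n}\rangle$ with the claimed double integral. (The left-hand limit exists by monotonicity of these positive-definite kernels, $A_{N_d}$ having no $\ell^2$ Perron--Frobenius eigenvector for $d\leq2$.)

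For the finite volume limit, since $v_n=w_n\otimes\big(R_{A_{\bt^d_{2n+1}}}(L_n)\d_{\bf 0}\big)$ is the Perron--Frobenius eigenvector of $A_{\La_n}$ with eigenvalue $L_n=\|A_{\La_n}\|$, the projection $Q_n$ commutes with $R_{A_{\La_n}}(\l_n)$ and, writing $\rho_n({\bf m}):=\langle R_{A_{\bt^d_{2n+1}}}(L_n)\d_{\bf 0},\d_{\bf m}\rangle$,
$$
\langle R_{A_{\La_n}}(\l_n)Q_n\d_k\otimes\d_{\bf m},Q_n\d_l\otimes\d_{\bf n}\rangle
=\langle R_{A_{\La_n}}(\l_n)\d_k\otimes\d_{\bf m},\d_l\otimes\d_{\bf n}\rangle
-\frac{w_n(k)w_n(l)\,\rho_n({\bf m})\rho_n({\bf n})}{\|w_n\|^2\|\rho_n\|^2(\l_n-L_n)} .
$$
To the first term I apply the finite volume Krein formula (Proposition \ref{risgfor}, base $S_n$, fibre $\bt^d_{2n+1}$) and split the base resolvent as $R_{A_{S_n}}(g_n(\l_n))=\big(g_n(\l_n)-\|A_{S_n}\|\big)^{-1}P_{w_n}+R_{A_{S_n}}(g_n(\l_n))Q_n^{S_n}$, with $Q_n^{S_n}=\idd-P_{w_n}$; this is legitimate because $g_n$ is increasing with $g_n(L_n)=\|A_{S_n}\|$ (the Secular Equation for $\La_n$), whence $g_n(\l_n)>\|A_{S_n}\|$. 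As $n\to\infty$, $g_n(\l_n)\to g(\|A_{N_d}\|)=2$ and $\langle R_{A_{\bt^d_{2n+1}}}(\l_n)\d_{\bf i},\d_{\bf j}\rangle\to\langle R_{A_{\bz^d}}(\|A_{N_d}\|)\d_{\bf i},\d_{\bf j}\rangle$ by Riemann-sum convergence of the torus Haar integrals (as in the proof of Proposition \ref{csette}, with no singularity since $\l_n\to\|A_{N_d}\|>2d$); combining this with Proposition \ref{pefo} applied to the sequence $g_n(\l_n)\downarrow2$ and to $A_{S_n}\d_k=\d_{k-1}+\d_{k+1}$, the regular part contributes exactly $2\langle R_{A_\bn}(2)A_\bn\d_k,\d_l\rangle\langle R_{A_{\bz^d}}(\|A_{N_d}\|)\d_{\bf 0},\d_{\bf m}\rangle\langle R_{A_{\bz^d}}(\|A_{N_d}\|)\d_{\bf 0},\d_{\bf n}\rangle$, and the term $\d_{kl}\langle R_{A_{\bt^d_{2n+1}}}(\l_n)\d_{\bf m},\d_{\bf n}\rangle$ converges to $\d_{kl}\langle R_{A_{\bz^d}}(\|A_{N_d}\|)\d_{\bf m},\d_{\bf n}\rangle$; these are the two summands of the claimed expression.

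It remains to check that the Perron--Frobenius pole of the Krein correction cancels, in the limit, against the subtracted rank-one term, and this is the crux. From $g_n(L_n)=\|A_{S_n}\|$ and a Taylor expansion, $g_n(\l_n)-\|A_{S_n}\|=g_n'(L_n)(\l_n-L_n)\big(1+O(\l_n-L_n)\big)$, and the resolvent identity $\partial_\l R_B(\l)=-R_B(\l)^2$ gives the exact relation
$$
g_n'(L_n)=\|A_{S_n}\|^2\big\|R_{A_{\bt^d_{2n+1}}}(L_n)\d_{\bf 0}\big\|^2=\|A_{S_n}\|^2\|\rho_n\|^2 .
$$
Feeding this in, and using that the fibre matrix elements are smooth in the spectral parameter near $\|A_{N_d}\|>2d$ uniformly in $n$, one finds that the pole part of the Krein correction and the subtracted rank-one term differ by $O\!\big(w_n(k)w_n(l)/\|w_n\|^2\big)$ (here $\|\rho_n\|^2$ stays bounded away from $0$ and $\infty$, as $\|A_{N_d}\|>2d$), and this tends to $0$ since $w_n(k)\to k+1$ while $\|w_n\|^2\approx n^3/(2\pi^2)\to\infty$ by Proposition \ref{pfene}. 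Collecting the three contributions yields the equality of the finite volume limit with the explicit expression, hence with the infinite volume one. Taking $k=l=0$ and ${\bf m}={\bf n}={\bf 0}$, the common value is finite (it equals $\langle R_{A_{\bz^d}}(\|A_{N_d}\|)\d_{\bf 0},\d_{\bf 0}\rangle=1/2$ plus $2\langle R_{A_\bn}(2)\d_1,\d_0\rangle\langle R_{A_{\bz^d}}(\|A_{N_d}\|)\d_{\bf 0},\d_{\bf 0}\rangle^2=1/2$, using Proposition \ref{nn11}), so $\lim_{\l\downarrow\|A_{N_d}\|}\langle R_{A_{N_d}}(\l)\d_0\otimes\d_{\bf 0},\d_0\otimes\d_{\bf 0}\rangle<+\infty$ and $A_{\bn\comb\bz^d}$ is transient.

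The principal difficulty is exactly this singular cancellation: two contributions diverging like $(\l_n-L_n)^{-1}$ must annihilate, and the mechanism is that the resolvent identity fixes $g_n'(L_n)$ precisely as $\|A_{S_n}\|^2\|\rho_n\|^2$, so the residual mismatch lands in the vanishing prefactor $w_n(k)w_n(l)/\|w_n\|^2$ — the place where the inhomogeneity $d_{PF}(\bn)=3>1=d_G(\bn)$ makes its appearance. The remaining ingredients — continuity of the $\bz^d$- and $\bz$-resolvents at their norms, Riemann-sum convergence of the torus Haar integrals, and Proposition \ref{pefo} for the base chain — are already at hand.
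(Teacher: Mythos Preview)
Your proof is correct and follows essentially the same route as the paper: Krein formula on the comb (Proposition \ref{risgfor}), Proposition \ref{pefo} for the base resolvent, Riemann-sum convergence of the torus integrals, and a Taylor argument to kill the singular Perron--Frobenius contribution. The only organisational difference is that the paper first writes $Q_n=P_{w_n}\otimes P_{r_n}^\perp+P_{w_n}^\perp\otimes\idd$ and observes (via commutation of $R_{A_{S_n}}A_{S_n}$ with $P_{w_n}$) that the cross terms in $Q_nR_{A_{\La_n}}Q_n$ vanish, so the singular piece is isolated as a single expression with factor $w_n(k)w_n(l)$ and bounded coefficient; you instead subtract the rank-one term directly and match the two $(\l_n-L_n)^{-1}$ poles via the exact identity $g_n'(L_n)=\|A_{S_n}\|^2\|\rho_n\|^2$, landing the residual in $w_n(k)w_n(l)/\|w_n\|^2\to0$. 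Same mechanism, different bookkeeping.
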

\begin{proof}
We suppose that the PF eigenvector $w_n$ of $A_{S_n}$ is normalised such that $\|w_n\|=1$. Then $w_n(i)\to0$, $i\in\bn$. Put $L_n:=\|\La_n\|<\l_n$, $\eps_n:=\frac{\l_n}2-d$, then 
$\eps_n\to\eps>0$. Taking into account \eqref{nn0110}, we have with the obvious notations, 
$$
Q_n=\idd\otimes\idd-P_{w_n}\otimes P_{r_n}=P_{w_n}\otimes(\idd-P_{r_n})+P^\perp_{w_n}\otimes \idd\,.
$$
In addition, denoting $\R_n(\l)$ as the resolvent of $A_{\bt_{2n+1}^d}$, we get
$$
P_{r_n}=\frac{\langle\,{\bf\cdot}\,, \R_n(L_n)\d_{\bf 0}\rangle}{\|\R_n(L_n)\d_{\bf 0}\|^2}\R\!{}_n(L_n)\d_{\bf 0}\,.
$$
By orthogonality, the unique surviving terms are
\begin{align}
\label{secsct}
Q_nR_{A_{\La_n}}(\l_n)Q_n=&P_{w_n}\otimes(\idd-P_{r_n})R_{A_{\La_n}}(\l_n)P_{w_n}\otimes(\idd-P_{r_n})\nn\\
+&P^\perp_{w_n}\otimes \idd R_{A_{\La_n}}(\l_n)P^\perp_{w_n}\otimes \idd\,.
\end{align}
Concerning the first one, we note that it can be written in the form
\begin{align}
\label{secsct1}
&\langle P_{w_n}\otimes(\idd-P_{r_n})R_{A_{\La_n}}(\l_n)P_{w_n}\otimes(\idd-P_{r_n})\d_k\otimes\d_{\bf m},\d_l\otimes\d_{\bf n}\rangle\\
=&w_n(k)w_n(l)\bigg\langle\bigg((\idd-P_{r_n})\R\!{}_n(\l_n)(\idd-P_{r_n})+\frac{g_n(\l_n)g_n(L_n)}{g_n(\l_n)-g_n(L_n)}A_n(\l_n)\bigg)\d_{\bf m},\d_{\bf n}\bigg\rangle\,,\nn
\end{align}
where
$$
A_n(\l_n):=(\idd-P_{r_n})\R\!{}_n(\l_n)P_{0}\R\!{}_n(\l_n)(\idd-P_{r_n})\,,
$$
and $g_n$ is given by \eqref{gnegne}.
Notice that $A_n(L_n)=0$. By using the Taylor expansion, we get
\begin{align*}
&\frac{\langle A_n(\l_n)\d_{\bf m},\d_{\bf n}\rangle}{g(\l_n)-g(L_n)}
=-\frac{\langle\R\!{}_n(\s_n)\d_{\bf 0},\d_{\bf 0}\rangle^2}
{\langle\R\!{}_n(\s_n)^2\d_{\bf 0},\d_{\bf 0}\rangle}\\
\times\langle(\idd-P_{r_n})&(\R\!{}_n(\eta_n)^2P_0\R\!{}_n(\eta_n)
+\R\!{}_n(\eta_n)P_0\R\!{}_n(\eta_n)^2)(\idd-P_{r_n})\d_{\bf m},\d_{\bf n}\rangle\,,
\end{align*}
where $\s_n,\eta_n\in(L_n,\l_n)$, $n=1,2,\dots$\,. Thus, \eqref{secsct1} goes to zero as the coefficient of $w_n(k)w_n(l)$ in the l.h.s. is bounded.

Concerning the second addendum in \eqref{secsct}, we get
\begin{align*}
\lim_n\int_{\bt_{2n+1}^d}
\frac{e^{\imath\langle {\bf m},{\boldsymbol\th}\rangle}}{\eps_n+\sum_{j=1}^d(1-\cos\th_j)}
\di m_n({\boldsymbol\th})
=&\int_{\bt^d}
\frac{e^{\imath\langle {\bf m},{\boldsymbol\th}\rangle}}{\eps+\sum_{j=1}^d(1-\cos\th_j)}
\di m({\boldsymbol\th})\\
=\lim_{\xi\downarrow\eps}&\int_{\bt^d}
\frac{e^{\imath\langle {\bf m},{\boldsymbol\th}\rangle}}{\xi+\sum_{j=1}^d(1-\cos\th_j)}
\di m({\boldsymbol\th})\,.
\end{align*}
In addition, $g_n(\l_n)\to2$ in \eqref{eksek3}, then 
$$
\langle R_{A_{[0,n]}}(g(\l_n))P^\perp_{w_n}\d_k,P^\perp_{w_n}\d_l\rangle\to \langle R_{A_{\bn}}(2)\d_k,\d_l\rangle
$$ 
by Proposition \ref{pefo}, which is finite because $A_{\bn}$ is transient (cf. Proposition \ref{nn11}).
Finally, $\langle P^\perp_{w_n}\d_k,P^\perp_{w_n}\d_l\rangle\to\d_{kl}$ as 
$\langle P_{w_n}\d_k,P_{w_n}\d_l\rangle=w_n(k)w_n(l)\rightarrow0$. 
The proof follows by collecting the previous facts in the formula \eqref{nn0110} giving the resolvent of the comb graph. 
\end{proof}
\begin{Rem} 
The proof of Proposition \ref{cocaszon} suggests some sufficient conditions under which 
$\langle R_{A_{\La_n}}(\l_n)Q_n\d_k\otimes\d_{\bf m},Q_n\d_l\otimes\d_{\bf n}\rangle$ converges also in the case $d\geq3$. Indeed, with the previous notations the sequences
$$
\bigg\{\frac1{\big(\frac{\l_n}2-d\big)(2n+1)^d}\bigg\}_{n\in\bn}\,,\quad \big\{\langle R_{A_{\bt_{2n+1}^d}}(\l_n)\d_{\bf0},\d_{\bf0}\rangle^{-1}\big\}_{n\in\bn}
$$
should converge, and in addition, 
$$
\lim_n\langle R_{A_{\bt_{2n+1}^d}}(\l_n)\d_{\bf0},\d_{\bf0}\rangle^{-1}>\|A_\bn\|\,.
$$
In all these situation, 
$$
\lim_n\langle R_{\La_n}(\l_n)Q_n\d_k\otimes\d_{\bf m},Q_n\d_l\otimes\d_{\bf n}\rangle
\neq\lim_{\l\downarrow\|A_{N_d}\|}\langle R_{A_{N_d}}(\l)\d_k\otimes\d_{\bf m},\d_l\otimes\d_{\bf n}\rangle\,.
$$
We leave the details to the reader.
\end{Rem}
Here, there is the main result describing locally normal states exhbiting BEC as thermodynamic limit of finite volume Gibbs states.
\begin{Thm}
\label{liift010}
For $d=1,2$, let $D\geq0$, and $v$ be the PF weight of $A_{N_d}$ given in \eqref{fifi}, together with the sequence $\{v_n\}_{n\in\bn}$ of the PF eigenvectors for $A_{\La_n}$,  normalised at 1 on the common root $(0,{\bf 0})\in \La_n$. For each sequence of the chemical potentials 
$\{\m_n\}_{n\in\bn}$ with $\m_n<\|A_{N_d}\|-\|A_{\La_n}\|$, such that $\lim_n\m_n=0$ and
$$
\lim_{\m_n\to0}\frac1{\|v_n\|^2(\|A_{N_d}\|-\|A_{\La_n}\|-\m_n)}=D\,,
$$
we get 
\begin{align*}
&\lim_n\left\langle\left(e^{(\|A_{N_d}\|-\m_n)\idd_{\ell^2(\La_n)}-A_{\La_n}}-\idd_{\ell^2(\La_n)}\right)^{-1}
P_{\ell^2(\La_n)}u_1,P_{\ell^2(\La_n)}u_2\right\rangle\\
=&\left\langle\left(e^{\|A_{N_d}\|\idd-A_{N_d}}-\idd\right)^{-1}u_1,u_2\right\rangle+D\langle u_1,v\rangle\langle v,u_2\rangle\quad u_1,u_2\in\gph\,.
\end{align*}
\end{Thm}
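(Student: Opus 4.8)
The plan is to follow the proof of Theorem~\ref{liift} step by step, substituting the comb--graph ingredients of Section~\ref{sec:comb} and Proposition~\ref{cocaszon} for the one--dimensional ones. By sesquilinearity it suffices to prove the limit for generators $u_i=e^{\imath t_iH}\d_{x_i}$, $x_i=(a_i,{\bf b}_i)\in\bn\times\bz^d$, where $H:=\|A_{N_d}\|\idd-A_{N_d}$ and each $u_i$ is a genuine unit vector of $\ell^2(N_d)$. Using the identity $\frac1{e^x-1}=f(x)+\frac1x$ with the bounded continuous $f$ of \eqref{resam}, the finite volume two--point function equals $\langle f(K_n)P_{\ell^2(\La_n)}u_1,P_{\ell^2(\La_n)}u_2\rangle+\langle R_{A_{\La_n}}(\l_n)P_{\ell^2(\La_n)}u_1,P_{\ell^2(\La_n)}u_2\rangle$, where $K_n:=(\|A_{N_d}\|-\m_n)\idd_{\ell^2(\La_n)}-A_{\La_n}$ and $\l_n:=\|A_{N_d}\|-\m_n>\|A_{\La_n}\|$. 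Since the perturbation is of density zero, $A_{\La_n}\d_x=A_{N_d}\d_x$ for $n$ large, hence $K_n\to H$ strongly, and $f$ being bounded and continuous the first summand converges to $\langle f(H)u_1,u_2\rangle$. As $v_n$ is an eigenvector of $A_{\La_n}$, the resolvent splits as $R_{A_{\La_n}}(\l_n)=Q_nR_{A_{\La_n}}(\l_n)Q_n+\dfrac{\langle\,{\bf\cdot}\,,v_n\rangle v_n}{\|v_n\|^2\big(\|A_{N_d}\|-\|A_{\La_n}\|-\m_n\big)}$, the two pieces producing respectively the excited--states and the condensate contributions; adding $\langle f(H)u_1,u_2\rangle$ to the desired limit of the first piece reconstitutes $\langle(e^{\|A_{N_d}\|\idd-A_{N_d}}-\idd)^{-1}u_1,u_2\rangle$, which is meaningful since $A_{N_d}$ is transient by Proposition~\ref{cocaszon} (cf. Theorem~\ref{tretre}).

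For the excited--states piece I expand each generator by the Cauchy integral $u_i=\frac1{2\pi\imath}\oint_{C_r}e^{\imath t_i(\|A_{N_d}\|-z)}R_{A_{N_d}}(z)\d_{x_i}\,\di z$ over a circle $C_r$, $r>2d+2$, and use $|R_{A_{N_d}}(z)_{xy}|\le R_{A_{N_d}}(r)_{xy}$ for $|z|=r$ as in the proof of Theorem~\ref{tretre}. The majorant $R_{A_{N_d}}(r)$ is summable against the weight $\psi(k,{\bf m}):=k+1$, because $A_{N_d}\psi\le(2d+2)\psi$ entrywise, so $\sum_{(k,{\bf m})}R_{A_{N_d}}(r)_{x,(k,{\bf m})}(k+1)\le\frac{\psi(x)}{r-(2d+2)}<+\infty$. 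On the other hand, rereading the estimates in the proof of Proposition~\ref{cocaszon} — where $\eps=\tfrac{\|A_{N_d}\|}{2}-d>0$ for $d\in\{1,2\}$, hence $\eps_n\to\eps>0$, $g_n(\l_n)\to2$, and every torus--resolvent factor $\|R_{A_{\bt^d_{2n+1}}}(\l_n)\|$ is bounded by a multiple of $\eps_n^{-1}$ — together with the uniform bound $|\langle R_{A_{S_n}}(\cdot)P^\perp_{w_n}\d_k,P^\perp_{w_n}\d_l\rangle|\le 11(k+1)(l+1)$ from the proof of Theorem~\ref{liift} (valid for any argument tending to $\|A_\bn\|=2$), one obtains a bound $|\langle Q_nR_{A_{\La_n}}(\l_n)Q_n\d_{(k,{\bf m})},\d_{(l,{\bf n})}\rangle|\le C(k+1)(l+1)$ uniform in $n$ and in ${\bf m},{\bf n}$. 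Inserting these estimates into the double Cauchy representation of $\langle Q_nR_{A_{\La_n}}(\l_n)Q_nu_1,u_2\rangle$ produces an integrand dominated uniformly in $n$ by a summable function, so Lebesgue dominated convergence, combined with the pointwise limit of Proposition~\ref{cocaszon}, yields $\lim_n\langle Q_nR_{A_{\La_n}}(\l_n)Q_nu_1,u_2\rangle=\langle R_{A_{N_d}}(\|A_{N_d}\|)u_1,u_2\rangle$.

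For the condensate piece the hypothesis $\frac1{\|v_n\|^2(\|A_{N_d}\|-\|A_{\La_n}\|-\m_n)}\to D$ reduces the claim to $\lim_n\langle u_1,v_n\rangle\langle v_n,u_2\rangle=\langle u_1,v\rangle\langle v,u_2\rangle$. Writing $\langle v_n,u\rangle=\frac1{2\pi\imath}\oint_{C_r}e^{\imath t(\|A_{N_d}\|-z)}\sum_{(l,{\bf n})}R_{A_{N_d}}(z)_{x,(l,{\bf n})}v_n(l,{\bf n})\,\di z$ and using the uniform bounds $w_n(l)\le l+1$ and $|r_n({\bf n})|\le r_n({\bf0})$ (bounded in $n$ since $\eps_n\to\eps>0$), so that $|v_n(l,{\bf n})|\le C'(l+1)$, the same summability of $R_{A_{N_d}}(r)$ against $\psi$ dominates the integrand uniformly in $n$; since $v_n\to v$ pointwise (Propositions~\ref{prcfi}, \ref{csette}, \ref{pfenenz}), dominated convergence gives $\langle v_n,u\rangle\to\langle v,u\rangle$, hence the claim. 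Collecting the two contributions with the limit $\langle f(H)u_1,u_2\rangle$ of the bounded part proves the theorem. The step I expect to be the main obstacle is precisely the uniform domination of $Q_nR_{A_{\La_n}}(\l_n)Q_n$: one must simultaneously control the polynomially growing base directions and the torus--fibre directions, and prevent the denominators $g_n(\l_n)-g_n(\|A_{\La_n}\|)$ occurring in the Krein formula from degenerating; this is where the restriction $d\in\{1,2\}$ is used, since it forces $\eps>0$, keeping $\eps_n$, $g_n(\l_n)$ and all the auxiliary quantities bounded away from their critical values.
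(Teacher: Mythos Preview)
Your proof is correct and follows precisely the approach indicated by the paper, whose own proof reads in full: ``By Propositions~\ref{cocaszon}, \ref{risgfor}, and Theorem~\ref{tretre}, the proof follows \emph{mutatis--mutandis} the analogous one of Theorem~\ref{liift} by using the Analytical Functional Calculus for $e^{\imath tH}$, and Lebesgue Dominated Convergence Theorem. We leave the details to the reader.'' You have supplied exactly those details: the splitting via the bounded function $f$ of~\eqref{resam}, the Cauchy--integral representation of $e^{\imath tH}\d_x$, the uniform domination of $Q_nR_{A_{\La_n}}(\l_n)Q_n$ by a multiple of $(k+1)(l+1)$ obtained by combining the $11(k+1)(l+1)$ estimate from the proof of Theorem~\ref{liift} with the boundedness of the torus--resolvent factors (here the hypothesis $d\in\{1,2\}$ enters via $\eps>0$), and the analogous domination of the condensate piece via $|v_n(l,{\bf n})|\le C'(l+1)$; each step is justified and matches what the paper intends.
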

\begin{proof}
By Propositions \ref{cocaszon}, \ref{risgfor}, and Theorem \ref{tretre}, the proof follows {\it mutatis--mutandis} the analogous one of Theorem \ref{liift} by using the Analitical Functional Calculus for $e^{\imath t H}$, and Lebesgue Dominated Convergence Theorem. We leave the details to the reader.
\end{proof}

The existence of locally normal states exhibiting BEC is assured because $A_{N_d}$ is transient. It can be constructed as thermodynamic limit by fixing the amount of the condensate as explained in Theorem \ref{liift010}. As $A_{N_d}$ exhibits Hidden Spectrum for $d=1,2$, the critical density is also finite. Thus, it is meaningful to investigate the infinite volume limit by fixing the mean density $\r$ and compute the sequence of the finite volume chemical potential $\m_n$ by solving \eqref{fvden}. The case 
$\r<\r_c$, which corresponds to $\lim_n\m_n<0$, presents no difficulty (cf. \cite{BR2}), so we limit the analysis to the
condensation regime $\r\geq0$ which corresponds to the case $\lim_n\m_n=0$. For this purpose, define for $a\in[0,+\infty]$,
\begin{equation*}
r(a):=\sum_{m=0}^{+\infty}\frac1{a+\pi^2m(m+2)}\,.
\end{equation*}
Such a function is smooth for $a\in(0,+\infty)$, and strictly decreasing with $r(0)=\lim_{a\downarrow0}r(a)=+\infty$, 
$r(\infty)=\lim_{a\to+\infty}r(a)=0$. Put in addition,
\begin{equation}
\label{acalpa}
R(a)=r(a)\,,d=1\,,\quad
R(a)=\frac1a \,,d=2\quad  a\in[0,+\infty]\,.
\end{equation}
\begin{Lemma}
\label{acalpa1}
If $\lim_nn^2\n_n=a$ then 
$$
\lim_n\sum_{m=0}^n\frac{\cos\frac{\pi(m+1)}{n+2}}{n^2\left(\n_n+4\cos\frac{\pi}{n+2}\sin^2\frac{\pi m}{2(n+2)}
+2\sin\frac{\pi}{n+2}\sin\frac{\pi m}{n+2}\right)}=r(a)\,.
$$
\end{Lemma}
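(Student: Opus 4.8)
The plan is to pass to the limit term by term inside the sum over $m$, the interchange being justified by a dominated--convergence (Tannery) argument. The key point is that the prefactor $1/n^2$ together with the vanishing denominator $\n_n\sim a/n^2$ makes the sum concentrate on small values of $m$, so it behaves like an absolutely convergent series rather than a Riemann sum.

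First I would simplify the summand. The product--to--sum formulas give
$$
4\cos\tfrac{\pi}{n+2}\sin^2\tfrac{\pi m}{2(n+2)}+2\sin\tfrac{\pi}{n+2}\sin\tfrac{\pi m}{n+2}
=2\cos\tfrac{\pi}{n+2}-2\cos\tfrac{\pi(m+1)}{n+2}
=4\sin\tfrac{\pi(m+2)}{2(n+2)}\sin\tfrac{\pi m}{2(n+2)}\,,
$$
so the $m$-th summand equals
$$
\tau^{(n)}_m:=\frac{\cos\frac{\pi(m+1)}{n+2}}{n^2\big(\n_n+4\sin\frac{\pi(m+2)}{2(n+2)}\sin\frac{\pi m}{2(n+2)}\big)}\,.
$$
For $0\le m\le n$ both sine arguments lie in $[0,\pi/2]$, hence the sines are nonnegative and, since $\n_n>0$, the denominator is positive. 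Using $\sin x\ge\frac{2x}{\pi}$ on $[0,\pi/2]$ one gets $4\sin\frac{\pi(m+2)}{2(n+2)}\sin\frac{\pi m}{2(n+2)}\ge\frac{4m(m+2)}{(n+2)^2}$, and therefore, dropping $\n_n\ge0$ and using $(n+2)^2\le 9n^2$ for $n\ge1$, $|\tau^{(n)}_m|\le\frac{9}{4\,m(m+2)}$ for $m\ge1$, a bound independent of $n$ and summable in $m$; moreover $\tau^{(n)}_0=\cos\frac{\pi}{n+2}/(n^2\n_n)$ is bounded as soon as $n^2\n_n$ is bounded away from $0$.

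Next I would compute the termwise limits: for fixed $m$ the numerator tends to $1$, while $n^2\n_n\to a$ and $4n^2\sin\frac{\pi(m+2)}{2(n+2)}\sin\frac{\pi m}{2(n+2)}\to\pi^2 m(m+2)$ (using $n^2/(n+2)^2\to1$ and $\sin t\sim t$), so $\tau^{(n)}_m\to\frac{1}{a+\pi^2 m(m+2)}$, which for $m=0$ reads $1/a$. For $a\in(0,+\infty)$ the dominated convergence theorem for series (in the form already used in the paper) then yields $\lim_n\sum_{m=0}^n\tau^{(n)}_m=\sum_{m=0}^{\infty}\frac{1}{a+\pi^2 m(m+2)}=r(a)$. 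The extreme values are treated separately: if $a=+\infty$ every $\tau^{(n)}_m\to0$ and the summable bound forces the total to $0=r(\infty)$; if $a=0$ then $\tau^{(n)}_0\to+\infty$ while $\sum_{m\ge1}\tau^{(n)}_m\ge-\sum_{m\ge1}\frac{9}{4m^2}$ remains bounded below, so the sum diverges to $+\infty=r(0)$.

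The only genuinely delicate point is the $n$-uniform control of the terms with $m$ comparable to $n$; the displayed identity settles this at once, since there the denominator is of order $n^2$, each such term is $O(1/m^2)$, and the corresponding block contributes $o(1)$ --- this is precisely why the limit is the absolutely convergent series $r(a)$ and not a Riemann integral.
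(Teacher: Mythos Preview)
Your proof is correct and follows essentially the same strategy as the paper: separate off the $m=0$ term, then apply dominated convergence (Tannery) to the tail $m\ge1$. Your trigonometric factorization of the denominator as $4\sin\frac{\pi(m+2)}{2(n+2)}\sin\frac{\pi m}{2(n+2)}$ and the resulting clean bound $|\tau^{(n)}_m|\le\frac{9}{4m(m+2)}$ are in fact tidier than the paper's dominating sequence $\frac{1}{2m^2\sin^2\frac{\pi m}{2(m+2)}}$, and your explicit treatment of the endpoints $a=0,+\infty$ fills in cases the paper leaves implicit.
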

\begin{proof}
It is enough to show that
$$
\lim_n\sum_{m=1}^n\frac{\cos\frac{\pi(m+1)}{n+2}}{n^2\left(\n_n+4\cos\frac{\pi}{n+2}\sin^2\frac{\pi m}{2(n+2)}
+2\sin\frac{\pi}{n+2}\sin\frac{\pi m}{n+2}\right)}=
\sum_{m=1}^{+\infty}\frac1{a+\pi^2m(m+2)}\,.
$$
We first notice that
$$
\frac{\chi_{[1,n]}(m)}{n^2\left(\n_n+4\cos\frac{\pi}{n+2}\sin^2\frac{\pi m}{2(n+2)}
+2\sin\frac{\pi}{n+2}\sin\frac{\pi m}{n+2}\right)}
\leq\frac1{2m^2\sin^2\frac{\pi m}{2(m+2)}}\,,
$$
with
$$
\sum_{m=1}^{+\infty}\frac1{2m^2\sin^2\frac{\pi m}{2(m+2)}}<+\infty\,.
$$
The proof follows from Lebesgue Dominated Convergence Theorem.
\end{proof}
We now pass to analyse the behaviour of the infinite volume limit of finite volume two--point function and mean density associated to the Bose--Gibbs grand canonical ensemble. When the mean density is fixed, the corresponding finite volume sequence $\{\m_n\}_{n\in\bn}$ of chemical potential associated to the exhaustion
$\{\La_n\}_{n\in\bn}$ (with $\m_n:=\m(\La_n)$), is determined by solving \eqref{fvden}. As usual we limit the analysis to the condensation regime $\r\geq\r_c$, where for $A_{N_d}$,
$$
\r_c=\frac1{(2\pi)^d}\int_{\bt^d}\frac{\di^d\boldsymbol\th}{e^{(\|A_{N_d}\|-2\sum_{i=1}^d\cos\th_i)}-1}\,,
$$
which leads to $\lim_n\m_n=0$.
\begin{Prop}
\label{fikge}
Let $\{\m_n\}_{n\in\bn}$ be any sequence of chemical potentials such that $\m_n<\|A_{N_d}\|-\|A_{\La_n}\|$ and $\lim_n\m_n=0$.
Suppose further that 
$$
\lim_nn^{d_G(N_d)}(\|A_{N_d}\|-\|A_{\La_n}\|-\m_n)=a\in[0,+\infty]\,.
$$
Then
\begin{align*}
\lim_n\t_n\bigg[&\left(e^{[(\|A_{N_d}\|-\m_n)\idd_{\ell^2(\La_n)}-A_{\La_n}]}-\idd_{\ell^2(\La_n)}\right)^{-1}\bigg]\\
&=\r_c+2^{3-d_G(N_d)}R(b)\|R_{A_{\bz^d}}(\|A_{N_d}\|)\d_{\bf 0}\|^2\,,
\end{align*}
where $R$ is given in \eqref{acalpa}, and
$$
b=\frac{\langle R_{A_{\bz^d}}(\|A_{N_d}\|)^2\d_{\bf 0},\d_{\bf 0}\rangle}
{\langle R_{A_{\bz^d}}(\|A_{N_d}\|)\d_{\bf 0},\d_{\bf 0}\rangle^2}a\,.
$$
\end{Prop}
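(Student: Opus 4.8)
The plan is to split the occupation number operator using the bounded continuous function $f$ of \eqref{resam}, $\tfrac1{e^x-1}=f(x)+\tfrac1x$, and to treat the two resulting pieces separately. Set $\l_n:=\|A_{N_d}\|-\m_n$ and $K_n:=\l_n\idd_{\ell^2(\La_n)}-A_{\La_n}$; the hypothesis $\m_n<\|A_{N_d}\|-\|A_{\La_n}\|$ forces $\l_n>\|A_{\La_n}\|$, so $K_n\ge(\|A_{N_d}\|-\|A_{\La_n}\|-\m_n)\idd>0$ and $(e^{K_n}-\idd)^{-1}$ is a well defined bounded operator, with
$$
\t_n\big[(e^{K_n}-\idd)^{-1}\big]=\t_n\big[f(K_n)\big]+\t_n\big[R_{A_{\La_n}}(\l_n)\big]\,.
$$
For the first term I would write $K_n=H_{\La_n}+(\|A_{N_d}\|-\|A_{\La_n}\|-\m_n)\idd$ with $H_{\La_n}:=\|A_{\La_n}\|\idd-A_{\La_n}\ge0$; since the shift tends to $0$ and $f$ is uniformly continuous on $[0,+\infty)$, $\|f(K_n)-f(H_{\La_n})\|\to0$, so $\t_n[f(K_n)]$ and $\t_n[f(H_{\La_n})]$ have the same limit. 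By Proposition \ref{pradneg} the finite volumes $\La_n=S_n\comb\bt^d_{2n+1}$ differ from the ``cut'' ones only by negligibly many edges, and $N_d$ is a negligible additive perturbation of $\bigsqcup_\bn\bz^d$ (Proposition \ref{infneg}), so by Theorem \ref{density0} the IDS of $A_{N_d}$ is that of $\bz^d$; as $H_{N_d}$ has finite propagation and $N_d$ is amenable, $\t_n[f(H_{\La_n})]\to\int f(h)\,dN_{H_{N_d}}(h)$ by Theorem 2.1 of \cite{F1}.

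For the resolvent term I would apply the Krein formula \eqref{nn0110} with base $S_n$, fibre $\bt^d_{2n+1}$ and root ${\bf0}$ (legitimate since $\l_n>\|A_{\La_n}\|$), take the trace, and simplify using $R_{A_{S_n}}(g_n(\l_n))A_{S_n}=g_n(\l_n)R_{A_{S_n}}(g_n(\l_n))-\idd$ and $\tr\big[R_{A_{\bt^d_{2n+1}}}(\l_n)P_{{\bf0}}R_{A_{\bt^d_{2n+1}}}(\l_n)\big]=\|R_{A_{\bt^d_{2n+1}}}(\l_n)\d_{\bf0}\|^2$, with $g_n$ as in \eqref{gnegne}. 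Dividing by $|V\La_n|=(n+1)(2n+1)^d$ this gives
$$
\t_n\big[R_{A_{\La_n}}(\l_n)\big]=\frac{\tr R_{A_{\bt^d_{2n+1}}}(\l_n)}{(2n+1)^d}+g_n(\l_n)\,\|R_{A_{\bt^d_{2n+1}}}(\l_n)\d_{\bf0}\|^2\,\frac{\tr\big[A_{S_n}R_{A_{S_n}}(g_n(\l_n))\big]}{(n+1)(2n+1)^d}\,.
$$
The first summand is a Riemann sum; since $d\in\{1,2\}$ has Hidden Spectrum (Proposition \ref{hspcn}) one has $\|A_{N_d}\|>2d$, so the integrand is smooth and, by Lemma 10.9 of \cite{FGI1}, the sum converges to $\langle R_{A_{\bz^d}}(\|A_{N_d}\|)\d_{\bf0},\d_{\bf0}\rangle$; writing $N_{H_{N_d}}$ as the push-forward of Haar measure under $\boldsymbol\th\mapsto\|A_{N_d}\|-2\sum_i\cos\th_i$, this summand together with $\int f(h)\,dN_{H_{N_d}}(h)$ produces exactly $\int(f(h)+1/h)\,dN_{H_{N_d}}(h)=\r_c$. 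In the second summand $g_n(\l_n)\to\langle R_{A_{\bz^d}}(\|A_{N_d}\|)\d_{\bf0},\d_{\bf0}\rangle^{-1}=\|A_\bn\|=2$ by the Secular Equation \eqref{sigek}, and $\|R_{A_{\bt^d_{2n+1}}}(\l_n)\d_{\bf0}\|^2\to\langle R_{A_{\bz^d}}(\|A_{N_d}\|)^2\d_{\bf0},\d_{\bf0}\rangle=\|R_{A_{\bz^d}}(\|A_{N_d}\|)\d_{\bf0}\|^2$.

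It then remains to evaluate $\lim_n\frac{\tr[A_{S_n}R_{A_{S_n}}(g_n(\l_n))]}{(n+1)(2n+1)^d}$. I would set $\n_n:=g_n(\l_n)-\|A_{S_n}\|=g_n(\l_n)-2\cos\frac\pi{n+2}$, positive because $g_n$ is strictly increasing with $g_n(\|A_{\La_n}\|)=\|A_{S_n}\|$ (the finite-volume Secular Equation) while $\l_n>\|A_{\La_n}\|$, and use the mean value theorem: $\n_n=g_n'(\xi_n)\big(\|A_{N_d}\|-\|A_{\La_n}\|-\m_n\big)$ with $\xi_n\in(\|A_{\La_n}\|,\l_n)$ and $g_n'(\l)=g_n(\l)^2\|R_{A_{\bt^d_{2n+1}}}(\l)\d_{\bf0}\|^2$, so $g_n'(\xi_n)\to\langle R_{A_{\bz^d}}(\|A_{N_d}\|)\d_{\bf0},\d_{\bf0}\rangle^{-2}\langle R_{A_{\bz^d}}(\|A_{N_d}\|)^2\d_{\bf0},\d_{\bf0}\rangle=b/a$, whence $n^{d_G(N_d)}\n_n\to b$. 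Using the eigenvalues $2\cos\frac{\pi(m+1)}{n+2}$, $m=0,\dots,n$, of $A_{S_n}$ (cf. \cite{VM}) and elementary trigonometry,
$$
\tr\big[A_{S_n}R_{A_{S_n}}(g_n(\l_n))\big]=\sum_{m=0}^n\frac{2\cos\frac{\pi(m+1)}{n+2}}{\n_n+4\cos\frac\pi{n+2}\sin^2\frac{\pi m}{2(n+2)}+2\sin\frac\pi{n+2}\sin\frac{\pi m}{n+2}}\,.
$$
For $d=1$ ($d_G(N_1)=2$, so $n^2\n_n\to b$) Lemma \ref{acalpa1} gives $\frac{\tr[A_{S_n}R_{A_{S_n}}(g_n(\l_n))]}{(n+1)(2n+1)}\to r(b)=R(b)$. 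For $d=2$ ($d_G(N_2)=3$, so $n^2\n_n\to0$) the bound $\sin\frac{\pi m}{2(n+2)}\ge\frac m{n+2}$ ($0\le m\le n$) shows $\sum_{m\ge1}(\cdots)=O(n^2)=o((n+1)(2n+1)^2)$, while the $m=0$ term equals $2\cos\frac\pi{n+2}/\n_n$ and, since $\n_n(n+1)(2n+1)^2\to4b$, $\frac{\tr[A_{S_n}R_{A_{S_n}}(g_n(\l_n))]}{(n+1)(2n+1)^2}\to\frac1{2b}=\tfrac12R(b)$. Multiplying by the prefactor, which tends to $2\|R_{A_{\bz^d}}(\|A_{N_d}\|)\d_{\bf0}\|^2$, the second summand converges to $2^{3-d_G(N_d)}R(b)\|R_{A_{\bz^d}}(\|A_{N_d}\|)\d_{\bf0}\|^2$; adding the $\r_c$ contribution gives the claim. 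The extreme cases $a=0$ ($b=0$, $R(b)=+\infty$, both sides $+\infty$) and $a=+\infty$ ($b=+\infty$, $R(b)=0$, both sides $\r_c$) follow from the same formulas.

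The main obstacle will be the interchange of $\lim_n$ with the sum over the base modes in $\tr[A_{S_n}R_{A_{S_n}}(g_n(\l_n))]$: the individual summands blow up like $n^2$, so a finite limit survives only after dividing by $|V\La_n|$ and summing, and controlling this is precisely what Lemma \ref{acalpa1} does for $d=1$ (the explicit tail estimate taking its place for $d=2$). Everything hinges on first identifying the exact rate $n^{d_G(N_d)}\n_n\to b$, which explains why the hypothesis carries the exponent $d_G(N_d)$ and why both the Secular Equation (forcing $g_n(\l_n)\to\|A_\bn\|$) and the convergence $\|R_{A_{\bt^d_{2n+1}}}(\l_n)\d_{\bf0}\|^2\to\langle R_{A_{\bz^d}}(\|A_{N_d}\|)^2\d_{\bf0},\d_{\bf0}\rangle$ are needed.
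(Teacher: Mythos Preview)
Your proposal is correct and follows essentially the same route as the paper: split via $f$ from \eqref{resam}, apply the Krein formula \eqref{nn0110} to the resolvent piece, identify the Riemann-sum term together with the $f$-contribution as $\r_c$, and reduce the remaining base-graph sum to Lemma~\ref{acalpa1} for $d=1$ (respectively the isolated $m=0$ term for $d=2$) after passing from $\l_n-\|A_{\La_n}\|$ to $\n_n$ via the mean value/Taylor expansion of $g_n$. The paper compresses your treatment of the $f$-part and the first resolvent summand into a single error term $\D_n\to0$, citing Proposition~5.6 of \cite{F2}, while you spell this out explicitly via uniform continuity of $f$ and the IDS argument; otherwise the computations coincide line by line.
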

\begin{proof}
By taking into account the form of the Resolvent in Proposition \ref{risgfor}, and reasoning as in Proposition 5.6 of \cite{F2}, we get with $g_n$ given in \eqref{gnegne},
\begin{align*}
\r_{\La_n}(\m_n)-\r_c=&\D_n
+\frac{g_n(\l_n)}{(2n+1)^d(n+1)}\sum_{|k_j|\leq n}|\langle R_{A_{\bt^d_{2n+1}}}(\l_n)\d_{\bf k},\d_{\bf 0}\rangle|^2\\
\times&\sum_{m=0}^n\frac{2\cos\frac{\pi(m+1)}{n+2}}{\n_n+4\cos\frac{\pi}{n+2}\sin^2\frac{\pi m}{2(n+2)}
+2\sin\frac{\pi}{n+2}\sin\frac{\pi m}{n+2}}\,.
\end{align*}
Here, $\n_n:=g_n(\l_n)-2\cos\frac{\pi}{n+2}$, $g_n(\l_n)\to2$, and for the continuous function $f$ given in \eqref{resam} with obvious notations,
\begin{align*}
\D_n:=&\t^{N_d}_n\left[f\big((\|A_{N_d}\|-\m_n)\idd_n-A_{\La_n}\big)\right]-\t^{N_d}\left[f\big(\|A_{N_d}\|\idd-A_{N_d}\big)\right]\\
+&\t^{\bz^d}_n\big(R_{A_{\bt^d_{2n+1}}}(\l_n)\big)-\t^{\bz^d}\big(R_{A_{\bz^d}}(\|A_{N_d}\|)\big)
\end{align*}
goes to $0$ as $n\to+\infty$. By using Taylor expansion of the function $g_n$ in the right neighbourhood of $\|A_{S_n}\|$, we get with 
$\l_n=\|A_{N_1}\|-\m_n$,
$$
g_n(\l_n)-\|A_{S_n}\|=g'(\s_n)(\l_n-\|A_{\La_n}\|)\,,
$$
with $\s_n\in(\|A_{\La_n}\|,\l_n)$. Thus, $n^l(\l_n-\|A_{\La_n}\|)\to a$ implies $n^l\n_n\to b$ with $b$ as above. Now, if $d=1$ the proof directly follows by Lemma \ref{acalpa1}. If $d=2$, in the sum
$$
\frac1n\sum_{m=0}^n\frac1{n^2\left(\n_n+4\cos\frac{\pi}{n+2}\sin^2\frac{\pi m}{2(n+2)}
+2\sin\frac{\pi}{n+2}\sin\frac{\pi m}{n+2}\right)}
$$
only the term corresponding to $m=0$ survives in the limit $n\to\infty$, thanks again to Lemma \ref{acalpa1}, and the proof follows withf $R$ given in \eqref{acalpa}.
\end{proof}
The following result explains the differences between the two alternatives 
$d=1$ which leads to some unexpected effects due to $d_{PF}>d_G$, and $d=2$ where $d_{PF}=d_G$ where the emerging results parallel the known ones for homogeneous systems. We formulate the following result in a way to encode also those for which $d_{PF}<d_G$ occurring in the forthcoming section, provided that the Adjacency is transient and the critical density is finite, see Remark \ref{frfinr}.
Indeed, let $\a(x)$ be the extended--valued function defined as 
\begin{equation}
\label{aficas}
\a(x)= 
     \begin{cases}
     0\,,&x<0\,,\\
1\,,&x=0\,,\\
+\infty\,,&x>0\,.
     \end{cases}
\end{equation}
\begin{Prop}
\label{ifcavozt}
Fix $\r\geq\r_c$. With the sequence of chemical potential $\{\m_n\}_{n\in\bn}$ obtained by solving \eqref{fvden}, we get for
$u_1,u_2\in\gph$,
\begin{align}
\label{liift0110ca}
&\lim_n\left\langle\left(e^{(\|A_{N_d}\|-\m_n)\idd_{\ell^2(\La_n)}-A_{\La_n}}-\idd_{\ell^2(\La_n)}\right)^{-1}
P_{\ell^2(\La_n)}u_1,P_{\ell^2(\La_n)}u_2\right\rangle\nn\\
=&\left\langle\left(e^{\|A_{N_d}\|\idd-A_{N_d}}-\idd\right)^{-1}u_1,u_2\right\rangle\\
+&
\frac{2\pi^2(\r-\r_c)\a\big(d_G(N_d)-d_{PF}(N_d)\big)}
{\langle R_{A_{\bz^d}}(\|A_{N_d}\|)\d_{\bf 0},\d_{\bf 0}\rangle^2
\sum_{{\bf k}\in\bz^d}|\langle R_{A_{\bz^d}}(\|A_{N_d}\|)\d_{\bf k},\d_{\bf 0}\rangle|^2}
\langle u_1,v\rangle\langle v,u_2\rangle\nn\,.
\end{align}
\end{Prop}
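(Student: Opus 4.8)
The plan is to split the finite--volume two--point function \eqref{sato} into an ``excited'' and a ``condensate'' contribution, as in the proof of Theorem \ref{liift010}, and to identify the limiting coefficient of the condensate term with the quantity $D$ appearing in \eqref{kmsbecd}. Set $\zeta_n:=\|A_{N_d}\|-\|A_{\La_n}\|-\m_n>0$ and let $Q_n$ be the orthogonal projection onto the complement of the finite--volume PF eigenvector $v_n=w_n\otimes r_n$ of $A_{\La_n}$. Writing $\frac1{e^x-1}=f(x)+\frac1x$ with $f$ the bounded function \eqref{resam}, and splitting $R_{A_{\La_n}}$ along $P_{v_n}$ and $Q_n$ (both commuting with $A_{\La_n}$), one has for $u_1,u_2\in\gph$ the exact identity
\begin{align*}
&\left\langle\left(e^{(\|A_{N_d}\|-\m_n)\idd-A_{\La_n}}-\idd\right)^{-1}P_{\ell^2(\La_n)}u_1,P_{\ell^2(\La_n)}u_2\right\rangle\\
&=\langle f((\|A_{N_d}\|-\m_n)\idd-A_{\La_n})P_{\ell^2(\La_n)}u_1,P_{\ell^2(\La_n)}u_2\rangle\\
&\quad+\langle Q_nR_{A_{\La_n}}(\|A_{N_d}\|-\m_n)Q_nu_1,u_2\rangle+\frac{\langle u_1,v_n\rangle\,\langle v_n,u_2\rangle}{\|v_n\|^2\,\zeta_n}\,.
\end{align*}
By the contour--integral/dominated--convergence argument of Theorem \ref{liift} the first line on the right tends to $\langle f(\|A_{N_d}\|\idd-A_{N_d})u_1,u_2\rangle$; by Proposition \ref{cocaszon} the second term tends to $\langle R_{A_{N_d}}(\|A_{N_d}\|)u_1,u_2\rangle$, which is finite since $A_{N_d}$ is transient; their sum is $\langle(e^{\|A_{N_d}\|\idd-A_{N_d}}-\idd)^{-1}u_1,u_2\rangle$; and $\langle u_j,v_n\rangle\to\langle u_j,v\rangle$ for the PF weight $v$ of \eqref{fifi}. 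Thus everything reduces to computing $\lim_n(\|v_n\|^2\zeta_n)^{-1}$ and matching it with $2\pi^2(\r-\r_c)\,\a(d_G(N_d)-d_{PF}(N_d))/(q_0^2q_2)$, where I abbreviate $q_0:=\langle R_{A_{\bz^d}}(\|A_{N_d}\|)\d_{\bf0},\d_{\bf0}\rangle$ and $q_2:=\|R_{A_{\bz^d}}(\|A_{N_d}\|)\d_{\bf0}\|^2=\sum_{{\bf k}\in\bz^d}|\langle R_{A_{\bz^d}}(\|A_{N_d}\|)\d_{\bf k},\d_{\bf0}\rangle|^2$.

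The second ingredient is the asymptotics $\|v_n\|^2=\|w_n\|^2\,\|r_n\|^2\sim\frac{n^3}{2\pi^2}\,q_2$. Indeed $\|w_n\|^2=\frac{n+2}{2\sin^2(\pi/(n+2))}\sim n^3/(2\pi^2)$ by the exact trigonometric identity used in Proposition \ref{pfene}, while $\|r_n\|^2\to q_2$: Hidden Spectrum for $d=1,2$ (Proposition \ref{hspcn}) gives $\|A_{N_d}\|>2d$, so $R_{A_{\bz^d}}(\|A_{N_d}\|)$ is bounded with $R_{A_{\bz^d}}(\|A_{N_d}\|)\d_{\bf0}\in\ell^2(\bz^d)$, and $r_n=R_{A_{\bt_{2n+1}^d}}(L_n)\d_{\bf0}$ converges to $R_{A_{\bz^d}}(\|A_{N_d}\|)\d_{\bf0}$ in $\ell^2(\bz^d)$ exactly as in the proof of Proposition \ref{pfenenz}.

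The main step is to extract the exact decay rate of $\zeta_n$ from the density constraint \eqref{fvden}. Since $\r_{\La_n}(\m_n)$ equals the normalised trace $\t_n[(e^{(\|A_{N_d}\|-\m_n)\idd-A_{\La_n}}-\idd)^{-1}]$, Proposition \ref{fikge} says that along any subsequence on which $n^{d_G(N_d)}\zeta_n\to a\in[0,+\infty]$ one has $\r=\r_c+2^{3-d_G(N_d)}R(b)\,q_2$, with $R$ as in \eqref{acalpa} and $b=(q_2/q_0^2)\,a$ (using self--adjointness of $R_{A_{\bz^d}}(\|A_{N_d}\|)$ to write $\langle R_{A_{\bz^d}}(\|A_{N_d}\|)^2\d_{\bf0},\d_{\bf0}\rangle=q_2$). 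Because $R$ is a strictly decreasing bijection of $[0,+\infty]$ onto $[0,+\infty]$, this equation pins down $b$, hence $a$, uniquely, so in fact $n^{d_G(N_d)}\zeta_n\to a$ along the whole sequence, with $R((q_2/q_0^2)a)=(\r-\r_c)/(2^{3-d_G(N_d)}q_2)$. Hence $\|v_n\|^2\zeta_n\sim\frac{n^{3-d_G(N_d)}}{2\pi^2}\,q_2\,a$. For $d=2$ one has $d_G(N_d)=3=d_{PF}(N_d)$ and $R(b)=1/b$, giving $a=q_0^2/(\r-\r_c)$ when $\r>\r_c$ (and $a=+\infty$ when $\r=\r_c$), so $(\|v_n\|^2\zeta_n)^{-1}\to2\pi^2(\r-\r_c)/(q_0^2q_2)=2\pi^2(\r-\r_c)\a(0)/(q_0^2q_2)$. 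For $d=1$ one has $d_G(N_d)=2<3=d_{PF}(N_d)$, so $n^{3-d_G(N_d)}=n\to+\infty$ while $q_2a\in(0,+\infty]$; thus $\|v_n\|^2\zeta_n\to+\infty$ and $(\|v_n\|^2\zeta_n)^{-1}\to0=2\pi^2(\r-\r_c)\a(-1)/(q_0^2q_2)$. In both cases the coefficient is $2\pi^2(\r-\r_c)\a(d_G(N_d)-d_{PF}(N_d))/(q_0^2q_2)$, which is \eqref{liift0110ca}.

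I expect the genuinely delicate points to be: the subsequence/uniqueness step that upgrades Proposition \ref{fikge} into a statement about the full sequence $\{\zeta_n\}$, which relies on strict monotonicity of $R$ and on $a\mapsto b$ being linear and invertible; and the bookkeeping of the constant $2\pi^2$ coming from the $\|w_n\|^2$ asymptotics together with the boundary case $\r=\r_c$, where $a=+\infty$, the condensate vanishes in the limit, and both sides of \eqref{liift0110ca} collapse to the bare excited term. All the analysis proper is already contained in Propositions \ref{cocaszon} and \ref{fikge}; what remains here is essentially the assembly of these pieces.
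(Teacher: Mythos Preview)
Your proposal is correct and follows essentially the same route as the paper's proof: both reduce to Theorem \ref{liift010} after using Proposition \ref{fikge} together with the strict monotonicity of $R$ to pin down $\lim_n n^{d_G(N_d)}\zeta_n=a$ along the full sequence, and then read off the resulting value of $D=\lim_n(\|v_n\|^2\zeta_n)^{-1}$ in the two cases $d=1,2$. The paper compresses the argument by simply saying ``this corresponds to $D=\ldots$ in Theorem \ref{liift010}'', whereas you unwind the decomposition $P_{v_n}\oplus Q_n$ and the $\|v_n\|^2\sim n^3q_2/(2\pi^2)$ asymptotics explicitly; the content is the same. One small point of phrasing: when you write ``by Proposition \ref{cocaszon} the second term tends to $\langle R_{A_{N_d}}(\|A_{N_d}\|)u_1,u_2\rangle$'', that proposition literally treats only matrix elements between $\d_k\otimes\d_{\bf m}$'s, and passing to general $u_j\in\gph$ needs the contour--integral/dominated--convergence step you already invoke for the $f$-term---this is exactly what Theorem \ref{liift010} packages, so it is cleanest to cite that theorem directly for both pieces once $D$ is identified.
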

\begin{proof}
We start by recalling that for the case $\r\geq\r_c$, necessarily $\lim_n\m_n=0$, see e.g. 
\cite{BR2, BCRSV, F, F2, FGI1}. We start by considering each converging subsequence 
$\big\{n_k^{d_G(N_d)}(\|A_{N_d}\|-\|A_{\La_{n_k}}\|-\m_{n_k})\big\}_{k\in\bn}$. By Proposition \ref{fikge}, we necessarily get 
$$
\lim_kn_k^{d_G(N_d)}(\|A_{N_d}\|-\|A_{\La_{n_k}}\|-\m_{n_k})=a\,,
$$ 
where $a$ is the unique solution of the equation
$$
\r-\r_c=2^{3-d_G(N_d)}R\left(a\frac{\langle R_{A_{\bz^d}}(\|A_{N_d}\|)^2\d_{\bf0},\d_{\bf0}\rangle}{\langle R_{A_{\bz^d}}(\|A_{N_d}\|)\d_{\bf0},\d_{\bf0}\rangle^2}\right)\sum_{{\bf k}\in\bz^d}|\langle R_{A_{\bz^d}}(\|A_{N_d}\|)\d_{\bf k},\d_{\bf0}\rangle|^2\,.
$$
Thus, $\lim_nn^{d_G(N_d)}(\|A_{N_d}\|-\|A_{\La_{n}}\|-\m_{n})=a\in(0,+\infty]$ with $R$ given in \eqref{acalpa},
where the case $+\infty$ corresponds to $\r=\r_c$. If $d=2$, thanks to Propositions \ref{pfene}, \ref{pfenenz}, and \eqref{fifi}, the latter corresponds to $D=0$ in Theorem \ref{liift010}, and the remaining ones to $D>0$ with 
$$
D=\frac{2\pi^2(\r-\r_c)}
{\langle R_{A_{\bz^d}}(\|A_{N_d}\|)\d_{\bf 0},\d_{\bf 0}\rangle^2
\sum_{{\bf k}\in\bz^d}|\langle R_{A_{\bz^d}}(\|A_{N_d}\|)\d_{\bf k},\d_{\bf 0}\rangle|^2}\,.
$$
If $d=1$, in all these cases
$\lim_nn^3(\|A\|-\|A_{n}\|-\m_{n})=+\infty$, which correspond to $D=0$ in Theorem \ref{liift010} thanks again to Proposition \ref{pfenenz}.
\end{proof}
\begin{Rem}${}$\\
{\bf(i)} Concerning the infinite volume limit of finite volume Gibbs states in Theorem \ref{liift010}, in order to construct states exhibiting a condensate density $D>0$, analogous considerations as those in Remark \ref{reenai} can be done with the obvious modification of (iv). Indeed, for the infinite volume behaviour of the finite volume condensate when $D=0$ we have
\begin{align*}
\r^{{\text cond}}&\big(\om_{\La_n}\big)=\frac{\|v_n\|^2}{|\La_n|}\frac1{\|v_n\|^2(\|A\|-\|A_n\|-\m_n)}\\
=&\bigg(\frac1{2\pi^2}+o(1)\bigg)\frac{n^{2-d}}{\|v_n\|^2(\|A\|-\|A_n\|-\m_n)}\,,
\end{align*}
whose possible limit depends for $d=1$, on the rate of the convergence to 0 of $\frac1{\|v_n\|^2(\|A\|-\|A_n\|-\m_n)}$.\\
\vskip.1cm
\noindent
{\bf(ii)} If conversely we fix the density $\r>\r_c$, we must distinguish the case $d=1$ where $d_{PF}>d_G$, from $d=2$ where $d_{PF}=d_G$. If $d=1$ we always have 
$\r(\om)=\r_c$ (hence $\r_{\La_n}(\om)=\r_c$ for all the limiting local densities of the condensate), even if we have the constrain 
$\r(\om_{\La_n})=\r>\r_c$. This is the simple consequence of the fact that 
$d_{PF}>d_G$.
\vskip.1cm
\noindent
{\bf(iii)} The case $d=2$ behaves like transient homogeneous lattices because $d_{PF}=d_G$. In this case, 
$\r^{cond}(\om)\propto(\r-\r_c)$, but the proportionality constant is different from 1. This can happen already in the homogeneous known cases, see e.g. Theorem 5.2.32 of \cite{BR2}.
\end{Rem}

\section{The comb graph $\bz^d\comb\bz$}
\label{sec:ze}

Another case of interest for our purposes for which all the calculations can be carried out, is the comb graphs $C_d:=\bz^d\comb\bz$. In this situation, we use the periodic boundary condition for the Adjacency of the finite volume theories on both base and fiber space, without affecting the substance of the analysis. Namely, $\La_n:=\bt^d_{2n+1}\comb\bt_{2n+1}$, and the finite volume Adjacencies $A_{\La_n}$ again provide an additive negligible perturbation of $\bz^d$--copies of $\bz$, see Proposition \ref{pradneg}. Thanks to 
$$
\|A_{C_d}\|=2\sqrt{d^2+1}>2=\|A_{\bz}\|\,,
$$
all the networks $C_d$ admits Hidden Spectrum (for the adjacency matrix), then the critical density of the Pure Hopping model is always finite. In addition, $A_{C_d}$ is transient if and only if $d\geq3$, then the investigation of the BEC is meaningful only in this situation, see Proposition \ref{roec}.
Finally,
$$
d=d_{PF}(C_d)<d_G(C_d)=d+1\,,
$$
then we will be in the opposite situation of $N_1$ of Section \ref{snd}. We also refer the reader to Sections 9 and 10 of \cite{FGI1} containing some previous results about 
the Pure Hopping model on the comb graphs $C_d$.

Consider the PF eigenvector $v_n$ of $A_{\La_n}$, normalised at 1 on the common root $({\bf0},0)$. It converges point--wise to the PF weight on $C_d$ given by 
\begin{equation*}
v({\bf k},m)=\frac{d}{2\pi}\int_{\bt}\frac{e^{\imath m\th}}{\sqrt{d^2+1}-\cos\th}
\di\th\,,\quad ({\bf k},m)\in\bz^d\times\bz\,.
\end{equation*} 
As in the previous section, we pass to the investigation of the infinite volume limit of the finite volume approximations. We start with the two--point function, and for such a purpose consider the self--adjoint projection $Q_n=P_{v_n}^\perp$ onto the codimension $1$ orthogonal subspace to $v_n$ in $\ell^2(\La_n)$. For $d\geq3$, we can construct locally normal states exhibiting BEC by infinite volume limits of Bose--Gibbs states, as we are going to see. 
\begin{Prop}
\label{cocaszon00}
Consider each sequence $\{\l_n\}_{n\in\bn}$ with $\l_n>\|A_{\La_n}\|$, $n\in\bn$, such that
$\lim_n\l_n=\|A_{C_d}\|$. The following assertions hold true. If $d\geq3$, then 
\begin{align*}
&\lim_n\langle R_{A_{\La_n}}(\l_n)Q_n\d_{\bf k}\otimes\d_m,Q_n\d_{\bf l}\otimes\d_j\rangle\\
=&\frac{\prod_{i=1}^d\d_{k_il_i}}{4\pi}\int_{-\pi}^{\pi}
\frac{e^{\imath(m-j)\th}}{\sqrt{d^2+1}-\cos\th}
\di\th
+\frac{d}{2(2\pi)^{d+2}}\int_{\bt^d}
\frac{e^{\imath\langle {\bf k}-{\bf l},{\boldsymbol\th}\rangle}\sum_{j=1}^d\cos\th_j}{d-\sum_{j=1}^d-\cos\th_j}
\di^d{\boldsymbol\th}\\
\times&
\int_{-\pi}^{\pi}\!\int_{-\pi}^{\pi}
\frac{e^{\imath(m\a-j\b)}}
{(\sqrt{d^2+1}-\cos\a)(\sqrt{d^2+1}-\cos\b)}
\di\a\di\b\\
=&\lim_{\l\downarrow\|A_{C_d}\|}\langle R_{A_{C_d}}(\l)\d_{\bf k}\otimes\d_m,\d_{\bf l}\otimes\d_j\rangle\,.
\end{align*} 
If $d=1,2$ then
\begin{align*}
\lim_n\langle R_{A_{\La_n}}&(\l_n)Q_n\d_{\bf k}\otimes\d_m,Q_n\d_{\bf k}\otimes\d_m\rangle
=+\infty\\
=\lim_{\l\downarrow\|A_{C_d}\|}&\langle R_{A_{C_d}}(\l)\d_{\bf k}\otimes\d_m,\d_{\bf k}\otimes\d_m\rangle\,.
\end{align*} 
\end{Prop}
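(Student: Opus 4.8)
The plan is to reproduce, with the roles of base and fibre interchanged, the argument of Proposition~\ref{cocaszon}: here the base is $\bz^d$ (approximated by the periodic tori $\bt^d_{2n+1}$) and the fibre is $\bz$ (approximated by $\bt_{2n+1}$), so that $A_{\La_n}=A_{\bt^d_{2n+1}}\otimes P_0+\idd\otimes A_{\bt_{2n+1}}$, and the split $d\geq3$ versus $d\in\{1,2\}$ will come exactly from the transience/recurrence of $A_{\bz^d}$, i.e.\ from whether $\boldsymbol\th\mapsto(2d-2\sum_j\cos\th_j)^{-1}$ belongs to $L^1(\bt^d)$. First I would fix notation: let $w_n$ be the (constant) PF eigenvector of $A_{\bt^d_{2n+1}}$, whose eigenvalue is $\|A_{\bt^d_{2n+1}}\|=2d$ exactly, and put $g_n(\l):=\langle R_{A_{\bt_{2n+1}}}(\l)\d_0,\d_0\rangle^{-1}$ as in \eqref{gnegne}. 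The finite Secular equation gives $g_n(L_n)=2d$ with $L_n:=\|A_{\La_n}\|$, so the finite--volume PF eigenvector is $v_n=w_n\otimes r_n$, $r_n:=R_{A_{\bt_{2n+1}}}(L_n)\d_0$. Since $\l_n>L_n>2$ and $g_n$ is increasing on $(2,+\infty)$ one has $g_n(\l_n)>2d$; since moreover $\l_n\to\|A_{C_d}\|=2\sqrt{d^2+1}$, which stays bounded away from the fibre spectrum $[-2,2]$, one gets $g_n(\l_n)\downarrow2d$ and $R_{A_{\bt_{2n+1}}}(\l_n)\to R_{A_\bz}(\|A_{C_d}\|)$ entrywise. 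I would then decompose $Q_n=P_{w_n}\otimes(\idd-P_{r_n})+P^\perp_{w_n}\otimes\idd$ and insert the Krein formula \eqref{nn0110} for $R_{A_{\La_n}}(\l_n)$; as in the derivation of \eqref{secsct}, the mixed terms vanish (the base factor $R_{A_{\bt^d_{2n+1}}}(g_n(\l_n))A_{\bt^d_{2n+1}}$ commutes with $P_{w_n}$ and $P^\perp_{w_n}$), leaving a $P_{w_n}$--summand and a $P^\perp_{w_n}$--summand.

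The $P_{w_n}$--summand factors as $(2n+1)^{-d}$ times a fibre matrix element of $(\idd-P_{r_n})R_{A_{\bt_{2n+1}}}(\l_n)(\idd-P_{r_n})+\tfrac{2d\,g_n(\l_n)}{g_n(\l_n)-2d}A_n(\l_n)$, where $A_n(\l):=(\idd-P_{r_n})R_{A_{\bt_{2n+1}}}(\l)P_0R_{A_{\bt_{2n+1}}}(\l)(\idd-P_{r_n})$. Exactly as in Proposition~\ref{cocaszon}, $R_{A_{\bt_{2n+1}}}(L_n)P_0=\langle\,\cdot\,,\d_0\rangle r_n$ together with $(\idd-P_{r_n})r_n=0$ forces $A_n(L_n)=0$, so a first--order Taylor expansion (using that $g_n'$ is bounded below near $\|A_{C_d}\|$, again because $\l_n$ avoids $[-2,2]$) makes $A_n(\l_n)/(g_n(\l_n)-2d)$ uniformly bounded in $n$; hence this summand tends to $0$.

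The $P^\perp_{w_n}$--summand is
\begin{align*}
&\langle P^\perp_{w_n}\d_{\bf k},\d_{\bf l}\rangle\langle R_{A_{\bt_{2n+1}}}(\l_n)\d_m,\d_j\rangle\\
&\quad+g_n(\l_n)\,B_n\,\langle R_{A_{\bt_{2n+1}}}(\l_n)P_0R_{A_{\bt_{2n+1}}}(\l_n)\d_m,\d_j\rangle,
\end{align*}
with $B_n:=\langle P^\perp_{w_n}R_{A_{\bt^d_{2n+1}}}(g_n(\l_n))A_{\bt^d_{2n+1}}P^\perp_{w_n}\d_{\bf k},\d_{\bf l}\rangle$. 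Here $\langle P^\perp_{w_n}\d_{\bf k},\d_{\bf l}\rangle=\prod_i\d_{k_il_i}-(2n+1)^{-d}\to\prod_i\d_{k_il_i}$, the two fibre factors are Riemann sums of smooth integrands converging to the stated $\bt$-- and $\bt\times\bt$--integrals against $R_{A_\bz}(\|A_{C_d}\|)$, and $B_n$ is the Riemann sum over $\bt^d_{2n+1}$ of $\frac{2\sum_j\cos\th_j}{g_n(\l_n)-2\sum_j\cos\th_j}e^{\imath\langle{\bf k}-{\bf l},\boldsymbol\th\rangle}$ with the zero mode deleted and $g_n(\l_n)\downarrow2d$.

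The crux, and the step I expect to be the main obstacle, is the asymptotics of $B_n$: one must control the interplay between the rate at which $g_n(\l_n)\downarrow2d$ and the mesh $1/(2n+1)$ near the singularity at $\boldsymbol\th={\bf0}$. For $d\geq3$, $(2d-2\sum_j\cos\th_j)^{-1}\sim|\boldsymbol\th|^{-2}$ near $\boldsymbol\th={\bf0}$ lies in $L^1(\bt^d)$, so a uniform--integrability/dominated--convergence estimate of the type of Lemma~10.9 of \cite{FGI1} gives $B_n\to\frac1{(2\pi)^d}\int_{\bt^d}\frac{\sum_j\cos\th_j}{d-\sum_j\cos\th_j}e^{\imath\langle{\bf k}-{\bf l},\boldsymbol\th\rangle}\di^d\boldsymbol\th$; assembling the factors (recall $g_n(\l_n)\to2d$) reproduces the displayed formula, which is term by term the $\l\downarrow\|A_{C_d}\|$ limit of the infinite--volume Krein formula \eqref{nn0110} for $R_{A_{C_d}}(\l)$ after Fourier--transforming $R_{A_{\bz^d}}(2d)$, $A_{\bz^d}$ and $R_{A_\bz}(\|A_{C_d}\|)$ (all finite, $A_{\bz^d}$ being transient), whence also the transience of $A_{C_d}$. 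For $d\in\{1,2\}$, $(2d-2\sum_j\cos\th_j)^{-1}\notin L^1(\bt^d)$ and deleting the single mode $\boldsymbol\th={\bf0}$ does not cure this (the singularity is interior); bounding $B_n$ below in the diagonal case ${\bf k}={\bf l}$ by $\frac1{(2\pi)^d}\int_{c/n\le|\boldsymbol\th|\le\pi}(g_n(\l_n)-2d+c'|\boldsymbol\th|^2)^{-1}\di^d\boldsymbol\th$ and applying the $d$--dimensional analogue of Lemma~\ref{starct} gives $B_n\to+\infty$, while the other summands stay bounded, so the diagonal matrix element diverges. On the other side $A_{C_d}$ is recurrent for $d\in\{1,2\}$ (Proposition~\ref{prcfi}(i), or directly from \eqref{nn0110}, which after simplification reads $\langle R_{A_{C_d}}(\l)\d_{\bf0}\otimes\d_0,\d_{\bf0}\otimes\d_0\rangle=\langle R_{A_{\bz^d}}(g(\l))\d_{\bf0},\d_{\bf0}\rangle\to+\infty$ as $g(\l)\downarrow2d$), so $\lim_{\l\downarrow\|A_{C_d}\|}\langle R_{A_{C_d}}(\l)\d_{\bf k}\otimes\d_m,\d_{\bf k}\otimes\d_m\rangle=+\infty$ as well, which matches the finite--volume limit. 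Everything outside the $B_n$--analysis is bookkeeping parallel to Proposition~\ref{cocaszon}.
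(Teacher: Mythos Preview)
Your proposal is correct and follows essentially the same approach as the paper: the same decomposition $Q_n=P_{w_n}\otimes(\idd-P_{r_n})+P^\perp_{w_n}\otimes\idd$ with constant $w_n$, the same vanishing of the $P_{w_n}$--summand (because $\l_n$ stays away from the fibre spectrum $[-2,2]$, i.e.\ $\|A_{C_d}\|>\|A_\bz\|$), and the same identification of the $P^\perp_{w_n}$--summand as governed by the transience/recurrence of $A_{\bz^d}$. You are in fact more careful than the paper about the divergence of $B_n$ for $d=1,2$ (the paper simply asserts divergence ``because $\n_n\to0$'', while you propose a lower bound via a $d$--dimensional analogue of Lemma~\ref{starct}); this extra care is justified since the integral is a discrete Riemann sum with mesh $1/(2n+1)$ and shrinking regularisation $\n_n$, but the conclusion and the route are the same.
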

\begin{proof}
The proof follows as that of Proposition \ref{cocaszon} by taking into account that $v_n=w_n\otimes r_n$, with constant $w_n$ given by
$w_n({\bf k})=\frac1{(2n+1)^{d/2}}$ such that $\|w_n\|=1$. In fact, as before,
\begin{align*}
Q_nR_{A_{\La_n}}(\l_n)Q_n=&P_{w_n}\otimes(\idd-P_{r_n})R_{A_{\La_n}}(\l_n)P_{w_n}\otimes(\idd-P_{r_n})\\
+&P^\perp_{w_n}\otimes \idd R_{A_{\La_n}}(\l_n)P^\perp_{w_n}\otimes \idd\,,
\end{align*}
and the matrix elements of the first addendum go to zero in all the situation because $\|A_{C_d}\|>\|A_{\bz^d}\|$. Concerning the 
second one, its matrix elements converge provided $A_{\bz^d}$ is transient, that is when $d\geq3$. Conversely, with $g_n$ given in \eqref{gnegne} and $\n_n=\frac{g_n(\l_n)}2-d$,
if $A_{\bz^d}$ is recurrent which corresponds to $d=1,2$, the diagonal part of the matrix elements of the second addendum contains the factor
$$
\langle P^\perp_{w_n}R_{A_{\bt^d_{2n+1}}}(g_n(\l_n))A_{\bt^d_{2n+1}}P^\perp_{w_n}\d_{\bf k},\d_{\bf k}\rangle
=\frac12\int_{\bt^d_{2n+1}\backslash\{{\bf 0}\}}\frac{\sum_{j=1}^d\cos\th_j}{\n_n+\sum_{j=1}^d(1-\cos\th_j)}\di m_n({\boldsymbol\th})
$$
which diverges because $\n_n\to0$.
\end{proof}
As $A_{C_d}$ is recurrent if $d=1,2$, the Pure Hopping model on the network $C_d$ cannot exhibit BEC at all, see Proposition \ref{roec}. Thus, we study the thermodynamic limit of the two--point function after fixing the amount of the condensate, in the transient case
\begin{Thm}
\label{liift010czac}
Let $D\geq0$ and $d\geq3$. For each sequence of the chemical potential 
$\{\m_n\}_{n\in\bn}$ with $\m_n<\|A_{C_d}\|-\|A_{\La_n}\|$, such that $lim_n\m_n=0$ and
$$
\lim_{\m_n\to0}\frac1{\|v_n\|^2(\|A_{C_d}\|-\|A_{\La_n}\|-\m_n)}=D\,,
$$
we get 
\begin{align*}
&\lim_n\left\langle\left(e^{(\|A_{C_d}\|-\m_n)\idd_{\ell^2(\La_n)}-A_{\La_n}}-\idd_{\ell^2(\La_n)}\right)^{-1}
P_{\ell^2(\La_n)}u_1,P_{\ell^2(\La_n)}u_2\right\rangle\\
=&\left\langle\left(e^{\|A_{C_d}\|\idd-A_{C_d}}-\idd\right)^{-1}u_1,u_2\right\rangle+D\langle u_1,v\rangle\langle v,u_2\rangle\quad u_1,u_2\in\gph\,.
\end{align*}
\end{Thm}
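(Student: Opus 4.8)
The plan is to run, for $C_d=\bz^d\comb\bz$ with $d\ge 3$, the same scheme used for $\bn$ in Theorem \ref{liift} and for $\bn\comb\bz^d$ in Theorem \ref{liift010}. Using the elementary identity $\frac1{e^x-1}=f(x)+\frac1x$ with $f$ the bounded continuous function of \eqref{resam}, I would first split the finite volume two--point function into the $f\big((\|A_{C_d}\|-\m_n)\idd_{\ell^2(\La_n)}-A_{\La_n}\big)$ piece, which converges matrix--element--wise to $\big\langle f(\|A_{C_d}\|\idd-A_{C_d})u_1,u_2\big\rangle$ exactly as in Theorem \ref{liift} (boundedness of $f$, finite propagation of $A_{\La_n}$, amenability of $C_d$; this piece produces no condensate term), and the resolvent piece $R_{A_{\La_n}}(\l_n)$ with $\l_n:=\|A_{C_d}\|-\m_n>\|A_{\La_n}\|$, on which the work concentrates. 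Since $v_n$ is an eigenvector of $A_{\La_n}$, hence of $R_{A_{\La_n}}(\l_n)$, I would decompose $R_{A_{\La_n}}(\l_n)=Q_nR_{A_{\La_n}}(\l_n)Q_n+P_{v_n}R_{A_{\La_n}}(\l_n)P_{v_n}$ with no cross terms, the first summand furnishing the excited--levels contribution and the second the condensate.

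For the excited part I would represent the generators $u_i=e^{\imath t_iH}\d_{x_i}$ of $\gph$ (it suffices to treat generators, by linearity) through the Cauchy integral $\frac1{2\pi\imath}\oint_{C_r}e^{\imath t_i(\|A_{C_d}\|-z)}R_{A_{C_d}}(z)\d_{x_i}\,\di z$ on a circle $C_r$ of radius $r>\deg(C_d)=2d+2>\|A_{C_d}\|$, turning $\big\langle Q_nR_{A_{\La_n}}(\l_n)Q_nu_1,u_2\big\rangle$ into a double contour integral of $\sum_{{\bf k},m,{\bf l},j}\overline{R_{A_{C_d}}(z)_{x_1,({\bf k},m)}}\big(Q_nR_{A_{\La_n}}(\l_n)Q_n\big)_{({\bf k},m),({\bf l},j)}R_{A_{C_d}}(w)_{({\bf l},j),x_2}$. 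Proposition \ref{cocaszon00} supplies the pointwise limit of the middle factor, and the bounds $|R_{A_{C_d}}(z)_{xy}|\le R_{A_{C_d}}(r)_{xy}$ for $|z|=r$ together with $\sum_yR_{A_{C_d}}(r)_{xy}\le(r-2d-2)^{-1}$ (as in the proof of Theorem \ref{tretre}) reduce everything to a bound on $\big(Q_nR_{A_{\La_n}}(\l_n)Q_n\big)_{({\bf k},m),({\bf l},j)}$ uniform in $n$ and in the indices; once this is in hand, Lebesgue dominated convergence gives $\lim_n\big\langle Q_nR_{A_{\La_n}}(\l_n)Q_nu_1,u_2\big\rangle=\big\langle R_{A_{C_d}}(\|A_{C_d}\|)u_1,u_2\big\rangle$, and adding the $f$--part yields $\big\langle(e^{\|A_{C_d}\|\idd-A_{C_d}}-\idd)^{-1}u_1,u_2\big\rangle$. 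For the uniform bound I would use, as in Proposition \ref{cocaszon00}, the decomposition $Q_n=P_{w_n}\otimes(\idd-P_{r_n})+P_{w_n}^{\perp}\otimes\idd$ with $w_n$ the constant PF eigenvector of $A_{\bt^d_{2n+1}}$ and $g_n(\l):=\langle R_{A_{\bt_{2n+1}}}(\l)\d_{0},\d_{0}\rangle^{-1}$: the cross terms vanish because $w_n$ is an eigenvector of $g_n(\l_n)R_{A_{\bt^d_{2n+1}}}(g_n(\l_n))A_{\bt^d_{2n+1}}$ as well; the $P_{w_n}\otimes(\idd-P_{r_n})$ block contributes a harmless factor $w_n({\bf k})w_n({\bf l})=(2n+1)^{-d}$ times fibre matrix elements uniformly bounded with geometric decay in $|m-j|$ (here $\l_n\to\|A_{C_d}\|=2\sqrt{d^2+1}>2=\|A_{\bt_{2n+1}}\|$, so the fibre resolvent stays off its spectral edge), while the $P_{w_n}^{\perp}\otimes\idd$ block reduces by \eqref{nn0110} to the same fibre factors times the base matrix element $\big\langle R_{A_{\bt^d_{2n+1}}}(g_n(\l_n))A_{\bt^d_{2n+1}}P_{w_n}^{\perp}\d_{\bf k},\d_{\bf l}\big\rangle$.

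The hard part is exactly this last base matrix element. Since $g_n(\l_n)\to 2d=\|A_{\bz^d}\|$ is forced by the Secular Equation defining $\|A_{C_d}\|$, the second largest eigenvalue of $A_{\bt^d_{2n+1}}$ tends to $2d$ as well, so $\|R_{A_{\bt^d_{2n+1}}}(g_n(\l_n))P_{w_n}^{\perp}\|$ blows up and cannot be used directly. Instead, writing $\n_n:=\tfrac{g_n(\l_n)}2-d>0$ and diagonalising on the dual torus, one has (up to the harmless bounded factor $g_n(\l_n)$ and $|\sum_j\cos\th_j|\le d$)
$$
\Big|\big\langle R_{A_{\bt^d_{2n+1}}}(g_n(\l_n))A_{\bt^d_{2n+1}}P_{w_n}^{\perp}\d_{\bf k},\d_{\bf l}\big\rangle\Big|\le C\int_{\bt^d_{2n+1}\setminus\{{\bf 0}\}}\frac{\di m_n({\boldsymbol\th})}{\n_n+\sum_{j=1}^d(1-\cos\th_j)}\le C\int_{\bt^d_{2n+1}\setminus\{{\bf 0}\}}\frac{\di m_n({\boldsymbol\th})}{\sum_{j=1}^d(1-\cos\th_j)}\,,
$$
and for $d\ge 3$ the last Riemann sum converges to $C\big\langle R_{A_{\bz^d}}(2d)\d_{\bf 0},\d_{\bf 0}\big\rangle<+\infty$ (of the type of Lemma 10.9 of \cite{FGI1}); hence the base matrix element is bounded uniformly in $n$ and in $({\bf k},{\bf l})$, which together with the geometric decay of the two fibre factors suffices for dominated convergence. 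This is precisely the point where the hypothesis $d\ge 3$ is genuinely used, consistently with $A_{C_d}$ being recurrent, hence without BEC, for $d=1,2$.

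For the condensate part, $R_{A_{\La_n}}(\l_n)v_n=(\|A_{C_d}\|-\|A_{\La_n}\|-\m_n)^{-1}v_n$ gives
$$
\big\langle P_{v_n}R_{A_{\La_n}}(\l_n)P_{v_n}u_1,u_2\big\rangle=\frac1{\|v_n\|^2(\|A_{C_d}\|-\|A_{\La_n}\|-\m_n)}\,\langle u_1,v_n\rangle\langle v_n,u_2\rangle\,;
$$
the prefactor tends to $D$ by hypothesis, and $\langle u_i,v_n\rangle\to\langle u_i,v\rangle$ follows from the pointwise convergence $v_n\to v$ of Proposition \ref{cocaszon00} and the uniform domination $\sum_{({\bf k},m)}|u_i({\bf k},m)|\,v_n({\bf k},m)<+\infty$: writing $u_i$ again by the Cauchy integral and using that $v_n({\bf k},m)$ is bounded, uniformly in $n$, by a constant times a geometrically decaying function of $|m|$ (the fibre eigenvectors $r_n$ decay at a rate bounded away from $1$, again because $\|A_{\La_n}\|$ stays off the edge $2$ of $\s(A_{\bt_{2n+1}})$), this reduces once more to $\sum_{({\bf k},m)}R_{A_{C_d}}(r)_{x_i,({\bf k},m)}v({\bf k},m)=v(x_i)/(r-\|A_{C_d}\|)<+\infty$, exactly as in the proof of Theorem \ref{tretre}. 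Adding the excited and condensate contributions yields the asserted identity.
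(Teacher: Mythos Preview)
Your proof is correct and follows essentially the same route as the paper: reduce to the resolvent via \eqref{resam}, split off the $P_{v_n}$ part, and pass to the limit in the $Q_n$ part by dominated convergence through the Cauchy--integral representation of the generators of $\gph$, exactly as in Theorems \ref{liift} and \ref{liift010}. The two ingredients the paper singles out---the uniform bound $v_n({\bf k},m)\le 1$ for the condensate term, and the bound $\big|\big\langle R_{A_{\bt^d_{2n+1}}}(g_n(\l_n))P_{w_n}^{\perp}\d_{\bf k},\d_{\bf l}\big\rangle\big|\le\int_{\bt^d_{2n+1}\setminus\{{\bf 0}\}}\frac{\di m_n({\boldsymbol\th})}{\sum_j(1-\cos\th_j)}\to\langle R_{A_{\bz^d}}(2d)\d_{\bf 0},\d_{\bf 0}\rangle<\infty$ for $d\ge3$ on the excited part---are precisely the bounds you isolate; your additional remark that $v_n$ actually decays geometrically in the fibre variable is true but not needed, since the cruder bound $v_n\le 1$ already suffices once $\sum_{y}R_{A_{C_d}}(r)_{xy}<\infty$.
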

\begin{proof}
The proof follows as the previous results in Theorems \ref{liift}, \ref{liift010}, by taking into account that 
$$
v_n({\bf k},j)\leq1\,,\quad ({\bf k},j)\in\bz^d\times\bz\,,
$$
\begin{align*}
&|\langle R_{A_{\bt_{2n+1}^d}}(\l_n)Q_n\d_{\bf k},Q_n\d_{\bf l}\rangle|
=\int_{\bt_{2n+1}^d\backslash\{{\bf0}\}}\frac{\di m_n({\boldsymbol\th})}{\l_n-2\sum_{i=1}^d\cos\th_i}\\
\leq&\int_{\bt_{2n+1}^d\backslash\{{\bf0}\}}\frac{\di m_n({\boldsymbol\th})}{2d-2\sum_{i=1}^d\cos\th_i}
\to\int_{\bt^d}\frac{\di m({\boldsymbol\th})}{2d-2\sum_{i=1}^d\cos\th_i}\\
=&\langle R_{A_{\bz^d}}(2d)\d_{\bf 0},\d_{\bf 0}\rangle\,,
\end{align*}
that is $\langle R_{A_{\bt_{2n+1}^d}}(\l_n)Q_n\d_{\bf k},Q_n\d_{\bf l}\rangle$ is uniformly bounded. We leave the details to the reader.
\end{proof}
We now pass to the thermodynamic limit by fixing the mean density of the quasi--free state under consideration, instead of the amount of the condensate. For $a\in[0,+\infty]$ define
\begin{equation*}
s(a):=\sum_{m=0}^{+\infty}\frac1{a+2\pi^2m^2}\,.
\end{equation*}
Such a function is smooth for $a\in(0,+\infty)$, and strictly decreasing with $s(0)=\lim_{a\downarrow0}s(a)=+\infty$, 
$s(\infty)=\lim_{a\to+\infty}s(a)=0$. Put in addition,
\begin{equation}
\label{acalpafca}
S(a)=s(a)\,,\,\,d=1\,,\quad
S(a)=\frac1a \,,\,\,d\geq2\,.
\end{equation}
\begin{Lemma}
\label{abalpa1}
If $\lim_n(2n+1)^2\n_n=a$ then 
$$
\lim_n\sum_{m=0}^n\frac{\cos\frac{2\pi m}{2n+1}}{(2n+1)^2\n_n+(2n+1)^2\left(1-\cos\frac{2\pi m}{2n+1}\right)}=s(a)\,.
$$
\end{Lemma}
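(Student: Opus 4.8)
The plan is to recognize the sum as a Riemann-type sum whose limit is controlled by dominated convergence, exactly parallel to the proof of Lemma \ref{acalpa1} but adapted to the periodic boundary conditions on $\bt_{2n+1}$ (where the relevant frequencies are $\frac{2\pi m}{2n+1}$ rather than $\frac{\pi m}{n+2}$, and the spectral gap structure of $A_{\bt_{2n+1}}$ replaces that of $A_{S_n}$). First I would rescale: under the hypothesis $(2n+1)^2\nu_n\to a$, I rewrite the $m$-th summand as
$$
\frac{\cos\frac{2\pi m}{2n+1}}{(2n+1)^2\nu_n+(2n+1)^2\bigl(1-\cos\frac{2\pi m}{2n+1}\bigr)}
$$
and note that, pointwise in $m$, using $1-\cos x=\tfrac{x^2}2+O(x^4)$ with $x=\frac{2\pi m}{2n+1}$, the denominator converges to $a+2\pi^2 m^2$ and the numerator to $1$. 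Hence each fixed-$m$ term tends to $\frac1{a+2\pi^2 m^2}$, which is the $m$-th term of $s(a)$.

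The next step is to produce a summable dominating function independent of $n$, so that Lebesgue Dominated Convergence Theorem applies to the counting measure on $\{0,1,\dots,n\}$ (viewed as a subset of $\bn$ via the indicator $\chi_{[0,n]}$). Here I would use the elementary estimate $1-\cos\theta\geq \tfrac{2}{\pi^2}\theta^2$ for $|\theta|\leq\pi$, which applied to $\theta=\frac{2\pi m}{2n+1}\in[0,\pi]$ (valid since $m\leq n$) gives
$$
(2n+1)^2\Bigl(1-\cos\tfrac{2\pi m}{2n+1}\Bigr)\geq \frac{2}{\pi^2}(2n+1)^2\cdot\frac{4\pi^2 m^2}{(2n+1)^2}=8m^2,
$$
so that, discarding the nonnegative term $(2n+1)^2\nu_n$, the absolute value of the $m$-th summand is bounded by $\frac1{8m^2}$ for $m\geq1$, with the $m=0$ term being at most $1$ (or, more carefully, $\leq \frac1{(2n+1)^2\nu_n}$, which is bounded once $(2n+1)^2\nu_n\to a$; if $a=0$ one instead keeps the $m=0$ term and observes $s(0)=+\infty$, and the inequality from below together with the $m\geq1$ tail already forces divergence, consistent with the stated limit). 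Since $\sum_{m\geq1}\frac1{8m^2}<+\infty$, dominated convergence yields
$$
\lim_n\sum_{m=0}^n\frac{\cos\frac{2\pi m}{2n+1}}{(2n+1)^2\nu_n+(2n+1)^2\bigl(1-\cos\frac{2\pi m}{2n+1}\bigr)}=\sum_{m=0}^{+\infty}\frac1{a+2\pi^2m^2}=s(a),
$$
with the understanding that both sides equal $+\infty$ when $a=0$.

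The main obstacle, and the only point requiring care, is the behavior of the $m=0$ term together with the case $a\in\{0,+\infty\}$: when $a>0$ finite everything is routine; when $a=+\infty$ every term (including $m=0$) tends to $0$ and the uniform domination forces the sum to $0=s(+\infty)$; when $a=0$ the $m=0$ term may blow up and one argues directly by the lower bound $\sum_{m=1}^n\frac1{8m^2+(2n+1)^2\nu_n}\to \sum_{m=1}^\infty\frac1{8m^2}$ is finite but the full sum, retaining $m=0$, diverges since $(2n+1)^2\nu_n\to0$. In all three regimes the claimed identity holds with the conventions for the extended reals. Absorbing the lower-order pieces of $1-\cos\frac{2\pi m}{2n+1}$ beyond the quadratic term is harmless because they only improve the domination; no subtler cancellation is needed, in contrast to the two-point function computations of Proposition \ref{cocaszon00} where off-diagonal oscillatory factors had to be tracked.
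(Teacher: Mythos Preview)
Your proposal is correct and follows essentially the same route as the paper: pointwise convergence of each $m$-term to $\frac{1}{a+2\pi^2 m^2}$ together with a summable $1/m^2$ majorant, then Lebesgue Dominated Convergence. The only difference is cosmetic: the paper obtains the domination by minimising over $n\geq m$, writing
\[
(2n+1)^2\Bigl(1-\cos\tfrac{2\pi m}{2n+1}\Bigr)\;\geq\;(2m+1)^2\Bigl(1-\cos\tfrac{2\pi m}{2m+1}\Bigr)\;\geq\;6m^2,
\]
whereas you use the cleaner global inequality $1-\cos\theta\geq\tfrac{2}{\pi^2}\theta^2$ on $[0,\pi]$ to reach $8m^2$; both produce the same summable bound. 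Your explicit treatment of the boundary cases $a\in\{0,+\infty\}$ is a welcome addition that the paper leaves implicit.
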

\begin{proof}
The proof follows from Lebesgue Dominated Convergence Theorem by noticing that
$$
\frac{\chi_{[1,n]}(m)}{(2n+1)^2\n_n+(2n+1)^2\left(1-\cos\frac{2\pi m}{2n+1}\right)}
\leq\frac1{(2m+1)^2\left(1-\cos\frac{2\pi m}{2m+1}\right)}\leq\frac1{6m^2}\,,
$$
with $\sum_{m=1}^{+\infty}\frac1{6m^2}<+\infty$.
\end{proof}
As $\r_c$ is always finite, we see that the finite--volume densities always converges even if we fix the mean density of the finite--volume states $\r\geq\r_c$ for all the networks $C_d$, included the recurrent cases $d=1,2$.
Also for all the cases under consideration in the present section, we report the formula for the critical density 
of the Pure Hopping model on $C_d$:
$$
\r_c=\frac1{2\pi}\int_{-\pi}^\pi\frac{\di\th}{e^{2(\sqrt{d^2+1}-\cos\th)}-1}\,,
$$
see Proposition 9.2 of \cite{FGI1}. 
\begin{Prop}
\label{fikgeca}
Let $\{\m_n\}_{n\in\bn}$ be any sequence of chemical potentials such that $\m_n<\|A_{N_d}\|-\|A_{\La_n}\|$ and $\lim_n\m_n=0$.
Suppose further that 
$$
\lim_nn^{d_G(C_d)}(\|A_{C_d}\|-\|A_{\La_n}\|-\m_n)=a\in[0,+\infty]\,.
$$
Then
\begin{align*}
\lim_n\t_n\bigg[&\left(e^{[(\|A_{C_d}\|-\m_n)\idd_{\ell^2(\La_n)}-A_{\La_n}]}-\idd_{\ell^2(\La_n)}\right)^{-1}\bigg]\\
&=\r_c+2d^2S(b)\|R_{A_{\bz}}(\|A_{C_d}\|)\d_0\|^2\,,
\end{align*}
where $S$ is given in \eqref{acalpafca}, and
$$
b=2^d\frac{\langle R_{A_{\bz}}(\|A_{C_d}\|)^2\d_{0},\d_{0}\rangle}
{\langle R_{A_{\bz}}(\|A_{C_d}\|)\d_{0},\d_{0}\rangle^2}a\,.
$$
\end{Prop}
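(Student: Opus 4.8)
The plan is to follow the scheme of Proposition~\ref{fikge} (and of Proposition~5.6 of \cite{F2}), with the roles of base and fibre interchanged: for $C_d=\bz^d\comb\bz$ the base $\bz^d$ is the direction along which the condensate is spread uniformly, while the fibre $\bz$ carries the integrated density of the states. First I would write, with $\l_n:=\|A_{C_d}\|-\m_n$ and $f$ as in \eqref{resam} (so $(e^x-\idd)^{-1}=f(x)+x^{-1}$),
$$
\t_n\big[(e^{(\|A_{C_d}\|-\m_n)\idd_{\ell^2(\La_n)}-A_{\La_n}}-\idd_{\ell^2(\La_n)})^{-1}\big]=\t_n\big[f((\|A_{C_d}\|-\m_n)\idd_{\ell^2(\La_n)}-A_{\La_n})\big]+\t_n\big[R_{A_{\La_n}}(\l_n)\big]\,.
$$
The first summand converges to $\t^{C_d}\big[f(\|A_{C_d}\|\idd-A_{C_d})\big]$ by amenability and finite propagation (Theorem~\ref{density0} identifies the IDS of $A_{C_d}$ with that of $\bigsqcup_{\bz^d}\bz$, i.e.\ with the arcsine law of $A_\bz$). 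Into the second summand I would insert the Krein Formula of Proposition~\ref{risgfor} for $\La_n=\bt^d_{2n+1}\comb\bt_{2n+1}$ and take the normalised trace: by translation invariance on $\bt_{2n+1}$ the first Krein term contributes $\langle R_{A_{\bt_{2n+1}}}(\l_n)\d_0,\d_0\rangle\to\langle R_{A_\bz}(\|A_{C_d}\|)\d_0,\d_0\rangle$, and together with the $f$-limit this reconstructs exactly $\big\langle(e^{\|A_{C_d}\|\idd-A_\bz}-\idd)^{-1}\d_0,\d_0\big\rangle=\r_c$ (the formula recalled before the Proposition). Hence $\t_n[(e^{\cdots}-\idd)^{-1}]=\r_c+\D_n+\Theta_n$, where $\D_n\to0$ collects these convergences together with those of the Riemann sums for $g_n$ and for the fibre resolvent kernels on $\bt_{2n+1}$ (as in \cite{FGI1}), and $\Theta_n$ is the normalised trace of the second Krein term.

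With $g_n(\l):=\langle R_{A_{\bt_{2n+1}}}(\l)\d_0,\d_0\rangle^{-1}$ (the fibre analogue of \eqref{eksek3}, cf.\ also \eqref{gnegne}) and $\n_n:=\tfrac12 g_n(\l_n)-d$, one has
$$
\Theta_n=\frac{g_n(\l_n)}{(2n+1)^{d+1}}\,\tr\big[R_{A_{\bt^d_{2n+1}}}(g_n(\l_n))A_{\bt^d_{2n+1}}\big]\,\big\langle R_{A_{\bt_{2n+1}}}(\l_n)^2\d_0,\d_0\big\rangle\,,
$$
where $\langle R_{A_{\bt_{2n+1}}}(\l_n)^2\d_0,\d_0\rangle\to\|R_{A_\bz}(\|A_{C_d}\|)\d_0\|^2$, $g_n(\l_n)\to 2d=\|A_{\bz^d}\|$ and $\n_n\downarrow0$. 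Expanding $\tr\big[R_{A_{\bt^d_{2n+1}}}(g_n(\l_n))A_{\bt^d_{2n+1}}\big]=\sum_{\bf k}\frac{\lambda_{\bf k}}{g_n(\l_n)-\lambda_{\bf k}}$ over the eigenvalues $\lambda_{\bf k}=2\sum_{i=1}^d\cos\frac{2\pi k_i}{2n+1}$ of $A_{\bt^d_{2n+1}}$, I would split off the mode ${\bf k}={\bf 0}$, whose contribution is of order $\big((2n+1)^{d+1}\n_n\big)^{-1}$ and survives for every $d$. For $d\geq2$ the remaining modes give, after division by $(2n+1)^{d+1}$, a quantity that is $O\big((\ln n)/n\big)$ — using $\langle R_{A_{\bz^d}}(2d)\d_0,\d_0\rangle<+\infty$ for $d\geq3$, and a logarithmic bound on the tail sum for $d=2$ — hence vanishes, so only the ${\bf k}={\bf 0}$ term remains and $S(b)=b^{-1}$. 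For $d=1$ the whole tower of low modes survives and, after multiplying numerator and denominator by $(2n+1)^2$, the sum is precisely the one evaluated in Lemma~\ref{abalpa1}, giving $S(b)=s(b)$.

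To identify $b$, I would Taylor expand $g_n$ at $\|A_{S_n}\|$: since $L_n:=\|A_{\La_n}\|$ solves the finite--volume Secular Equation $g_n(L_n)=\|A_{\bt^d_{2n+1}}\|=2d$, one gets $\n_n=\tfrac12\big(g_n(\l_n)-g_n(L_n)\big)=\tfrac12 g_n'(\sigma_n)\,(\l_n-L_n)$ with $\sigma_n\in(L_n,\l_n)$ and $\l_n-L_n=\|A_{C_d}\|-\|A_{\La_n}\|-\m_n$. Since $g_n\to g$ and $g_n'\to g'$ near $\|A_{C_d}\|$, with $g(\l)=\sqrt{\l^2-4}$ and $g'(\|A_{C_d}\|)=\frac{\langle R_{A_\bz}(\|A_{C_d}\|)^2\d_0,\d_0\rangle}{\langle R_{A_\bz}(\|A_{C_d}\|)\d_0,\d_0\rangle^2}$, the hypothesis $n^{d_G(C_d)}(\|A_{C_d}\|-\|A_{\La_n}\|-\m_n)\to a$ (recall $d_G(C_d)=d+1$), after collecting the powers of $2$ coming from $(2n+1)^{d+1}\sim 2^{d+1}n^{d+1}$, yields $\lim_n(2n+1)^{d+1}\n_n=b=2^d\frac{\langle R_{A_\bz}(\|A_{C_d}\|)^2\d_0,\d_0\rangle}{\langle R_{A_\bz}(\|A_{C_d}\|)\d_0,\d_0\rangle^2}\,a$, as claimed. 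Combining, $\Theta_n\to 2d^2\,S(b)\,\|R_{A_\bz}(\|A_{C_d}\|)\d_0\|^2$ and the Proposition follows.

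I expect the main obstacle to be twofold. First, the justification of $\D_n\to0$, which requires interchanging the functional calculus with the infinite--volume limit: this rests on the amenability/finite--propagation machinery of \cite{F1} together with a generalised dominated convergence for the Riemann sums of $g_n$ and of the fibre resolvent kernels (exactly as in the proof of Proposition~\ref{fikge}). Second, inside $\Theta_n$, the clean isolation of the ${\bf k}={\bf 0}$ ``condensate--adjacent'' mode and the proof that all other modes are negligible after normalisation when $d\geq2$, while for $d=1$ the full low--energy tower must be resummed through Lemma~\ref{abalpa1}; keeping track of the numerical constants — the factor $2^d$, the $\tfrac12$ in $\n_n$, and the passage $n\mapsto 2n+1$ together with the cycle symmetry $k\mapsto 2n+1-k$ — is the fiddly but routine part.
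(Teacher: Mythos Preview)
Your proposal is correct and follows essentially the same route as the paper's proof: split via $f$ in \eqref{resam}, apply the Krein Formula of Proposition~\ref{risgfor} to the resolvent part, absorb the convergent pieces into a $\D_n\to0$, and analyse the surviving second Krein term by isolating the ${\bf k}={\bf 0}$ mode (with Lemma~\ref{abalpa1} for $d=1$ and the integral estimate $\frac1n\int_{1/n}^1 x^{d-3}\di x\to0$ for $d\geq2$), finishing with the Taylor expansion of $g_n$ to convert the hypothesis on $a$ into the claim on $b$. Your write--up is in fact more explicit than the paper's about the constant bookkeeping (the $2^d$ from $(2n+1)^{d+1}\sim 2^{d+1}n^{d+1}$, the $\tfrac12$ in $\n_n$, and the identification $g'(\|A_{C_d}\|)=\langle R^2\d_0,\d_0\rangle/\langle R\d_0,\d_0\rangle^2$), which the paper leaves to the reader by pointing back to Proposition~\ref{fikge}. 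One cosmetic slip: the phrase ``Taylor expand $g_n$ at $\|A_{S_n}\|$'' is a leftover from the $N_d$ case and should read ``at $L_n=\|A_{\La_n}\|$'', which is exactly what you use in the next clause.
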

\begin{proof}
The proof follows the same lines of the analogous one of Proposition \ref{fikge}. Indeed, by taking into account of the form of the Resolvent in Proposition \ref{risgfor}, and reasoning as in Proposition 5.6 of \cite{F2}, we get
\begin{align*}
\r_{\La_n}(\m_n)-\r_c=&\D_n
+g_n(\l_n)\sum_{|k|\leq n}|\langle R_{A_{\bt_{2n+1}}}(\l_n)\d_k,\d_0\rangle|^2\\
\times&\frac1{2n+1}\int_{\bt^d_{2n+1}}\frac{\sum_{j=1}^d\cos\th_j}{\n_n+\sum_{j=1}^d(1-\cos\th_j)}\di m_n(\th)
\end{align*}
Here, $\n_n:=g_n(\l_n)/2-d$, $g_n(\l_n)\to2d$, and finally for the continuous function $f$ given in \eqref{resam},
\begin{align*}
\D_n:=&\t^{C_d}_n\left[f\big((\|A_{C_d}\|-\m_n)\idd_n-A_{\La_n}\big)\right]-\t^{C_d}\left[f\big(\|A_{C_d}\|\idd-A_{C_d}\big)\right]\\
+&\t^{\bz}_n\big(R_{A_{\bt_{2n+1}}}(\l_n)\big)-\t^{\bz}\big(R_{A_\bz}(2\sqrt{d^2+1})\big)
\end{align*}
goes to $0$ as $n\to+\infty$. For $d=1$ we get
$$
\int_{\bt_{2n+1}}\frac{\cos\th\di m_1(\th)}{\n_n+(1-\cos\th)}
=\sum_{m=0}^n\frac{\cos\frac{2\pi m}{2n+1}}{(2n+1)\n_n+(2n+1)\left(1-\cos\frac{2\pi m}{2n+1}\right)}\,,
$$
and the proof follows as in Proposition \ref{fikge} by taking into account Lemma \ref{abalpa1}. The case $d>1$ leads to
$$
\frac1{2n+1}\int_{\bt^d_{2n+1}\backslash\{0\}}\frac{\di m_n(\th)}{\sum_{j=1}^d(1-\cos\th_j)}
\approx\frac1n\int_{1/n}^1x^{d-3}\di x\to0\,.
$$
Thus, the unique surviving term is that localised on the origin of $\bt^d_{2n+1}$ and the proof follows as before.
\end{proof}
Now we end with the thermodynamic limit when the mean density of the state is fixed. As usual, we limit the analysis to the condensation regime $\r\geq\r_c$ by putting
$\m_n:=\m(\La_n)$ for the finite volume chemical potential.
\begin{Prop}
\label{ifcaiizt}
Fix $\r>\r_c$. With the sequence of chemical potential $\{\m_n\}$ obtained by solving \eqref{fvden}, we get for
$u\in\gph$,
\begin{equation}
\label{trrec}
\lim_n\left\langle\left(e^{(\|A_{C_d}\|-\m_n)\idd_{\ell^2(\La_n)}-A_{\La_n}}-\idd_{\ell^2(\La_n)}\right)^{-1}
P_{\ell^2(\La_n)}u,P_{\ell^2(\La_n)}u\right\rangle=+\infty\,.
\end{equation}
\end{Prop}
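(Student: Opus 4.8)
Proposition \ref{ifcaiizt} asks us to show that when the mean density $\r>\r_c$ is fixed, the two-point function evaluated on any generator $u\in\gph$ diverges in the infinite volume limit, because the condensate cannot be absorbed into a locally normal state. The plan is to reduce the statement, via the Krein Formula for the comb Resolvent (Proposition \ref{risgfor}) and the functional calculus by the bounded function \eqref{resam}, to the behaviour of the finite volume two-point function split into the "excited" part and the "condensate" part, exactly as in the proofs of Theorems \ref{liift}, \ref{liift010} and Proposition \ref{ifcavozt}. The excited part converges to $\langle(e^{\|A_{C_d}\|\idd-A_{C_d}}-\idd)^{-1}u,u\rangle$, which is finite since $A_{C_d}$ is transient for $d\ge3$; so all the divergence must come from the condensate term $\frac{|\langle u,v_n\rangle|^2}{\|v_n\|^2(\|A_{C_d}\|-\|A_{\La_n}\|-\m_n)}$.

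The key step is the analysis of the rate $\|A_{C_d}\|-\|A_{\La_n}\|-\m_n$. First I would observe that, since $\r>\r_c$ and $\r_c<+\infty$, solving \eqref{fvden} forces $\lim_n\m_n=0$, and then Proposition \ref{fikgeca} applies: passing to any convergent subsequence of $\{n^{d_G(C_d)}(\|A_{C_d}\|-\|A_{\La_n}\|-\m_n)\}$ with limit $a$, the density equation forces $\r-\r_c=2d^2 S(b)\|R_{A_{\bz}}(\|A_{C_d}\|)\d_0\|^2$ with $b$ proportional to $a$; since $S$ is continuous, strictly decreasing, with $S(+\infty)=0$ and finite at every $a\in(0,+\infty]$, and $\r-\r_c>0$ is strictly smaller than the $a\downarrow0$ value $+\infty$ (for $d=1$, $s(0)=+\infty$; for $d\ge2$, $S(a)=1/a$), this pins down $a\in(0,+\infty)$ uniquely. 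Hence $\lim_n n^{d_G(C_d)}(\|A_{C_d}\|-\|A_{\La_n}\|-\m_n)=a<+\infty$ along the whole sequence, i.e. $\|A_{C_d}\|-\|A_{\La_n}\|-\m_n\approx a\,n^{-(d+1)}$.

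Now combine this with the behaviour of the condensate normalisation $\|v_n\|^2$. Because $v_n=w_n\otimes r_n$ with $w_n$ the constant normalised PF eigenvector on $\bt^d_{2n+1}$ (so $\|w_n\|=1$) and $r_n\to R_{A_{\bz}}(\|A_{C_d}\|)\d_0$ in norm (the fibre being transient at $\|A_{C_d}\|>\|A_\bz\|$), one gets $\|v_n\|^2\to\|R_{A_{\bz}}(\|A_{C_d}\|)\d_0\|^2$, a finite nonzero constant; equivalently $d_{PF}(C_d)=d<d_G(C_d)=d+1$. Therefore
$$
\frac{1}{\|v_n\|^2(\|A_{C_d}\|-\|A_{\La_n}\|-\m_n)}\approx\frac{n^{d+1}}{a\,\|R_{A_{\bz}}(\|A_{C_d}\|)\d_0\|^2}\longrightarrow+\infty\,,
$$
while $\langle u,v_n\rangle\to\langle u,v\rangle\ne0$ for a generic generator $u=e^{\imath tH}\d_x$ (proved by the same contour-integral/dominated-convergence argument as in Theorem \ref{liift} and \ref{liift010}, using $v_n({\bf k},j)\le$ const and the exponential decay of $R_{A_{C_d}}(r)$). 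Since the excited part stays bounded, the full finite volume two-point function on $u$ diverges, giving \eqref{trrec}. I expect the main obstacle to be the bookkeeping in the density equation: one must check carefully that the $m=0$ (zero-mode) term of the fibre sum is the only one surviving in the scaling limit for all $d\ge3$ — this is where the factor $2d^2$ and the precise constant $b$ in Proposition \ref{fikgeca} come from — and that no cancellation with the $\D_n\to0$ remainder spoils the conclusion; everything else is a routine adaptation of the earlier proofs, which is why I would simply invoke them and "leave the details to the reader" in the style of the surrounding text.
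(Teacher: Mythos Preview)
Your overall strategy matches the paper's: use Proposition \ref{fikgeca} together with the density constraint to pin down the scaling $n^{d_G(C_d)}(\|A_{C_d}\|-\|A_{\La_n}\|-\m_n)\to a\in(0,+\infty)$, and then show that the condensate contribution blows up because $d_{PF}(C_d)=d<d+1=d_G(C_d)$. The paper, however, does this more economically: rather than splitting the two--point function into an ``excited'' part (shown to converge) plus a ``condensate'' part (shown to diverge), it simply bounds the full expression from below by the rank--one piece
\[
\frac{|\langle u,v_n\rangle|^2}{\|v_n\|^2(\|A_{C_d}\|-\|A_{\La_n}\|-\m_n)}
\approx\frac{1}{n^{d_{PF}(C_d)}(\|A_{C_d}\|-\|A_{\La_n}\|-\m_n)}
=\frac{n}{n^{d_G(C_d)}(\|A_{C_d}\|-\|A_{\La_n}\|-\m_n)}\to+\infty\,,
\]
with $v_n$ normalised at $1$ on the common root $({\bf0},0)$. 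This shortcut works uniformly in $d$, whereas your decomposition relies on the convergence of the excited part, which holds only for $d\ge3$ (the transient range); for $d=1,2$ you would need to invoke Proposition \ref{roec} separately.

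There is also a normalisation slip in your argument that you should fix. You take $w_n$ with $\|w_n\|=1$, i.e.\ $w_n({\bf k})=(2n+1)^{-d/2}$, and correctly deduce $\|v_n\|^2\to\|R_{A_{\bz}}(\|A_{C_d}\|)\d_0\|^2$, a finite constant. But with \emph{that same} normalisation, $v_n\to0$ pointwise and hence $\langle u,v_n\rangle\to0$, not $\langle u,v\rangle$; the two claims you combine live in different normalisations. The quantity $\frac{|\langle u,v_n\rangle|^2}{\|v_n\|^2}$ is of course normalisation--invariant, and with the root normalisation (the one the paper uses, and the one for which $d_{PF}$ is defined) one has $\|v_n\|^2\approx n^{d}$ and $\langle u,v_n\rangle\to\langle u,v\rangle$, which gives the desired $\approx n^{-d}$ behaviour and the divergence after dividing by $\|A_{C_d}\|-\|A_{\La_n}\|-\m_n\approx a\,n^{-(d+1)}$. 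Once you align the normalisations, your argument goes through and is equivalent to the paper's.
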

\begin{proof}
In the condensation regime, we necessarily have $\m_n\to0$. By reasoning as in Proposition \ref{ifcavozt}, we get 
$\lim_nn^{d_G(C_d)}(\|A_{C_d}\|-\|A_{\La_{n}}\|-\m_{n})=a\in(0,+\infty)$, where $a$ is the unique solution of 
$$
\r-\r_c=2d^2S\bigg(2^d\frac{\langle R_{A_{\bz^d}}(\|A_{C_d}\|)^2\d_{\bf 0},\d_{\bf 0}\rangle}
{\langle R_{A_{\bz^d}}(\|A_{C_d}\|)\d_{\bf 0},\d_{\bf 0}\rangle^2}a\bigg)\|R_{A_{\bz}}(\|A_{C_d}\|)\d_0\|^2\,.
$$
Now with $v_n$ the finite volume PF eigenvector normalised at 1 on $({\bf0},0)$,
\begin{align*}
&\left\langle\left(e^{(\|A_{C_d}\|-\m_n)\idd_{\ell^2(\La_n)}-A_{\La_n}}-\idd_{\ell^2(\La_n)}\right)^{-1}
P_{\ell^2(\La_n)}u,P_{\ell^2(\La_n)}u\right\rangle\\
\geq&\frac{|\langle u,v_n\rangle|^2}
{\|v_n\|^2(\|A_{C_d}\|-\|A_{\La_n}\|-\m_n)}
\approx\frac1{n^{d_{PF}(C_d)}(\|A_{C_d}\|-\|A_{\La_{n}}\|-\m_{n})}\\
=&\frac{n}{n^{d_{G}(C_d)}(\|A_{C_d}\|-\|A_{\La_{n}}\|-\m_{n})}\to+\infty\,.
\end{align*}
\end{proof}
\begin{Rem}
\label{frfinr}
If $d=1,2$ and $\r=\r_c$, \eqref{trrec} holds true because $C_d$ is recurrent, see Proposition \ref{roec}. Conversely, for the case 
$d\geq3$ and $\r=\r_c$, we have by Proposition \ref{ifcavozt},
$$
\lim_nn^{d_G(C_d)}(\|A_{C_d}\|-\|A_{\La_{n}}\|-\m_{n})=+\infty\,,
$$
and a more careful analysis is needed to study the infinite volume behaviour of

$\left\langle\left(e^{(\|A_{C_d}\|-\m_n)\idd_{\ell^2(\La_n)}-A_{\La_n}}-\idd_{\ell^2(\La_n)}\right)^{-1}
P_{\ell^2(\La_n)}u,P_{\ell^2(\La_n)}u\right\rangle$. This again explains that for such inhomogeneous systems, in order to construct locally normal states exhibiting BEC, it is more natural to fix the amount of the condensate instead of the mean density. As $d_G>d_{PF}$, if $d\geq3$ and $\r>\r_c$, \eqref{trrec} in Proposition \ref{ifcaiizt} can be still expressed in the form similar to that of \eqref{liift0110ca} by using the function $\a$ in \eqref{aficas}.
\end{Rem}

\section*{Acknowledgement} 
The author kindly acknowledges Yuri Safarov for some useful suggestions relative to Proposition \ref{astc}.

\end{document}